\renewcommand{\vec}[1]{\bm{#1}}		% deterministic vector or matrix
\DeclareMathOperator*\argmax{arg \, max}		% arg max
\newtheorem{proposition}{Proposition}
\newtheorem{lemma}{Lemma}
\newtheorem{Cor}{Corollary}
\newtheorem{definition}{Definition}
\newtheorem{remark}{Remark}
\newtheorem{fact}{Fact}
\newtheorem{problem}{Problem}
\newtheorem{example}{Example}
\newcommand{\N}{{\mathbb N}}
\newcommand{\ngn}{g_{T_{\|\cdot\|}}}
\newcommand{\bsp}{\overline{{\mathcal{S}}_{\mathcal{P}}}}
\newcommand{\brp}{\overline{\mathcal{R}_{\mathcal{P}}}}
\newcommand{\diagvec}[1]{{\mathrm{diag}(\boldsymbol{#1})}}
\newcommand{\signal}[1]{{\boldsymbol{#1}}}
\newcommand{\real}{{\mathbb R}}
\newcommand{\Natural}{{\mathbb N}}
\newcommand{\refeq}[1]{(\ref{#1})}
\newcommand{\tn}[1]{{(t_{#1})_{\|\cdot\|}}}
\newcommand{\tnorm}{{T_{\|\cdot\|}}}
\begin{document}
	
		\title{Cellular, Cell-less, and Everything in Between: A Unified \\ Framework for Utility Region Analysis in Wireless Networks}
	
	\author{R.~L.~G.~Cavalcante${}^\dagger$,~\IEEEmembership{Senior Member,~IEEE}, T.~Piotrowski${}^\ddagger$, ~\IEEEmembership{Member,~IEEE}, S.~Stanczak${}^{\dagger\diamondsuit}$, \IEEEmembership{Senior Member,~IEEE}\\
		{\footnotesize ${\dagger}$ Fraunhofer Heinrich Hertz Institute, Germany \quad ${}^\ddagger$~Nicolaus Copernicus University, Poland \quad ${\diamondsuit}$~Technical University of Berlin, Germany \vspace{-0.5cm}}
		% <-this % stops a space
		\thanks{The first author gratefully acknowledges Dr. Hiroki Kuroda for his valuable
			suggestions that improved the clarity of the proofs, and Dr. Lorenzo Miretti
			for insightful discussions and for sharing the simulation code used in this
			study. The first and third authors acknowledge the financial support from the Federal Ministry of Research, Technology and Space (BMFTR) in Germany, project xG-RIC (grants 16KIS2429K and 16KIS243), and from the 6G-MIRAI project, which has received funding from the Smart Networks and Services Joint Undertaking (SNS JU) under the European Union’s Horizon Europe research and innovation program under Grant Agreement No 10119236. Views and opinions expressed are, however, those of the author(s) only, and they do not necessarily reflect those of the European Union or the SNS JU (granting authority). Neither the European Union nor the granting authority can be held responsible for them.}% <-this % stops a space
		%	\thanks{Acknowledgment}
	}

	\maketitle

	\begin{abstract}
		We introduce a unified framework for analyzing utility regions of wireless networks, with a focus on signal-to-interference-plus-noise-ratio (SINR) and achievable rate regions. The framework provides valuable insights into interference patterns of modern network architectures, including extremely large MIMO and cell-less networks. A central contribution is a simple characterization of feasible utility regions using the concept of spectral radius of nonlinear mappings. This characterization provides a powerful mathematical tool for wireless system design and analysis. For example, it allows us to generalize existing characterizations of the weak Pareto boundary using compact notation. It also allows us to derive tractable sufficient conditions for the identification of convex utility regions. This property is particularly important because, on the weak Pareto boundary, it guarantees that time sharing (or user grouping) cannot simultaneously improve the utilities of all users. Beyond geometrical insights, these sufficient conditions have two key implications. First, they identify a family of (weighted) sum-rate maximization problems that are inherently convex, thus paving the way for the development of efficient, provably optimal solvers for this family. Second, they provide justification for formulating sum-rate maximization problems directly in terms of achievable rates, rather than SINR levels. Our theoretical insights also motivate an alternative to the concept of favorable propagation in the massive MIMO literature -- one that explicitly accounts for self-interference and the beamforming strategy.
	\end{abstract}
	 
	\begin{IEEEkeywords}
		Cell-less/massive MIMO, interference management, achievable SINR and rate regions
	\end{IEEEkeywords}

	\section{Introduction}
	In wireless networks with users competing for shared resources, system designers must balance fairness with overall network efficiency. To this end, the performance of each user is evaluated via a utility function,  and this function guides resource allocation toward achieving this balance. Two common design criteria for selecting operating points are (weighted) max-min fairness \cite{nuzman07,cavalcante2019,tan2014wireless,miretti2022closed,chafaa2025}\cite[Ch.~3, Ch.~7]{demir2021}\cite[Ch.5.3.2]{marzetta16}\cite[Theorem 7.1]{massivemimobook} and (weighted) sum  utility maximization \cite{tan2011maximizing,tan2011nonnegative,zheng2013maximizing,cheng2016optimal}, and these criteria often use the signal-to-interference-plus-noise ratio (SINR) or inner bounds of the Shannon capacity as the utility function. 
	
	One of the main advantages of the max-min fairness criterion  is that it often leads to optimization problems that, although nonconvex in general, can be solved with efficient fixed point algorithms \cite{nuzman07}. In addition, this criterion is particularly suitable for networks designed to ensure uniform quality of service across all users, as in cell-less or user-centric networks. However, user utilities can often be further improved through time sharing (link scheduling/user grouping) \cite[Sect.~5.2.1]{slawomir09} because, for mathematical tractability, the utilities are typically based on capacity-region bounds or surrogate expressions based on the SINR, making the achievable utility region potentially nonconvex. While time sharing allows for greater flexibility in improving utilities, it also introduces a challenging combinatorial aspect to resource allocation. This aspect undermines one of the primary benefits of the max-min fairness criterion without time sharing: the availability of efficient algorithms able to compute global optima. In contrast, operating points that maximize the sum utility (typically the achievable rates) cannot be improved via time sharing. However, sum-rate maximization is NP-hard in general \cite{luo2008dynamic}, so it is often addressed with fast heuristics without any guarantees of obtaining global optima \cite{weeraddana2012,miretti2024sum}. 
	
	As discussed in the next subsection, the above challenges are eliminated if the feasible utility region is convex. In this scenario, time sharing is unnecessary in max-min utility optimization problems, and fast iterative algorithms that provably converge to global optima can potentially be devised for sum-rate maximization problems. However, to date, existing methods for identifying convex utility regions or tractable sum-rate maximization problems rely on assumptions that are difficult to satisfy in practice. They include equal interference-plus-noise for all users in the network \cite[Sect.~V.A]{cheng2016optimal}, symmetric interference patterns \cite[Ch.~5.4.4]{slawomir09}, no interference, or low transmit power \cite{zheng2013maximizing}\cite[Sect.~IV.B]{tan2011maximizing}. Consequently, there is a
	need to identify weaker, more practical assumptions.
	
	\vspace{-.3cm}
	
	\subsection{Time sharing in resource allocation}
	\label{sect.overview}

	To illustrate how the geometry of the utility region governs the need for time sharing in resource allocation, consider the uplink of a network with $N\in\Natural$ users represented by the set $\mathcal{N}=\{1,\ldots,N\}$, and let $\signal{p}=(p_1,\ldots,p_N)\in\real_{+}^N:=[0,~\infty[^N$ denote the transmit power vector, where $p_n$ represents the transmit power of user $n$. For each user $n\in\mathcal{N}$, we assign a utility function $U_n:\real_+^N\to\real_+$ that maps a given power vector $\signal{p}\in\real_+^N$ to the quality-of-service experienced by the user. %Typical utility functions include the SINR and lower bounds of the capacity of the link of each user $k$.  
	\footnote{These utilities take only the power vector as an argument, but other parameters to be optimized, such as the choice of beamformers and the user-base station assignment, can be implicitly considered in the definition of $U_n$. For concrete examples, see Sect.~\ref{sect.model} and \cite{boche2010unifying,boche2008,martin11,piotrowski2022,cavalcante2023,miretti2025two}.} In many wireless problems, which include cell-less systems, massive MIMO, and traditional cellular systems as particular instances,  the utility functions, such as the SINR or the achievable rates, can be unified as follows \cite{boche2010unifying,cavalcante2023}: 
	\begin{align}
		\label{eq.utility}
		(\forall n\in\mathcal{N})(\forall\signal{p}\in\real_+^N)~U_n(\signal{p})=p_n/t_n(\signal{p}),
	\end{align}
	where $(t_n:\real_+^N\to\real_{++})_{n\in\mathcal{N}}$ are standard interference functions, in the sense defined in \cite{yates95}, and $\real_{++}:=]0,\infty[$.
	In real systems, the power vector is typically constrained to a set of the type $\mathcal{P}:=\{\signal{p}\in\real_+^N\mid \|\signal{p}\|\le p_\mathrm{max}\}$, where $\|\cdot\|$ is a monotone norm (as defined in Sect.~\ref{sect.notation}) and $p_\mathrm{max}>0$ is the maximum transmit power. In this case,
	given a power vector $\signal{p}^\star\in\mathcal{P}$, we say that the network operates at an efficient operating point, in the weak Pareto sense, if there is no other power vector in $\mathcal{P}$ that strictly increases the utility of every user in the system; i.e., the set $\{\signal{p}\in\mathcal{P}\mid (\forall n\in\mathcal{N})~U_n(\signal{p})>U_n(\signal{p}^\star)\}$ is empty.  With utilities having the structure in \refeq{eq.utility}, efficient operating points can be easily parametrized in terms of the argument $\signal{p}\in\mathcal{P}$ of the utility functions. More precisely, under the constraint $\signal{p}\in\mathcal{P}$, the network with utilities $(U_1(\signal{p}),\ldots,U_N(\signal{p}))$ operates at an efficient point, in the weak Pareto sense, if and only if $\|\signal{p}\|=p_\mathrm{max}$ \cite{cavalcante2023}; i.e., the weak Pareto boundary of the utility region is given by $\{(U_1(\signal{p}),\ldots,U_N(\signal{p}))\in\real^N_+\mid \|\signal{p}\|=p_\mathrm{\max}\text{ and }\signal{p}\in\real^N_+\}$.
	
	Efficiency, in the sense defined above, is not enough from a system design perspective because the selected power vector also has to ensure fairness among users. In particular, operating points guaranteeing both efficiency and fairness can be obtained with the following optimization problem, which can be solved with simple fixed point algorithms such as those described in Fact~\ref{fact.cond_eig}(ii) in the Supplemental Material \cite{nuzman07,cavalcante2019}:
	\begin{align}
		\label{eq.maxmin}
		\begin{array}{rl}
			\mathrm{maximize}_{\signal{p}\in\mathcal{P}}&\min\limits_{n\in\mathcal{N}}\omega_n~U_n(\signal{p}),
		\end{array}
	\end{align} 
	where $(\omega_1,\ldots,\omega_N)\in\real_{++}^N$ are weights (priorities) assigned to users, and these weights establish the notion of fairness among users. For example, if $\omega_n=1$ for every $n\in\mathcal{N}$, we obtain the standard concept of max-min fairness, which has seen widespread application in cell-less networks in recent years \cite[Ch.~3, Ch.~7]{demir2021}\cite{miretti2025two}.\footnote{ Another common optimization criterion is sum-rate maximization, which can also be characterized as a solution to a problem of the form in \refeq{eq.maxmin} for appropriately chosen weights. However, finding weights leading to a power vector maximizing the sum rate is itself a challenging problem if the achievable rate region is nonconvex.}

	The solution to \refeq{eq.maxmin} ensures that no other fixed power allocation strategy strictly increases the utility of all users. However, with time-invariant utilities, the average utility of every user may be improved by allowing time sharing; i.e., scheduling different groups of users (or operating points) over time. Crucially, whether time sharing can increase the utility of every user relative to the fixed power allocation obtained from \refeq{eq.maxmin} -- regardless of the particular fairness weights adopted in \refeq{eq.maxmin} -- is governed by the geometry of the achievable utility region $\mathcal{S}_\mathcal{P}:=\{(U_1(\signal{p}),\ldots,U_N(\signal{p})) \in\real^N_+\mid \signal{p}\in\mathcal{P}\}$. To illustrate this link, consider $L \in \Natural$ given power vectors $\signal{p}_1, \ldots, \signal{p}_L$ from the set $\mathcal{P}$. Let $\alpha_l \geq 0$ denote the fraction of time allocated to the $l$th power vector, with $\sum_{l=1}^L \alpha_l = 1$. The resulting time-average utility $\bar{U}_n$ of user $n \in \mathcal{N}$ is then $\bar{U}_n = \sum_{l=1}^L \alpha_l\, U_n(\signal{p}_l).$ Therefore, the time-average utility vector $(\bar{U}_1,\ldots,\bar{U}_N)$ is a convex combination of points in $\mathcal{S}_\mathcal{P}$. If $\mathcal{S}_\mathcal{P}$ is convex, then  $(\bar{U}_1,\ldots,\bar{U}_N)\in\mathcal{S}_\mathcal{P}$, so there exists a single power vector $\signal{p}\in\mathcal{P}$ such that $(U_1(\signal{p}),\ldots, U_N(\signal{p}))=(\bar{U}_1,\ldots,\bar{U}_N)$; i.e., time sharing does not enlarge the achievable utility region. In contrast, if $\mathcal{S}_{\mathcal{P}}$ is nonconvex, the time-average utility vector $(\bar{U}_1,\ldots,\bar{U}_N)$ may lie outside $\mathcal{S}_\mathcal{P}$, meaning that this utility vector cannot be achieved by any single power vector $\signal{p}\in\mathcal{P}$.
	
	For concreteness, set the standard $l_\infty$-norm, denoted by \linebreak[4] $\|\cdot\|_\infty$, as the monotone norm in the set $\mathcal{P}$ of power constraints, and consider the hypothetical scenario with two users depicted in Fig.~\ref{fig.maxmin}, which shows a feasible utility (e.g., rate) region $\mathcal{S}_\mathcal{P}$. Point $C$ in the figure is the traditional max-min fair operating point by assuming that only adjustments to the power vector are possible. By allowing time sharing (link scheduling), we can alternate between the power vectors $\signal{p}_1=(p_\mathrm{max},~0)\in\mathcal{P}$ and $\signal{p}_2=(0,~p_\mathrm{max})\in\mathcal{P}$, using each vector half of the time. With this scheme, the users achieve, on average, the max-min utility corresponding to Point D in Fig.~\ref{fig.maxmin}, where the average utility of user $n\in\{1,2\}$ is $\frac{1}{2}U_n(\signal{p}_1)+\frac{1}{2}U_n(\signal{p}_2)$. As a result, the average utility of both users strictly increases, even though we only use power vectors from the set $\mathcal{P}$. In contrast, if the utility region were convex in this example, then, for the given weights defining the fairness criterion, no weighted max-min fair utility point obtainable via time sharing would strictly improve upon the utilities achieved by the solution to \refeq{eq.maxmin}. Therefore, convexity of the utility region serves as a certificate that time sharing offers no additional benefit.

\begin{figure}
	\begin{center}
		\includegraphics[width=0.6\columnwidth]{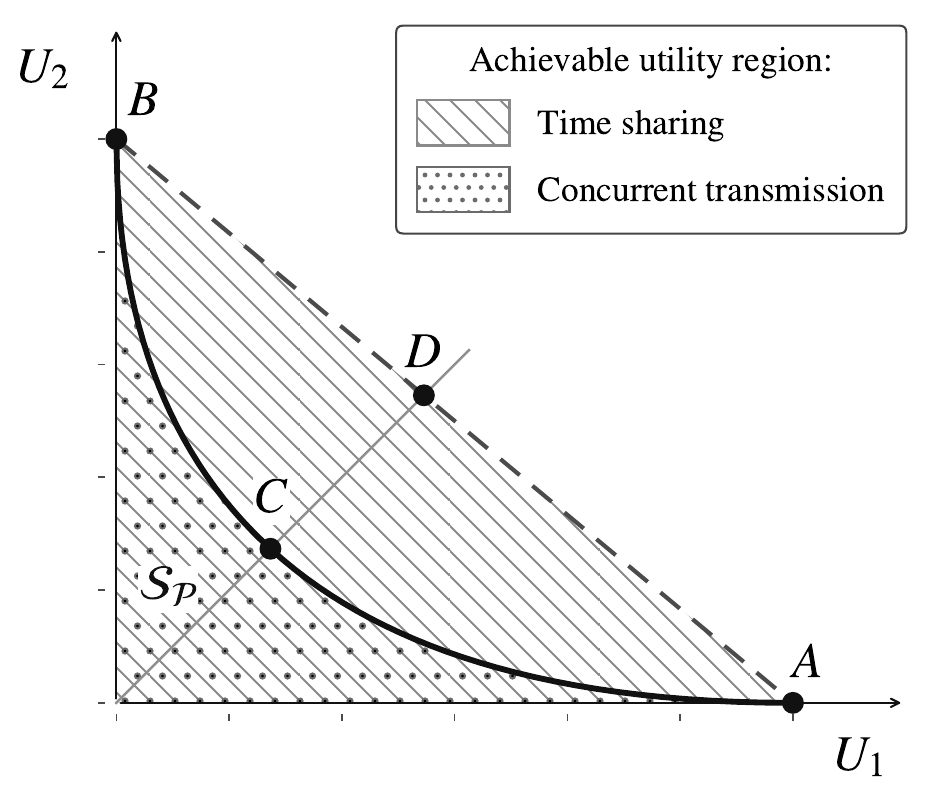}
		\caption{\footnotesize Utility for two users. Points A and B represent the utilities achieved with the power vectors $\signal{p}_1=(p_\mathrm{max},0)$ and $\signal{p}_2=(0,~p_\mathrm{max})$, respectively. Point C indicates the max-min fair utility point achieved with concurrent transmission, and Point D indicates the max-min fair utility point achieved through time sharing. Any utility below the line segment connecting points A and B (the convex hull of $\mathcal{S}_\mathcal{P}$) is achievable through time sharing.  
		}
		\vspace{-0.5cm}
		\label{fig.maxmin}
	\end{center}
	
\end{figure}

\subsection{Main contributions}

The first objective of this study is to link solutions to max-min problems to nonlinear eigenpairs of general interference mappings (see Sect.~\ref{sect.notation} for the definitions), thereby providing a concise description of the solutions (Remark~\ref{remark.maxminproblem}). The key step is a new result establishing a correspondence between conditional eigenpairs of standard interference mappings with eigenpairs of related general interference mappings (Proposition~\ref{proposition.cond_eig}). Building on this correspondence, we derive compact representations of the feasible SINR and rate regions of wireless networks, both with and without power constraints, in terms of the concept of nonlinear spectral radius -- see \refeq{eq.s_no_constraints}, \refeq{eq.R}, \refeq{eq.sp}, and \refeq{eq.rp}. A central implication of this characterization is that it allows us to derive sufficient conditions -- based on inverse Z-matrices -- that guarantee convexity of the SINR region, and, consequently, convexity of the rate region (Corollary~\ref{cor.setr} and Corollary~\ref{cor.withpc}). As discussed above, convexity has two important practical consequences. First, when these regions are convex, the performance attained by solutions to max-min problems cannot be improved via time sharing. Second, as shown in Sect.~\ref{subsect.sum}, sum-rate maximization can be formulated as a tractable convex optimization problem. Our framework covers a broad class of resource-allocation models in a unified manner, including those in \cite[Ch.~3, Ch.~7]{demir2021} for cell-less networks, \cite[Ch.~5.3.2]{marzetta16} and \cite[Theorem~7.1]{massivemimobook} for massive MIMO systems, and \cite[Ch.~5.6.3]{slawomir09} for traditional wireless networks, among others.

As further consequences of our results, we obtain a new characterization of the  weak Pareto boundary of the SINR and rate regions in terms of the spectral radius of nonlinear mappings (Proposition~\ref{proposition.wp}), resolve the conjecture in \cite[Sect.~5.4.4]{slawomir09}\cite{stanczak2007convexity} (Sect.~\ref{subsect.conjecture}), and introduce a new concept of channel compatibility among users that overcomes some limitations of the standard definition of favorable propagation in the massive MIMO literature (Sect.~\ref{sect.Zcompat}).% and it offers insights for future developments in scheduling algorithms. 

\subsection{Relation to existing studies}
{%\color{blue}
To establish our contributions, we draw on fixed point theory in Thompson and Hilbert-projective metric spaces, linear algebra, and nonlinear analysis, using key results from  \cite{friedland81,nussbaum1986convexity,johnson2011,nuzman07,cavalcante2019,yates95}. Below, we list the main differences between our contributions and prior studies on wireless systems:

\begin{itemize}
\item Convexity of the SINR region for systems with affine interference functions has been studied in \cite[Ch.~5]{slawomir09}\cite{stanczak2007convexity}. In those studies, convexity is only established if the interference matrices described later in Sect.~\ref{sect.model} are  symmetric, which is an assumption unlikely to hold in practice. Our analysis lifts the symmetry assumption and incorporates power constraints. Furthermore, many of our results extend to nonlinear systems, which is essential for modeling networks employing optimal beamforming strategies.

\item Convexity of rate regions in massive MIMO systems without symmetry assumptions has been investigated in \cite{chen2018achievable}, but, unlike our results, the analysis is restricted to small systems with two users and simple power constraints. 

\item The results that follow are also related to those in \cite{tan2011maximizing,tan2011nonnegative,friedland2008maximizing,zheng2013maximizing,tsiaflakis2008optimality}, which are studies focusing on the sum-rate maximization problem using interference models that do not take into account self-interference or beamforming gain uncertainty, which has become a critical aspect in recent studies on extremely large MIMO arrays and distributed MIMO (and, in particular, cell-less) systems. Instead, our focus is on establishing sufficient conditions for the convexity of the entire SINR and rate regions of these modern interference models. By doing so, we identify a particular class of problems where the sum-rate maximization can be posed as a convex optimization problem. Notably, we do not impose symmetry on the interference patterns \cite[Sect.~V.A]{cheng2016optimal} or assume networks operating in the interference-free or in the low signal-to-noise ratio (SNR) regime \cite[Sect.~IV.B]{tan2011maximizing}. Our sufficient conditions for convexity of the SINR region (which are related to the concept of inverse Z-matrices, as also uncovered in \cite{tsiaflakis2008optimality} for interference models in digital subscriber lines) can identify convexity in scenarios where interference in some links is strong. Furthermore, unlike \cite{tan2011maximizing,tan2011nonnegative,friedland2008maximizing,zheng2013maximizing,tsiaflakis2008optimality}, we can cope with power constraints that are not necessarily convex polytopes, and we use the spectral radius of \emph{nonlinear} mappings to study interference of modern network models. This approach enables the presentation of the core ideas with compact and straightforward notation, significantly enhancing clarity and accessibility after the technical foundations have been established, even if we consider complex beamforming strategies. See, for example, Remark~\ref{remark.maxminproblem} in Sect.~\ref{sect.convexity} and  Remark~\ref{remark.simplicity} in Sect.~\ref{sect.withpower}.
\end{itemize}
}

\section{Mathematical framework}
\label{sect.preliminaries}

In this section, we establish notation and derive results that reveal the shape of the feasible SINR and rate regions for some wireless systems as corollaries. Owing to the mathematical nature of our study, Sect.~\ref{sect.notation} focuses on unifying terminology across various mathematical and wireless research domains, as consistent vocabulary is essential for the rigorous presentation of the main results. For completeness, the Supplemental Material lists existing results that are crucial for our proofs but not necessary for understanding the primary contributions of this study. We present the main results of Sect.~\ref{sect.preliminaries} in a dedicated subsection (Sect.~\ref{sect.convexity}). Readers interested in the implications of these findings to wireless systems may proceed directly to Sect.~\ref{sect.model} for a more application-oriented perspective.

\subsection{Notation and standard definitions}
\label{sect.notation}

We use the convention that the set $\Natural$ of natural numbers does not include zero. The sets of nonnegative and positive reals are denoted by, respectively, $\real_{+}:=~[0,\infty[$ and $\real_{++}:=~]0,\infty[$. Inequalities involving vectors should be understood coordinatewise. For example, given $(\signal{x}=(x_1,\ldots,x_N),\signal{y}=(y_1,\ldots,y_N))\in\real_+^N\times\real_+^N$, with $N\in\Natural$, we write $\signal{x}<\signal{y}$ if and only if $x_n < y_n$ for each $n\in\{1,\ldots,N\}$. A norm $\|\cdot\|$ in $\real^N$ is \emph{monotone} if $(\forall\signal{x}\in\real_+^N)(\forall\signal{y}\in\real_+^N)~\signal{x}\le\signal{y}\Rightarrow \|\signal{x}\|\le\|\signal{y}\|$. 	A norm $\|\cdot\|$ in $\real^N$ is \emph{polyhedral} if there exist $K\in\Natural$ vectors $\signal{a}_1,\ldots,\signal{a}_K$ in $\real^N_{+}\backslash\{\signal{0}\}$ such that $(\forall\signal{x}\in\real^N) \|\signal{x}\|=\max_{n\in\{1,\ldots,K\}}\signal{a}_n^t|\signal{x}|$, where $|\signal{x}|\in\real^N_+$ denotes the vector constructed with component-wise absolute values of $\signal{x}\in\real^N$ (e.g., the standard $l_1$ and $l_\infty$ norms are polyhedral). A sequence $(\signal{x}_n)_{n\in\Natural}$ of vectors in $\real^N$ is said to converge to $\signal{x}^\star$ if $\lim_{n\to\infty}\|\signal{x}_n-\signal{x}^\star\|=0$ for some norm $\|\cdot\|$ (and, hence, for every norm on $\real^N$ because of the equivalence of norms in finite dimensional spaces).

Let ${S}\subset \real_+^N$ be a nonempty set representing the utilities achievable by users in a network. In the context of the applications under consideration, a vector $\signal{x}\in{S}$ is said to lie on the weak Pareto boundary of ${S}$ if the set $\{\signal{y}\in\real_{++}^N\mid \signal{x}+\signal{y}\in{S}\}$ is empty; i.e., we cannot increase all components (utilities) of the vector $\signal{x}$ while staying within the set $\mathcal{S}$. A set ${S}\subset\real^N$ is convex if $(\forall\signal{x}\in\mathcal{S})(\forall\signal{y}\in{S})(\forall\alpha\in~]0,1[)~ \alpha\signal{x}+(1-\alpha)\signal{y}\in{S},$ and it is called downward comprehensive in $\real_{+}^N$ if $(\forall \signal{x}\in S)(\forall \signal{y}\in \real_+^N) \signal{y}\le\signal{x}\Rightarrow \signal{y}\in S$. The closure of a set $S\subset \real^N$ is denoted by $\overline{S}$.

A diagonal matrix with positive components on the diagonal is a \emph{positive diagonal matrix}, and the set of nonnegative and positive diagonal matrices of dimension $N\times N$ are denoted by, respectively, $\mathcal{D}_+^{N\times N}$ and $\mathcal{D}_{++}^{N\times N}$. By \linebreak[4] $\mathrm{diag}(x_1,\ldots,x_N)=\mathrm{diag}(\signal{x})$, where $\signal{x}=(x_1,\ldots,x_N)\in\real^N$, we denote an $\real^{N\times N}$ diagonal matrix having the scalars $x_1,\ldots,x_N$ on the diagonal entries. We write the transpose of a matrix $\signal{M}$ as $\signal{M}^t$, and the matrix exponential as $e^\signal{M}$. Any matrix $\signal{M}\in\real^{N\times N}$ can be interpreted as a linear mapping $\real^N\to\real^N: \signal{x}\mapsto\signal{Mx}$ (standard basis), so we write $\signal{M}T$ to denote the mapping $\mathcal{X}\to\real^N:\signal{x}\mapsto (\signal{M} \circ T)(\signal{x})=\signal{M}(T(\signal{x}))$, where $T:\mathcal{X}\to\mathcal{X}$ and $\mathcal{X}\subset\real^N$. A matrix $\signal{M}\in\real^{N\times N}$ is called a \emph{Z-matrix} if it has nonpositive off-diagonal components \cite{nussbaum1986convexity}\cite{johnson2011}. If a matrix $\signal{M}\in\real^{N\times N}$ has an inverse $\signal{M}^{-1}$ that is a Z-matrix, then $\signal{M}$ is said to be an \emph{inverse Z-matrix}. A matrix $\signal{M}\in\real^{N\times N}$ is called an M-matrix if it can be expressed as $\signal{M}=\alpha\signal{I}-\signal{P}$ for some $\signal{P}\in\real^{N\times N}_+$ and $\alpha>\rho$, where $\rho$ is the spectral radius of $\signal{P}$, in the conventional sense in linear algebra. If $\signal{M}^{-1}$ is an M-matrix, then $\signal{M}$ is called an \emph{inverse M-matrix.} %We refer the readers to the Supplemental Material for well-known properties of Z-matrices and M-matrices that are used extensively in this study.

For convenience, given $(N, M)\in\Natural\times\Natural$ and two sets $\mathcal{X}\subset\real^N$ and $\mathcal{Y}\subset\real^M$, we say that the mapping $f:\mathcal{X}\to\mathcal{Y}$ is a function only if $M=1$ (NOTE: if $M=1$, $f$ is both a function and a mapping). A mapping $f:\mathcal{X}\subset\real^N\to\mathcal{Y}\subset\real^M$ is \emph{continuous} if, for every sequence $(\signal{x}_n)_{n\in\Natural}$ in $\mathcal{X}$ converging to $\signal{x}^\star\in\mathcal{X}$, the sequence $(f(\signal{x}_n))_{n\in\Natural}$ in $\mathcal{Y}$ converges to $f(\signal{x}^\star)\in\mathcal{Y}$. Let ${S}\subset\real^N$ be a nonempty convex set. A function  $f:{S}\to\real$ is \emph{convex} if $(\forall\alpha\in~]0,1[)(\forall\signal{x}\in{S})(\forall\signal{y}\in{S})~f(\alpha \signal{x}+(1-\alpha)\signal{y})\le \alpha f(\signal{x})+(1-\alpha)f(\signal{y}),$ and recall that its level set $\{\signal{x}\in{S}~|~f(\signal{x})\le \beta\}$  is a (possibly empty) convex set for every $\beta\in\real$. A function $f:\real_+^N\to\real_{+}$ is \emph{monotonic} or \emph{monotone} if $(\forall\signal{x}\in\real_+^N)(\forall\signal{y}\in\real_+^N)~\signal{x}\le\signal{y}\Rightarrow f(\signal{x})\le f(\signal{y})$; \emph{scalable} if $(\forall\signal{x}\in\real_+^N)(\forall\alpha>1)~ f(\alpha\signal{x}) < \alpha f(\signal{x})$; and \emph{positively homogeneous} if $(\forall\signal{x}\in\real_+^N)(\forall\alpha>0)~ f(\alpha\signal{x}) = \alpha f(\signal{x})$.  Following standard terminology in the wireless literature, we introduce below two classes of mappings that have played a major role in network optimization. 

\begin{definition}
	\label{def.sif}
	(Standard and general interference functions) A \emph{continuous} function $t:\real_+^N\to\real_{++}$ is called a \emph{standard interference function} if it is both monotone and scalable \cite{yates95}; whereas a \emph{continuous} function $t:\real_+^N\to\real_{+}$ is called a \emph{general interference function} if it is both monotone and positively homogeneous \cite{boche2008,martin11}. Likewise, a mapping $T:\real_+^N\to\real_{+}^N:\signal{x}\mapsto(t_1(\signal{x}),\ldots,t_N(\signal{x}))$ is called a standard interference mapping (respectively, general interference mapping) if the coordinate functions $(t_n)_{n\in\{1,\ldots,N\}}$ are standard interference functions (respectively, general interference functions). 
\end{definition}

We recall from \cite{burbanks2003extension} that monotonicity and either scalability or homogeneity on the cone $\real_{+}^N$ implies continuity on $\real_{++}^N$, but not necessarily on the boundary of the domain $\real_{+}^N$. For this reason, we assume continuity everywhere in the above definition to avoid technical digressions. Furthermore, we have lifted some of original restrictions related to the domain and codomain of general interference functions. For example, unlike the original definition in \cite{boche2008,martin11}, the function $f:\real_+\to\real_+:x\mapsto 0$ is a general interference function according to our definition. It is well known that monotonicity and scalability imply positivity of a mapping in the entire nonnegative cone $\real_+^N$, so we often use $\real_{++}^N$ to denote the codomain of a standard interference mapping to emphasize this fact. From an application perspective, an important proper subclass of standard interference mappings is the class of \emph{continuous positive concave mappings} (see~\cite[Proposition~1]{cavalcante2016}); i.e., continuous mappings $T:\real_{+}^N\to\real_{++}^N$ such that $(\forall\signal{x}\in\real_+^N)(\forall\signal{y}\in\real_+^N)(\forall\alpha\in~]0,1[)~ T(\alpha\signal{x}+(1-\alpha)\signal{y})\ge \alpha T(\signal{x})+(1-\alpha)T(\signal{y}).$

The set of fixed points of a mapping $T:\real_+^N\to\real_+^N$ is denoted by $\mathrm{Fix}(T):=\{\signal{x}\in\real^N_+\mid T(\signal{x})=\signal{x}\}$, and the mapping defined below plays a crucial role in the analysis of the fixed point set of standard interference mappings (see, for example, Fact~\ref{fact.tinf} in the Supplemental Material):

\begin{definition}
	\label{def.am}
	(\cite{cavalcante2019,oshime92} Asymptotic mappings) We associate to a standard interference mapping $T:\real^N_+\to\real_{++}^N$ a general interference mapping $T_\infty:\real_+^N\to\real_+^N$, called asymptotic mapping, defined for every $\signal{x}\in\real^N_+$ by \linebreak[4] $T_\infty(\signal{x}):=\lim_{h\to\infty}(1/h)T(h\signal{x})$, and we recall that this limit always exists.
\end{definition}

 If $(\signal{x},\lambda)\in\real_{+}^N\backslash\{\signal{0}\}\times\real_+$ satisfies $T(\signal{x})=\lambda\signal{x}$ for a mapping $T:\real^N_+\to\real^N_+$, we say that $\signal{x}$ is an (nonlinear) eigenvector associated with the (nonlinear) eigenvalue $\lambda$ of $T$. If $T:\real_+^N\to\real_{++}^N$ is a standard interference mapping and $\|\cdot\|$ is a monotone norm on $\real^N$, then there exists a unique eigenpair $(\signal{x}, \lambda)\in\real_{++}^N\times\real_{++}$ satisfying $T(\signal{x})=\lambda \signal{x}$ and $\|\signal{x}\|=1$ \cite{nuzman07} (Fact 6(i) in the Supplemental Material). In this case, we call $\signal{x}$ and $\lambda$, respectively, the (nonlinear) conditional eigenvector and the (nonlinear) conditional eigenvalue associated with $T$ and $\|\cdot\|$.  If the monotone norm is clear from the context, we simply call $\signal{x}$ and $\lambda$ the conditional eigenvector and eigenvalue of $T$. The spectral radius of general interference mappings is defined as the supremum of all eigenvalues: 

\begin{definition}
	\label{def.nl_radius}(\cite[Definition~3.2]{nussbaum1986convexity} Spectral radius) Let \linebreak[4] $G:\real^N_+\to\real^N_+$ be a general interference mapping. The spectral radius of $G$ is defined to be 
	\begin{align}
		\label{eq.nl_radius}
		\rho(G) := \sup\{\lambda\in\real_+~|~(\exists \signal{x}\in\real_+^N\backslash\{\signal{0}\})~ G(\signal{x})=\lambda\signal{x}\}.
	\end{align}
	(NOTE: There always exists an eigenvector $\signal{x}^\star$ in $\real_+^{N}\backslash\{\signal{0}\}$ such that $G(\signal{x}^\star)=\rho(G)\signal{x}^\star$; i.e., the supremum in \refeq{eq.nl_radius} is attained  \label{fact.achieved} \cite[Proposition~5.3.2(ii) and Corollary~5.4.2]{lem13}.)
\end{definition}

We recall that there exist simple numerical schemes able to compute the spectral radius \cite[Remark~3]{cavalcante2019}\cite{krause01}.  In particular,  if the general interference mapping is linear, then it can be written as $\signal{x}\mapsto \signal{Mx}$ for some nonnegative matrix $\signal{M}\in\real_{+}^{N\times N}$, and we can use classical results from Perron-Frobenius theory to confirm that $\rho(\signal{M})$ in Definition~\ref{def.nl_radius}, with $\signal{M}$ interpreted as a linear mapping, coincides with the usual notion of the spectral radius of the matrix $\signal{M}$ in linear algebra.

Having established the key terminology and concepts, we can proceed with our main contributions.

\subsection{Relating the conditional eigenvalue of standard interference mappings to the spectral radius of general interference mappings}
\label{sect.convexity}

To motivate the results in this section, consider the following optimization problem:

\begin{example}\label{example.maxmin} Let $K\in\Natural$, $N\in\Natural$, $x_\mathrm{max}>0$, $[\signal{a}_1,\ldots,\signal{a}_K]=\signal{A}\in\real_+^{N\times K}$, $[b_1,\ldots,b_N]^t=\signal{b}\in\real_{++}^N$, $[\signal{c}_1,\ldots,\signal{c}_N]=\signal{C}\in\real_+^{N\times N}$, and $[\sigma_1,\ldots,\sigma_N]^t=\signal{\sigma}\in\real_{++}^N$ be given problem parameters. For notational simplicity, define $\signal{M}:=\mathrm{diag}(\signal{b})^{-1}\signal{C}^t$, $\signal{u}:=\mathrm{diag}(\signal{b})^{-1}\signal{\sigma}$, and $(\forall n\in\{1,\ldots,K\})~\signal{M}_n:=\signal{M}+(1/x_\mathrm{max})\signal{u}\signal{a}_n^t$.	Assume that the function $\real^N\to\real_+:\signal{x}\mapsto \dfrac{1}{x_\mathrm{max}}\max_{n\in\{1,\ldots,K\}}\signal{a}_n^t|\signal{x}|=:\|\signal{x}\|$ is a polyhedral monotone norm, and consider the optimization problem:
\begin{align}
	\label{eq.maxmin_simple}
	\begin{array}{rl}
		\text{maximize}_{\signal{x}=(x_1,\ldots,x_N)\in\real^N} & \min_{n\in\{1,\ldots,N\}} \dfrac{b_n x_n}{\signal{c}_n^t\signal{x}+\sigma_n}\\
		\text{s.t.} & (\forall n\in\{1,\ldots,K\})~ \signal{a}_n^t\signal{x}\le x_\mathrm{max} \\
		& \signal{x}\ge\signal{0}.	
	\end{array}
\end{align}
 Standard results in the literature \cite[Ch.~5.6.3]{slawomir09}\cite[Proposition~4]{miretti2022closed} show that a (not necessarily unique) solution $\signal{x}^\star\in\real_{++}^N$ to Problem~\refeq{eq.maxmin_simple} is the eigenvector (normalized to $\|\signal{x}^\star\| = 1$) associated with the largest eigenvalue of any matrix $\signal{M}_{n^\star}$ with index $n^\star\in\{1,\ldots,K\}$ satisfying $\rho(\signal{M}_{n^\star})=1/t^\star$, where $t^\star:=1/(\max_{n\in\{1,\ldots,K\}}\rho(\signal{M}_n))$ is the optimal objective of \refeq{eq.maxmin_simple}.
\end{example}

{%\color{blue} 
	Readers familiar with wireless systems may recognize Problem~\refeq{eq.maxmin_simple} as a max-min SINR optimization problem. Formulations in  \cite{miretti2022closed}\cite[Ch.~5.6.3]{slawomir09}\cite[Ch.~5.3.2]{marzetta16}\cite[Theorem~7.1]{massivemimobook} arise as special cases. To illustrate, consider the uplink of a system with $N\in\Natural$ users and a single access point. Then the optimization variable $\signal{x}$ represents the transmit power vector; $(b_n)_{n\in\{1,\ldots,N\}}$ are the effective channel gains; $(\signal{c}_n)_{n\in\{1,\ldots,N\}}$ are vectors collecting the effective interference channels; $(\sigma_n)_{n\in\{1,\ldots,N\}}$ model the receiver noise levels; and $(\signal{a}_n)_{n\in\{1,\ldots,K\}}$ encode power constraints. For instance, with $K=N$, choosing $(\signal{a}_n)_{n\in\{1,\ldots,N\}}$ as the standard basis vectors yields the per-user power constraint $\|\signal{x}\|_\infty \le x_{\max}$, where $x_\mathrm{max}$ is the maximum transmit power. We postpone a detailed discussion in more general settings to the next sections.} 

From a high-level perspective, the characterization of the solution to Problem~\refeq{eq.maxmin_simple} provided in Example~\ref{example.maxmin} is somewhat cumbersome: it requires comparing the spectral radii of an entire family of matrices. As a first concrete application of the main results of this section, we show that the same solution can instead be expressed more naturally through the eigenpair of a possibly nonlinear general interference mapping. In particular, we later establish the following alternative characterization:

\begin{remark}
	\label{remark.maxminproblem}
	Define a (possibly nonlinear) general interference mapping by $T_{\|\cdot\|}:\real_+^N\to\real_{+}^N:\signal{x}\mapsto \signal{Mx}+\signal{u}\|\signal{x}\|$, where $\signal{M}$, $\signal{u}$, $x_\mathrm{max}$, and $\|\cdot\|$ are as in Example~\ref{example.maxmin}. A solution to Problem~\refeq{eq.maxmin_simple} is the eigenvector $\signal{x}^\star\in\real_{++}^N$ of $\tnorm$, normalized to satisfy $\|\signal{x}^\star\|=1$, associated with the nonlinear spectral radius $\rho(T_{\|\cdot\|})>0$, and $1/\rho(T_{\|\cdot\|})$ is the optimal objective of Problem~\refeq{eq.maxmin_simple}. 
\end{remark}

The above nonlinear viewpoint has several advantages. It not only allows us a compact description of solutions to many problems in the wireless literature, but it also enables us to extend the matrix-based results of Example \ref{example.maxmin} to genuinely nonlinear scenarios, such as those encountered in wireless systems employing optimal beamforming strategies \cite{miretti2025two}. Furthermore, it enables the use of existing numerical methods for computing nonlinear eigenvectors to solve variants of Problem~\refeq{eq.maxmin_simple}, which may prove valuable in future research. Most importantly for the developments that follow, the framework we develop provides the mathematical machinery necessary to analyze the geometry of utility regions in wireless networks.

To derive the results mentioned in Remark~\ref{remark.maxminproblem}, we begin by recalling a standard fact from the wireless literature: the pointwise minimum of a finite family of standard interference functions results in a standard interference function \cite{yates95}. However, this closure property does not generally extend to the pointwise infimum of an infinite family. A simple counterexample is given by the constant functions $f_{c}:\real_+^N\to\real_{++}:\signal{x}\mapsto c$ parametrized by $c>0$. Each function $f_c$ with $c>0$ is a standard interference function, but the pointwise infimum $(\forall\signal{x}\in\real_+^N)~f(\signal{x}):=\inf_{c>0}f_c(\signal{x})=0$ fails to satisfy the scalability property. Nevertheless, in wireless systems, the families of functions typically possess additional structure -- such as that described in the next lemma -- ensuring that the pointwise infimum is again a standard interference function.

\begin{lemma}
	\label{lemma.f}
	Let $\mathcal{Y}$ be a nonempty set. Suppose that, for each $y\in\mathcal{Y}$, $h_{y}:\real_+^N\to\real_+$ is a general interference function and ${u}_y\in\real_{++}$ is a positive scalar. Assume that there exists $\delta>0$ satisfying $(\forall y\in\mathcal{Y})~ {u}_{y}\ge {\delta}>{0}$; i.e., the set \linebreak[4] $\{ u_y\in\real_{++}~|~y\in\mathcal{Y}\}\subset\real_{++}$  is bounded away from zero. Define a function $t:\real_+^N\to\real_{++}$ by
	\begin{align}
		\label{eq.f}
		(\forall\signal{x}\in\real_+^N)~t(\signal{x}):=\inf_{y\in\mathcal{Y}}(h_{y}(\signal{x})+{u}_{y}).
	\end{align} 
	In addition, given a monotone norm $\|\cdot\|$  in $\real^N$, define the  function  $t_{\|\cdot\|}:\real_+^N\to\real_{+}$ associated with $t$ by
	\begin{align}
		\label{eq.g}
		(\forall\signal{x}\in\real^N_+)~t_{\|\cdot\|}(\signal{x}):=\inf_{y\in\mathcal{Y}}(h_{y}(\signal{x})+{u}_{y}~\|\signal{x}\|).
	\end{align}
Assume that $t$ and $t_{\|\cdot\|}$ are continuous on $\real_+^N$. Then $t$ in \refeq{eq.f} is a standard interference function, and $t_{\|\cdot\|}$ in \refeq{eq.g} is a general interference function.
\end{lemma}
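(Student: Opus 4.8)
The plan is to verify the defining properties of standard and general interference functions for $t$ and $t_{\|\cdot\|}$ one at a time, disposing of the algebraic properties (positivity, monotonicity, positive homogeneity, scalability) first and reserving continuity for last, since it is the only genuinely delicate point. Writing $f_y := h_y + u_y$ and $g_y := h_y + u_y\|\cdot\|$ for the summands defining $t$ and $t_{\|\cdot\|}$, I would lean throughout on two elementary facts: a pointwise infimum of monotone functions is monotone, and a pointwise infimum of positively homogeneous functions is positively homogeneous (nonnegative scaling commutes with the infimum).

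For the algebraic properties: positivity of $t$ is immediate from $h_y \geq 0$ and $u_y \geq \delta$, giving $t(\signal{x}) \geq \delta > 0$ (so $t$ maps into $\real_{++}$), while finiteness of both functions follows by bounding the infimum with any single summand, using that $\mathcal{Y}$ is nonempty. Monotonicity of $t$ and $t_{\|\cdot\|}$ follows from monotonicity of each $h_y$ and of the norm; positive homogeneity of $t_{\|\cdot\|}$ follows since each $g_y$ is positively homogeneous. Scalability of $t$ is where the hypothesis $u_y \geq \delta > 0$ is essential: fixing $\alpha > 1$ and using positive homogeneity of $h_y$, I would write $h_y(\alpha\signal{x}) + u_y = \alpha\bigl(h_y(\signal{x}) + u_y\bigr) - (\alpha-1)u_y \leq \alpha\bigl(h_y(\signal{x}) + u_y\bigr) - (\alpha-1)\delta$, and take the infimum over $y$ to obtain $t(\alpha\signal{x}) \leq \alpha\, t(\signal{x}) - (\alpha-1)\delta < \alpha\, t(\signal{x})$, the strictness coming precisely from the uniform bound $\delta$.

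The main obstacle is continuity, because a pointwise infimum of an infinite family of continuous functions is in general only upper semicontinuous; that direction is thus automatic for both $t$ and $t_{\|\cdot\|}$. For lower semicontinuity I would use monotonicity: given $\signal{x}_n \to \signal{x}_0$ and any $\beta \in~]0,1[$, one checks coordinatewise that $\signal{x}_n \geq \beta\signal{x}_0$ for all large $n$ (the constraint is vacuous on coordinates where $\signal{x}_0$ vanishes and holds eventually on the others). Monotonicity then gives $t(\signal{x}_n) \geq t(\beta\signal{x}_0)$, hence $\liminf_n t(\signal{x}_n) \geq t(\beta\signal{x}_0) \geq \beta\, t(\signal{x}_0)$, where the last inequality is an equality for $t_{\|\cdot\|}$ by homogeneity and for $t$ is obtained exactly as in the scalability step. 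Letting $\beta \to 1$ gives $\liminf_n t(\signal{x}_n) \geq t(\signal{x}_0)$, i.e., lower semicontinuity; combined with upper semicontinuity this yields continuity and completes the verification that $t$ is a standard and $t_{\|\cdot\|}$ a general interference function.
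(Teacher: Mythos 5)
Your proof is correct, and for the algebraic core it coincides with the paper's: monotonicity via the pointwise infimum of monotone functions, and scalability via the rearrangement $h_y(\alpha\signal{x})+u_y\le \alpha\bigl(h_y(\signal{x})+u_y\bigr)-(\alpha-1)\delta$ followed by taking the infimum, with strictness coming from the uniform lower bound $\delta$ — this is exactly the paper's computation. Where you genuinely diverge is continuity: the paper's proof of the lemma does not address it at all, instead handling it separately (Remark~\ref{remark.continuous}) by citing an external result that monotone homogeneous maps are continuous on the open cone and admit a continuous extension to $\real_+^N$, and then ``tacitly assuming'' continuity everywhere. Your argument — upper semicontinuity for free as an infimum of continuous functions, and lower semicontinuity via $\signal{x}_n\ge\beta\signal{x}_0$ eventually, $t(\signal{x}_n)\ge t(\beta\signal{x}_0)\ge\beta\,t(\signal{x}_0)$ (the last step being exactly the scalability inequality applied with $\alpha=1/\beta$), then $\beta\uparrow 1$ — is self-contained, works directly on all of $\real_+^N$ including the boundary, and avoids the subtlety that a continuous extension from $\real_{++}^N$ need not a priori agree with the given formula for $t$ on the boundary. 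This makes your version slightly stronger and more rigorous than what is printed, at the cost of a few extra lines; the paper's route buys brevity by outsourcing the delicate point to the cited reference.
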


\begin{proof}
	Let $y\in\mathcal{Y}$ be arbitrary and $(\signal{x},\signal{z})\in\real_+^N\times\real_+^N$ satisfy  $\signal{x}\le\signal{z}$, so we have $ h_y(\signal{x})+u_y \le h_y(\signal{z})+u_y$ because general interference functions are monotone. This last inequality is valid for every $y\in\mathcal{Y}$, so we have $t(\signal{x})=\inf_{y\in\mathcal{Y}} (h_y(\signal{x})+u_y) \le \inf_{y\in\mathcal{Y}} (h_y(\signal{z})+u_y)=t(\signal{z})$, which proves monotonicity. Now, take $y\in\mathcal{Y}$, $\signal{x}\in\real^N_+$, and $\alpha>1$. Homogeneity of $h_y$ yields $h_y(\alpha\signal{x})+u_y = \alpha h_y(\signal{x})+u_y$.  The assumption $(\forall y\in\mathcal{Y})~ u_y\ge \delta > 0$ and $\alpha>1$ implies that $0\le h_y(\alpha\signal{x})+u_y-\delta \le \alpha (h_y(\signal{x}) + u_y-\delta) $. Therefore, $h_y(\alpha\signal{x})+u_y+ (\alpha-1) \delta \le \alpha (h_y(\signal{x}) + u_y)$, and we conclude that $t(\alpha\signal{x})+(\alpha-1)\delta=\inf_{y\in \mathcal{Y}}(h_y(\alpha\signal{x})+u_y) +(\alpha-1)\delta \le \inf_{y\in \mathcal{Y}}\alpha (h_y(\signal{x}) + u_y) = \alpha t(\signal{x})$, which implies ${t(\alpha\signal{x})}<\alpha t(\signal{x})$ because $\delta>0$ and $\alpha>1$. 
	The proof that $t_{\|\cdot´\|}$ is a general interference mapping is simpler because it does not rely on positivity of $u_y$. Since it follows with similar steps, we omit the proof for brevity.
\end{proof}

We use the functions in \refeq{eq.f} and \refeq{eq.g} in Lemma~\ref{lemma.f} to define the following proper subclass of standard and general interference mappings:

\begin{definition} 
	\label{def.compatible}
	Let $\mathcal{N}:=\{1,\ldots,N\}$. We say that a standard interference mapping $T:\real_+^N\to\real_{++}^N:\signal{x}\mapsto (t_1(\signal{x}),\ldots,t_N(\signal{x}))$ and a general interference mapping $T_{\|\cdot\|}:\real^N_+\to\real_{+}^N:\signal{x}\mapsto (\tn{1}(\signal{x}),\ldots,\tn{N}(\signal{x}))$ satisfy hypothesis ($T$,$\tnorm$) for a monotone norm  $\|\cdot\|$ on $\real^N$ if there exist sets $(\mathcal{Y}_n)_{n\in\mathcal{N}}$, a family $(u_{y,n})_{n\in\mathcal{N},y\in\mathcal{Y}_n}$ of positive scalars bounded away from zero, and general interference functions $(h_{y,n})_{n\in\mathcal{N},y\in\mathcal{Y}_n}$ such that the coordinate functions $t_n$ and $(t_n)_{\|\cdot\|}$ for each $n\in\mathcal{N}$ can be expressed as
	\begin{align} 
		\label{eq.siinf}
		t_n:\real_+^N\to\real_{++}:\signal{x}\mapsto\inf_{y\in\mathcal{Y}_n}(h_{y,n}(\signal{x})+{u}_{y,n})
	\end{align}
	and  
\begin{multline}
	\label{eq.tnorm}
	(t_n)_{\|\cdot\|}:\real^N_+\to\real_+:\signal{x}\mapsto\inf_{y\in\mathcal{Y}_n}(h_{y,n}(\signal{x})+{u}_{y,n}~\|\signal{x}\|).
\end{multline}
\end{definition}

As a simple illustration of mappings satisfying hypothesis ($T$,$\tnorm$) for a given monotone norm $\|\cdot\|$, we have: 

\begin{example} \label{example.illustration} Under the stated assumptions on $\signal{M}$, $\signal{u}=:[u_1,\ldots,u_N]^t$, $\mathcal{N}:=\{1,\ldots,N\}$, and the monotone norm $\|\cdot\|$ in Example~\ref{example.maxmin}, we verify that $T:\real_+^N\to\real_{++}^N:\signal{x}\mapsto \signal{Mx}+\signal{u}$ and $T_{\|\cdot\|}:\real_+^N\to\real_{+}^N:\signal{x}\mapsto \signal{Mx}+\signal{u}\|\signal{x}\|$ satisfy hypothesis $(T,\tnorm)$. More specifically, the $n$th coordinate functions $(t_n)_{n\in\mathcal{N}}$ and $(t_n)_{\|\cdot\|}$ of, respectively, $T$ and $T_{\|\cdot\|}$ are given by $t_n:\real^N_{+}\to\real_{++}: \signal{x}\mapsto h_{n}(\signal{x})+{u}_{n}$ and $(t_n)_{\|\cdot\|}:\real^N_{+}\to\real_{+}: \signal{x}\mapsto h_{n}(\signal{x})+{u}_{n}\|\signal{x}\|$, where $(\forall n\in\mathcal{N})~h_n:\real^N_+\to\real_+:\signal{x}\mapsto \signal{m}_n^t\signal{x}$ is a general interference function and $\signal{m}_n^t$ is a row vector corresponding to the $n$th row of the matrix $\signal{M}$. 
\end{example}

Let $T$ and $\tnorm$ satisfy hypothesis $(T,\tnorm)$ for a given norm $\|\cdot\|$. We now proceed to show that important properties of conditional eigenvalues -- which are typically related to SINR values or rates in wireless network models \cite{nuzman07,cavalcante2019} -- associated with the monotone norm and the mapping $T$ can be deduced from the spectral radius $\rho(T_{\|\cdot\|})$ and the corresponding eigenvector of $T_{\|\cdot\|}$. The mapping $T_{\|\cdot\|}$ unifies $T$ and the norm $\|\cdot\|$ into a single concept that is often more convenient for analysis. The next proposition also shows that $T_{\|\cdot\|}$ possesses a unique eigenvector (up to a scaling factor) associated with the spectral radius, which is a special result considering that we deal with possibly nonlinear mappings.

\begin{proposition}
	\label{proposition.cond_eig}
	Given a monotone norm $\|\cdot\|$ on $\real^N$, assume that $T:\real_+^N\to\real_{++}^N$ and $\tnorm:\real_+^N\to\real_{+}^N$ satisfy hypothesis ($T$,$\tnorm$). Let $\signal{x}^\star\in\real_{++}^N$ and $\lambda^\star>0$ be, respectively, the conditional eigenvector and the conditional eigenvalue associated with $T$ and $\|\cdot\|$ (see Fact~\ref{fact.cond_eig}(i) in the Supplemental Material). Then each of the following holds:
	\item[(i)] $\rho(\tnorm)=\lambda^\star$, where we recall that $\rho(\tnorm)$ is the spectral radius of the mapping $\tnorm$ in Definition~\ref{def.compatible}; 
	\item[(ii)] if $\signal{z}\in\real_{+}^N\backslash\{\signal{0}\}$ is an eigenvector associated with the spectral radius $\rho(T_{\|\cdot\|})$ of $\tnorm$ (i.e., $\tnorm(\signal{z})=\rho(T_{\|\cdot\|})\signal{z}$), then $\signal{x}^\star=(1/\|\signal{z}\|)\signal{z}\in\real_{++}^N$; and
	\item[(iii)] the set of eigenvectors of $\tnorm$ associated with the spectral radius $\rho(\tnorm)$ is the cone given by $\{\alpha\signal{x}^\star\in\real_{++}^N \mid \alpha>0\}$.	
	 %[which in particular shows that the eigenvector associated with spectral radius $\rho(\tnorm)$ of $\tnorm$ is unique up to a scaling factor].
\end{proposition}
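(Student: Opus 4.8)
The plan is to exploit the fact that $\tnorm$ and $T$ coincide on the unit sphere of the norm $\|\cdot\|$, and then transfer the uniqueness of the conditional eigenpair of $T$ to the spectral-radius eigenvectors of $\tnorm$. The key observation is that for every $\signal{x}\in\real_+^N$ with $\|\signal{x}\|=1$ and every $n\in\mathcal{N}$, the definitions in Definition~\ref{def.compatible} give $(t_n)_{\|\cdot\|}(\signal{x})=\inf_{y\in\mathcal{Y}_n}(h_{y,n}(\signal{x})+u_{y,n}\|\signal{x}\|)=\inf_{y\in\mathcal{Y}_n}(h_{y,n}(\signal{x})+u_{y,n})=t_n(\signal{x})$; hence $\tnorm(\signal{x})=T(\signal{x})$ whenever $\|\signal{x}\|=1$. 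In particular, since $\|\signal{x}^\star\|=1$, we obtain $\tnorm(\signal{x}^\star)=T(\signal{x}^\star)=\lambda^\star\signal{x}^\star$, so $\signal{x}^\star$ is an eigenvector of $\tnorm$ with eigenvalue $\lambda^\star$. By the definition of the spectral radius this already yields $\rho(\tnorm)\ge\lambda^\star>0$.

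For the reverse inequality in (i) and for (ii), I would start from an arbitrary eigenvector $\signal{z}\in\real_+^N\setminus\{\signal{0}\}$ of $\tnorm$ associated with $\rho(\tnorm)$; such a vector exists by the NOTE following Definition~\ref{def.nl_radius}, and any eigenvector as in (ii) is of this form. Writing $\widehat{\signal{z}}:=(1/\|\signal{z}\|)\signal{z}$, which is well defined because $\signal{z}\neq\signal{0}$, the positive homogeneity of the general interference mapping $\tnorm$ gives $\tnorm(\widehat{\signal{z}})=(1/\|\signal{z}\|)\tnorm(\signal{z})=\rho(\tnorm)\widehat{\signal{z}}$, and by construction $\|\widehat{\signal{z}}\|=1$. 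Applying the coincidence identity above to $\widehat{\signal{z}}$ then yields $T(\widehat{\signal{z}})=\tnorm(\widehat{\signal{z}})=\rho(\tnorm)\widehat{\signal{z}}$.

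The crucial step is to promote $\widehat{\signal{z}}$ to a strictly positive vector so that the uniqueness of the conditional eigenpair applies. Here I would use that $T$ maps into $\real_{++}^N$: the left-hand side $T(\widehat{\signal{z}})$ is strictly positive, and since $\rho(\tnorm)\ge\lambda^\star>0$, the identity $T(\widehat{\signal{z}})=\rho(\tnorm)\widehat{\signal{z}}$ forces $\widehat{\signal{z}}\in\real_{++}^N$. Thus $(\widehat{\signal{z}},\rho(\tnorm))\in\real_{++}^N\times\real_{++}$ is an eigenpair of $T$ with $\|\widehat{\signal{z}}\|=1$. By the uniqueness of the conditional eigenpair of a standard interference mapping associated with a monotone norm, I conclude $\widehat{\signal{z}}=\signal{x}^\star$ and $\rho(\tnorm)=\lambda^\star$, which establishes both (i) and (ii).

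Finally, for (iii), I would combine the two directions. The argument above shows that every eigenvector $\signal{z}$ of $\tnorm$ associated with $\rho(\tnorm)$ satisfies $\signal{z}=\|\signal{z}\|\,\widehat{\signal{z}}=\|\signal{z}\|\,\signal{x}^\star$, so it lies in $\{\alpha\signal{x}^\star\mid\alpha>0\}$; conversely, for any $\alpha>0$, positive homogeneity gives $\tnorm(\alpha\signal{x}^\star)=\alpha\lambda^\star\signal{x}^\star=\rho(\tnorm)(\alpha\signal{x}^\star)$ with $\alpha\signal{x}^\star\in\real_{++}^N$, so every such vector is indeed an eigenvector associated with $\rho(\tnorm)$. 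I expect the main obstacle to be precisely the strict-positivity step: one must rule out the possibility of boundary eigenvectors (with some vanishing coordinates) before invoking uniqueness, and this is exactly where the strictly positive codomain of $T$ together with $\rho(\tnorm)>0$ is indispensable.
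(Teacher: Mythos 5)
Your proof is correct and follows essentially the same route as the paper's: normalize an eigenvector of $\tnorm$ associated with $\rho(\tnorm)$, use the coincidence $\tnorm=T$ on the unit sphere of $\|\cdot\|$ together with the strictly positive codomain of $T$ to obtain a strictly positive conditional eigenpair, and invoke uniqueness from \cite{nuzman07}; part (iii) then follows by positive homogeneity. Your preliminary step establishing $\rho(\tnorm)\ge\lambda^\star$ is a harmless (and in fact redundant) addition, since positivity of $T(\widehat{\signal{z}})$ already forces $\rho(\tnorm)>0$.
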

\begin{proof} (i)-(ii)
	Assume that $\signal{z}\in\real^N_{+}\backslash\{\signal{0}\}$ is an eigenvector associated with the spectral radius $\rho(\tnorm)\ge 0$ of $T_{\|\cdot\|}$. Let $\signal{w}$ be the normalized version of $\signal{z}$ defined by $\signal{w}:=({1}/{\|\signal{z}\|})\signal{z}$. As a result,
	\begin{align}
		\label{eq.parti}
		\rho(\tnorm) \signal{w} \overset{(a)}= T_{\|\cdot\|}(\signal{w}) \overset{(b)}=T(\signal{w})\overset{(c)}>\signal{0},
	\end{align}  
	where (a) follows from the fact that $T_{\|\cdot\|}$ is a general interference mapping (Lemma~\ref{lemma.f}), and, hence, positively homogeneous; (b) follows from $\|\signal{w}\|=1$ and the definitions of $T$ and $\tnorm$; and (c) follows from positivity of $T$. The above shows that $\signal{w}$ and $\rho(\tnorm)$ are, respectively, a conditional eigenvector and a conditional eigenvalue associated with $T$ and the monotone norm $\|\cdot\|$. Uniqueness of the conditional eigenvalue and eigenvector (Fact~\ref{fact.cond_eig}(i) in the Supplemental Material) implies that $\rho(\tnorm)=\lambda^\star$ and $\signal{w}=(1/\|\signal{z}\|)\signal{z}=\signal{x}^\star$, which completes the proof of parts (i) and (ii).
	
	(iii) Multiply both sides of equality (a) in \refeq{eq.parti} by $\alpha>0$ and use positive homogeneity of $\tnorm$ and $\signal{w}=\signal{x}^\star$ (as proved in part (ii)) to conclude the proof.
\end{proof}

We are now in a position to make precise the claims stated in Remark~\ref{remark.maxminproblem}.

\begin{example}
	\label{example.clarification}
	Consider the mappings $T$ and $\tnorm$ defined in Example~\ref{example.illustration} under the assumptions stated on $\signal{M}$, $\signal{u}$, $x_\mathrm{max}$, and $\|\cdot\|$ in Example~\ref{example.maxmin}. We verify from \cite[Eq. (10)]{miretti2022closed} that a solution $\signal{x}^\star\in\real_{++}^N$ to Problem~\refeq{eq.maxmin_simple} satisfies $T(\signal{x}^\star)=(1/t^\star) \signal{x}^\star$ and $\|\signal{x}^\star\|=1$, where $t^\star>0$ is the optimal objective of Problem~\refeq{eq.maxmin_simple}.  As a result, $\signal{x}^\star$ and $1/t^\star$ are, respectively, the conditional eigenvector and eigenvalue corresponding to the mapping $T$ and the monotone norm $\|\cdot\|$ in Example~\ref{example.maxmin}. The claims in Remark~\ref{remark.maxminproblem} now follow immediately from Proposition~\ref{proposition.cond_eig}.
\end{example}

In the remainder of this section, for a given monotone norm $\|\cdot\|$ and mappings $T$ and $\tnorm$ satisfying hypothesis ($T$,$\tnorm$), we establish conditions under which the function $g_{\tnorm}: \real_+^N\to\real_+:\signal{x}\mapsto\rho(\diagvec{x}T_{\|\cdot\|})$ -- \emph{i.e.,  the function mapping a given $\signal{x}\in\real_+^N$ to the spectral radius of the general interference mapping $\real^N_+\to\real^N_+:\signal{y}\mapsto \mathrm{diag}(\signal{x})(\tnorm(\signal{y}))=(\mathrm{diag}(\signal{x})\circ \tnorm)(\signal{y})$} --\footnote{Here, given $\signal{x}\in\real_+^N$, the matrix $\mathrm{diag}(\signal{x})\in\real_+^{N\times N}$ should be interpreted as the linear operator $\mathrm{diag}(\signal{x}):\real_{+}^{N}\to\real_+^N:\signal{y}\mapsto \mathrm{diag}(\signal{x})\signal{y}$.} satisfies all properties of a norm on the nonnegative cone $\real^N_+$. We then extend this cone-defined norm to the entire space \(\mathbb{R}^N\). In the next section, we show that this extension enables us to analyze interference models in wireless networks using concise notation that immediately reveals high-level properties of the feasible SINR and rate regions. For instance, by simply glancing at the definition of the feasible SINR region in \refeq{eq.csp} in Sect.~\ref{sect.withpower} below, readers can infer from the notation that this region resembles a ball defined by some norm, making it both convex and compact, among other intuitive properties. Remarkably, this insight is achieved without requiring any knowledge of the specific details of the interference model, the wireless technology, or the norm under consideration.

\begin{proposition}
	\label{proposition.qnorm}
	Let $\|\cdot\|$ be a monotone norm and \linebreak[4] $T:\real^N_+\to\real^N_{++}$ and $\tnorm:\real^N_+\to\real^N_{+}$ satisfy hypothesis ($T$,$\tnorm$). Consider the function $\ngn:\real^N_+\to\real_+:\signal{x}\mapsto\rho(\mathrm{diag}(\signal{x})T_{\|\cdot\|})$. Then each of the following holds:
	\begin{itemize}
		\item[(i)] $(\forall \alpha>0)(\forall \signal{x}\in\real^N_+)~ \ngn(\alpha\signal{x})=\alpha \ngn(\signal{x})$;
		\item[(ii)] $(\forall \signal{x}\in\real^N_+)~\ngn(\signal{x})=0\Leftrightarrow \signal{x}=\signal{0}$;
		\item[(iii)] $(\forall \signal{x}\in\real_+^N)(\forall \signal{y}\in\real_+^N)~\signal{x}\le\signal{y}\Rightarrow \ngn(\signal{x})\le \ngn(\signal{y})$; and
		\item[(iv)] $\ngn$ is continuous on $\real^N_+$. 
	\end{itemize}
\end{proposition}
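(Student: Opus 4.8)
The plan is to write $G_{\signal{x}} := \diagvec{x}\tnorm$ and first observe that, for every $\signal{x}\in\real_+^N$, the mapping $\real_+^N\to\real_+^N:\signal{u}\mapsto \diagvec{x}(\tnorm(\signal{u}))$ is itself a general interference mapping: it is continuous, positively homogeneous, monotone, and nonnegative-valued, since left-multiplication by the nonnegative diagonal matrix $\diagvec{x}$ preserves each of these properties of $\tnorm$ (which is a general interference mapping by Lemma~\ref{lemma.f}). Hence $\ngn(\signal{x})=\rho(G_{\signal{x}})$ is well defined, and by the NOTE following Definition~\ref{def.nl_radius} the supremum is attained by some eigenvector $\signal{v}\in\real_+^N\setminus\{\signal{0}\}$, i.e. $G_{\signal{x}}(\signal{v})=\ngn(\signal{x})\signal{v}$. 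Two elementary facts about the cone spectral radius will be used repeatedly: this attainment, and the lower Collatz--Wielandt bound, namely that $G(\signal{v})\ge c\signal{v}$ for some $\signal{v}\in\real_+^N\setminus\{\signal{0}\}$ and $c\ge0$ forces $\rho(G)\ge c$ (which follows by iterating, using monotonicity and homogeneity to get $G^k(\signal{v})\ge c^k\signal{v}$, together with the Gelfand-type characterization of $\rho$ in \cite{lem13}).

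Parts (i)--(iii) are then short. For (i), $\mathrm{diag}(\alpha\signal{x})\tnorm=\alpha\,\diagvec{x}\tnorm$ for $\alpha>0$, and scaling a general interference mapping by $\alpha>0$ scales every eigenvalue by $\alpha$, so $\rho(\alpha G_{\signal{x}})=\alpha\rho(G_{\signal{x}})$. For (ii), if $\signal{x}=\signal{0}$ then $G_{\signal{0}}$ is the zero mapping and $\ngn(\signal{0})=0$; conversely, if $\signal{x}\neq\signal{0}$, take the attained eigenvector $\signal{v}$ and any index $n$ with $x_n>0$: because each coordinate of $\tnorm$ is bounded below by $\delta_n\|\cdot\|$ with $\delta_n>0$ (the noise floor in \refeq{eq.g}), we get $[\tnorm(\signal{v})]_n\ge\delta_n\|\signal{v}\|>0$, hence $[G_{\signal{x}}(\signal{v})]_n=x_n[\tnorm(\signal{v})]_n>0$; since this equals $\ngn(\signal{x})v_n$, necessarily $\ngn(\signal{x})>0$. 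For (iii), $\signal{x}\le\signal{y}$ gives $G_{\signal{x}}(\signal{u})\le G_{\signal{y}}(\signal{u})$ pointwise; applying $G_{\signal{y}}$ to the attained eigenvector $\signal{v}$ of $G_{\signal{x}}$ yields $G_{\signal{y}}(\signal{v})\ge G_{\signal{x}}(\signal{v})=\ngn(\signal{x})\signal{v}$, so the lower Collatz--Wielandt bound gives $\ngn(\signal{y})=\rho(G_{\signal{y}})\ge\ngn(\signal{x})$.

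The main obstacle is continuity, part (iv), which I would prove by establishing upper and lower semicontinuity separately along an arbitrary sequence $\signal{x}_k\to\signal{x}$ in $\real_+^N$. For upper semicontinuity, let $\signal{v}_k$ be normalized attained eigenvectors, $G_{\signal{x}_k}(\signal{v}_k)=\ngn(\signal{x}_k)\signal{v}_k$ with $\|\signal{v}_k\|=1$. The $\signal{v}_k$ lie on the compact nonnegative unit sphere, and the values $\ngn(\signal{x}_k)$ are bounded (e.g. by $\|(x_{k,n}M_n)_n\|$ with $M_n:=\max_{\|\signal{w}\|=1}[\tnorm(\signal{w})]_n$, using monotonicity of $\rho$ and the fact that $\rho(\signal{u}\mapsto\signal{a}\|\signal{u}\|)=\|\signal{a}\|$); passing to a subsequence realizing $\limsup_k\ngn(\signal{x}_k)=:\ell$ with $\signal{v}_k\to\signal{v}$, continuity of $\tnorm$ gives $G_{\signal{x}}(\signal{v})=\ell\signal{v}$ with $\signal{v}\neq\signal{0}$, so $\ell$ is an eigenvalue of $G_{\signal{x}}$ and therefore $\ell\le\rho(G_{\signal{x}})=\ngn(\signal{x})$. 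For lower semicontinuity, fix the attained eigenvector $\signal{v}$ of the limit map $G_{\signal{x}}$, so $[G_{\signal{x}}(\signal{v})]_n=\ngn(\signal{x})v_n$ for every $n$. For each index $n$ with $v_n>0$, $[G_{\signal{x}_k}(\signal{v})]_n/v_n=x_{k,n}[\tnorm(\signal{v})]_n/v_n\to x_n[\tnorm(\signal{v})]_n/v_n=\ngn(\signal{x})$; since there are finitely many such indices, $c_k:=\min_{n:v_n>0}[G_{\signal{x}_k}(\signal{v})]_n/v_n\to\ngn(\signal{x})$, while $G_{\signal{x}_k}(\signal{v})\ge c_k\signal{v}$, so the lower Collatz--Wielandt bound yields $\ngn(\signal{x}_k)\ge c_k$ and hence $\liminf_k\ngn(\signal{x}_k)\ge\ngn(\signal{x})$. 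Combining the two semicontinuities gives $\lim_k\ngn(\signal{x}_k)=\ngn(\signal{x})$.

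I expect the delicate points to be, first, cleanly justifying the lower Collatz--Wielandt inequality for the nonlinear cone spectral radius, so that I may invoke it as a property of $\rho$ from \cite{lem13} rather than reprove it; and second, handling the boundary of $\real_+^N$ in the lower-semicontinuity step -- which is precisely where the argument exploits that the noise terms $u_{y,n}$ keep every coordinate of $\tnorm$ uniformly positive on $\real_+^N\setminus\{\signal{0}\}$, ensuring that the ratios $[G_{\signal{x}_k}(\signal{v})]_n/v_n$ still converge to $\ngn(\signal{x})$ on the support of $\signal{v}$ even when some components of $\signal{x}$ vanish. A simpler but false shortcut to avoid is appealing to monotonicity and homogeneity alone to deduce continuity; these two properties do not imply continuity at the boundary (a monotone, positively homogeneous function on $\real_+^2$ can jump there), so the spectral-radius structure genuinely must be used.
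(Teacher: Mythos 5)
Your proof is correct, but it takes a genuinely different route from the paper in parts (ii) and (iv). For (ii), the paper restricts $T$ to the support of $\signal{x}$, builds a reduced monotone norm, invokes Proposition~\ref{proposition.cond_eig} on the reduced system to get a positive spectral radius, and then lifts the eigenvector back by zero-padding; your argument instead applies the noise floor $u_{y,n}\ge\delta$ directly to the attained eigenvector $\signal{v}$ to get $[\diagvec{x}\tnorm(\signal{v})]_n\ge x_n\delta\|\signal{v}\|>0=\ngn(\signal{x})v_n$ unless $\ngn(\signal{x})>0$, which is shorter and avoids the reduction machinery. For (iv), the paper establishes uniform convergence of $\diagvec{x_n}\tnorm$ to $\diagvec{x^\star}\tnorm$ on the unit sphere (via the operator-norm bound $\|\diagvec{x_n}-\diagvec{x^\star}\|\,B$) and then cites Nussbaum's continuity theorem for the cone spectral radius (Fact~\ref{fact.continuity}); you instead prove upper and lower semicontinuity by hand, using compactness of the normalized attained eigenvectors for the $\limsup$ direction and the lower Collatz--Wielandt inequality (Fact~\ref{fact.ineq_spec}(iii)) applied to the limit eigenvector for the $\liminf$ direction. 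Both are valid given the everywhere-continuity of $\tnorm$ assumed in Remark~\ref{remark.continuous} and the attainment of the spectral radius noted after Definition~\ref{def.nl_radius}; the paper's route is shorter because it outsources the compactness argument to \cite{nussbaum1986convexity}, while yours is essentially self-contained modulo the two cited facts and makes the mechanism visible. Parts (i) and (iii) coincide with the paper's. Two small remarks: the lower Collatz--Wielandt bound you worry about justifying is already available as Fact~\ref{fact.ineq_spec}(iii), so no Gelfand-type argument is needed; and in the lower-semicontinuity step the noise terms play no real role --- what carries the argument is simply that $x_{k,n}\to x_n$ on the (finite) support of $\signal{v}$, with the off-support coordinates handled trivially since both sides vanish there.
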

\begin{proof}
	(i) Immediate from the positive homogeneity of the spectral radius.
	
	(ii) The direction $\signal{x}=\signal{0} \Rightarrow \ngn(\signal{0})=0$ is immediate from the definition of the spectral radius (Definition~\ref{def.nl_radius}), so we only need to prove the converse $ \ngn(\signal{x})=0 \Rightarrow \signal{x}=\signal{0}$, which is equivalent to the contrapositive $(\forall \signal{x}\in\real_+^N)~\signal{x}\ne\signal{0}\Rightarrow \ngn(\signal{x})>0$. Choose $\signal{x}=(x_1,\ldots,x_N)\in\real^N_{+}\backslash\{\signal{0}\}$ arbitrarily. Let $\widetilde{T}:\real^M_+\to\real_{++}^M$ be the mapping obtained by removing all coordinates $n\in\mathcal\{1,\ldots,N\}$ of $T$ for which $x_n=0$, so we have $1\le M \le N$. Similarly, let $\widetilde{\signal{x}}\in\real^M_{++}$ be the reduced vector obtained by removing all zero components of $\signal{x}$. For every vector $\tilde{\signal{y}}\in\real^M$ constructed with the nonzero entries of a vector $\signal{y}\in\real^N$  with the same zero pattern of $\signal{x}$, define $\|\widetilde{\signal{y}}\|_\star:=\|\signal{y}\|$, and note that $\|\cdot\|_\star$ is a monotone norm on $\real^M_+$. Since the coordinate functions of $\widetilde{T}$ can be written as in \refeq{eq.siinf} (because $T$ and $\tnorm$ satisfy hypothesis $(T,\tnorm)$), so does the mapping $\signal{D}\tilde{T}$, where $\signal{D}:=\mathrm{diag}(\widetilde{\signal{x}})\in\mathcal{D}_{++}^{M\times M}$. For the mapping $\signal{D}\tilde{T}$ written in the form in \refeq{eq.siinf}, we denote by $(\signal{D}\widetilde{T})_{\|\cdot\|_\star}=\signal{D}\widetilde{T}_{\|\cdot\|_\star}$ the corresponding mapping with the coordinate functions written as in \refeq{eq.tnorm}. We deduce from Proposition~\ref{proposition.cond_eig} that the spectral radius of $(\signal{D}\widetilde{T})_{\|\cdot\|_\star}$ satisfies $\rho((\signal{D}\widetilde{T})_{\|\cdot\|_\star})>0$, and there exists a positive vector $\widetilde{\signal{u}}\in\real_{++}^M$ satisfying $(\signal{D}\widetilde{T})_{\|\cdot\|_\star}(\widetilde{\signal{u}})=\rho((\signal{D}\widetilde{T})_{\|\cdot\|_\star})\widetilde{\signal{u}}$. Denote by $\signal{u}\in\real^N_+$ the vector obtained from $\widetilde{\signal{u}}\in\real^M_{++}$ by placing zeros at the same locations of the zero components in $\signal{x}$. (For example, if $\signal{x}=(1,0,2)$ and $\widetilde{\signal{u}}=(3,4)$, then $\widetilde{\signal{x}}=(1,2)$ and $\signal{u}=(3,0,4)$.) In such a case, we verify that $\mathrm{diag}(\signal{x})T_{\|\cdot\|}(\signal{u}) = \rho((\signal{D}\widetilde{T})_{\|\cdot\|_\star})\signal{u}$, which, from the definition of the spectral radius in \refeq{eq.nl_radius} implies that $\ngn(\signal{x})=\rho(\mathrm{diag}(\signal{x})T_{\|\cdot\|})\ge \rho((\signal{D}\widetilde{T})_{\|\cdot\|_\star}) >0$ as claimed.
	
	(iii) Fix $(\signal{x},\signal{y})\in\real_+^N\times \real_+^N$ with $\signal{x}\le\signal{y}$ arbitrarily. Coordinate-wise nonnegativity of $\tnorm$ implies that $(\forall\signal{u}\in\real_+^N)~ \mathrm{diag}(\signal{x})\tnorm(\signal{u})\le \mathrm{diag}(\signal{y})\tnorm(\signal{u}).$ Let $\signal{z}\in\real_+^{N}\backslash\{\signal{0}\}$ be the eigenvector of $\mathrm{diag}(\signal{x})\tnorm$ corresponding to the spectral radius $\rho(\mathrm{diag}(\signal{x})\tnorm)$. As a result, $\rho(\mathrm{diag}(\signal{x})\tnorm)\signal{z} = \mathrm{diag}(\signal{x})\tnorm(\signal{z})\le \mathrm{diag}(\signal{y})\tnorm(\signal{z})$, which implies $\ngn(\signal{x})=\rho(\mathrm{diag}(\signal{x})\tnorm)\le \rho(\mathrm{diag}(\signal{y})\tnorm)=\ngn(\signal{y})$ in light of Fact~\ref{fact.ineq_spec}(iii) in the Supplemental Material.
	
	(iv) For a matrix $\signal{M}\in\real^{N\times N}$, denote by $\|\signal{M}\|$ the matrix operator norm induced by the vector norm $\|\cdot\|$. Let $\mathcal{P}$ be the compact set given by $\mathcal{P}:=\{\signal{x}\in\real_+^N\mid \|\signal{x}\|=1 \}$. Since $\tnorm$ is a general interference mapping, and hence continuous according to our definition (see the discussion below Definition~\ref{def.sif}), the extreme-value theorem shows that there exists $B\in\real_+$ such that $B=\sup_{\signal{p}\in\mathcal{P}}\|\tnorm(\signal{p})\|$, and the supremum is attained at some $\signal{p}^\star\in\mathcal{P}$. As a result, if a sequence $(\signal{x}_n)_{n\in\Natural}$ in $\real_+^N$ converges to $\signal{x}^\star\in\real_+^N$, we can  use the standard property $(\forall\signal{x}\in\real^N)(\forall\signal{M}\in\real^{N\times N})\|\signal{Mx}\|\le \|\signal{M}\|~\|\signal{x}\|$ of matrix operator norms to deduce
	\begin{multline*}
		\lim_{n\to\infty}\sup_{\signal{p}\in\mathcal{P}}\|\mathrm{diag}(\signal{x}_n)\tnorm(\signal{p})-\mathrm{diag}(\signal{x}^\star)\tnorm(\signal{p})\| \\ 
		\le \lim_{n\to\infty}\sup_{\signal{p}\in\mathcal{P}}\|\mathrm{diag}(\signal{x}_n)-\mathrm{diag}(\signal{x}^\star)\|~\|\tnorm(\signal{p})\|\\
		\le \lim_{n\to\infty}\|\mathrm{diag}(\signal{x}_n)-\mathrm{diag}(\signal{x}^\star)\| B = 0.
	\end{multline*}
	We can now apply Fact~\ref{fact.continuity} in the Supplemental Material to deduce:
	\begin{multline*}
		\lim_{n\to\infty}\ngn(\signal{x}_n)=\lim_{n\to\infty}\rho(\mathrm{diag}(\signal{x}_n)\tnorm) \\= \rho(\mathrm{diag}(\signal{x}^\star)\tnorm)=\ngn(\signal{x}^\star),
	\end{multline*}
	which shows that $\ngn$ is continuous on $\real^N_+$.
\end{proof}

The function $\ngn:\real_+^N\to\real_+$ defined in Proposition~\ref{proposition.qnorm} satisfies almost all the properties of a norm on the nonnegative cone $\real_{+}^N$. The only exception is the triangle inequality. This missing property is automatically satisfied on the nonnegative cone if $\ngn$ is known to be convex only in the positive cone $\real_{++}^N$, in which case we can extend $\ngn$ to the entire domain $\real^N$ while keeping the properties of a norm:

\begin{proposition}
	\label{prop.norm}
	Let $\|\cdot\|$ be a monotone norm on $\real^N$, $T:\real_+^N\to\real_{++}^N$ and $\tnorm:\real_+^N\to\real_{+}^N$ satisfy hypothesis ($T$,$\tnorm$), and  $\ngn:\real_+^N\to\real_+$ be as defined in Proposition~\ref{proposition.qnorm}. Assume that $\ngn$ is a convex function in $\real_{++}^N$. Then $\ngn$ is also convex in $\real_+^N$. Furthermore, for $C_\tnorm:=\{\signal{x}\in\real^N_{+}~\mid \ngn(\signal{x})\le 1\}$ and $S_\tnorm=\mathrm{conv}(C_\tnorm\cup -C_\tnorm)$, where $-C_\tnorm:=\{\signal{x}\in\real^N\mid -\signal{x}\in C_\tnorm\}$ and $\mathrm{conv}(\cdot)$ is the convex hull of a set, the Minkowski or gauge functional of $S_\tnorm$, defined by
	\begin{align}
		\label{eq.minkowski}
		(\forall\signal{x}\in\real^N)~\|\signal{x}\|_\tnorm := \inf\{\gamma>0\mid (1/\gamma)\signal{x}~\in S_\tnorm\},
	\end{align}
	is a monotone norm satisfying $(\forall\signal{x}\in\real_{+}^N)~\|\signal{x}\|_\tnorm=\ngn(\signal{x}).$
\end{proposition}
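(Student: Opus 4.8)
The plan is to first upgrade convexity of $\ngn$ from the open cone $\real_{++}^N$ to the closed cone $\real_+^N$, then to use the resulting structure to verify that $\ngn$ obeys all norm axioms on the cone, and finally to show that the Minkowski functional of $S_\tnorm$ is the desired extension.

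First I would extend convexity to $\real_+^N$ by a continuity argument. Given $\signal{x},\signal{y}\in\real_+^N$ and $\alpha\in\,]0,1[$, I would perturb into the interior via $\signal{x}_\epsilon:=\signal{x}+\epsilon\signal{1}$ and $\signal{y}_\epsilon:=\signal{y}+\epsilon\signal{1}$ with $\signal{1}:=(1,\ldots,1)\in\real_{++}^N$, apply the assumed convexity on $\real_{++}^N$ to get $\ngn(\alpha\signal{x}_\epsilon+(1-\alpha)\signal{y}_\epsilon)\le\alpha\ngn(\signal{x}_\epsilon)+(1-\alpha)\ngn(\signal{y}_\epsilon)$, and then let $\epsilon\downarrow 0$, invoking continuity of $\ngn$ (Proposition~\ref{proposition.qnorm}(iv)) on both sides. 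Combining this convexity with positive homogeneity (Proposition~\ref{proposition.qnorm}(i)) yields subadditivity on the cone, since $\ngn(\signal{x}+\signal{y})=2\,\ngn(\tfrac{1}{2}\signal{x}+\tfrac{1}{2}\signal{y})\le\ngn(\signal{x})+\ngn(\signal{y})$. Together with Proposition~\ref{proposition.qnorm}(i)-(ii), this shows $\ngn$ satisfies all norm axioms when restricted to $\real_+^N$; in particular $C_\tnorm$, being a sublevel set of the now-convex $\ngn$, is convex.

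Next I would establish that the gauge $\|\cdot\|_\tnorm$ is a norm on $\real^N$ through the standard characterization: the Minkowski functional of a convex, symmetric, bounded set containing the origin in its interior is a norm. The set $S_\tnorm$ is convex by construction and symmetric because $C_\tnorm\cup -C_\tnorm$ is symmetric. For boundedness, continuity and positivity of $\ngn$ on the compact set $\{\signal{x}\in\real_+^N\mid \|\signal{x}\|=1\}$ give a positive minimum $m>0$, so $\ngn(\signal{x})\ge m\|\signal{x}\|$ on the cone by homogeneity, forcing $C_\tnorm$, and hence $S_\tnorm$, to be bounded. For the interior condition, each rescaled basis vector $\signal{e}_i/\ngn(\signal{e}_i)$ lies in $C_\tnorm$ and its negative in $-C_\tnorm$ (using positivity $\ngn(\signal{e}_i)>0$ from Proposition~\ref{proposition.qnorm}(ii)), so $S_\tnorm$ contains a full-dimensional cross-polytope around the origin.

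The crux, and the step I expect to be the main obstacle, is proving $S_\tnorm\cap\real_+^N=C_\tnorm$, which is what pins the gauge to $\ngn$ on the cone. The inclusion $C_\tnorm\subseteq S_\tnorm\cap\real_+^N$ is immediate. For the reverse, I would write any $\signal{z}\in S_\tnorm\cap\real_+^N$ as $\signal{z}=t\signal{a}-(1-t)\signal{b}$ with $\signal{a},\signal{b}\in C_\tnorm$ and $t\in[0,1]$, which is legitimate since the convex hull of the two convex sets $C_\tnorm$ and $-C_\tnorm$ consists exactly of such combinations. Because $\signal{z}\ge\signal{0}$ and $(1-t)\signal{b}\ge\signal{0}$, I would deduce the coordinatewise bound $\signal{0}\le\signal{z}\le t\signal{a}$, whence monotonicity of $\ngn$ (Proposition~\ref{proposition.qnorm}(iii)) and homogeneity give $\ngn(\signal{z})\le t\,\ngn(\signal{a})\le t\le 1$, so $\signal{z}\in C_\tnorm$. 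With this identity established, for $\signal{x}\in\real_+^N$ the condition $(1/\gamma)\signal{x}\in S_\tnorm$ reduces to $(1/\gamma)\signal{x}\in C_\tnorm$, i.e. $\ngn(\signal{x})\le\gamma$, so the infimum in \refeq{eq.minkowski} equals $\ngn(\signal{x})$. This proves $\|\signal{x}\|_\tnorm=\ngn(\signal{x})$ on the cone, and monotonicity of $\|\cdot\|_\tnorm$ then follows directly from Proposition~\ref{proposition.qnorm}(iii).
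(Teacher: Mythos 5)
Your proof is correct. The first half---upgrading convexity from $\real_{++}^N$ to $\real_+^N$ by continuity and extracting the triangle inequality from midpoint convexity plus positive homogeneity---is essentially the paper's argument (you use the specific perturbation $\signal{x}+\epsilon\signal{1}$ where the paper uses arbitrary sequences from the interior; this is immaterial). The second half takes a genuinely different route. The paper black-boxes the hard part: it checks that $C_\tnorm$ is compact, convex, downward comprehensive with nonempty interior, and then cites \cite[Proposition~2]{renatomaxmin} to conclude that the gauge of $S_\tnorm$ is a monotone norm whose unit ball meets $\real_+^N$ exactly in $C_\tnorm$; the pointwise identity $\|\cdot\|_\tnorm=\ngn$ is then recovered by a two-sided normalization argument comparing the two descriptions of $C_\tnorm$ at level one. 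You instead verify the norm axioms for the gauge from scratch (symmetry, convexity, boundedness via the positive minimum of $\ngn$ on the $\|\cdot\|$-unit sphere, origin in the interior via the rescaled basis vectors) and prove the crucial set identity $S_\tnorm\cap\real_+^N=C_\tnorm$ directly, by decomposing any $\signal{z}\in\mathrm{conv}(C_\tnorm\cup -C_\tnorm)$ as $t\signal{a}-(1-t)\signal{b}$ and using $\signal{0}\le\signal{z}\le t\signal{a}$ together with monotonicity and homogeneity of $\ngn$. This makes the proof self-contained where the paper's is not, and it isolates exactly where downward comprehensiveness (Proposition~\ref{proposition.qnorm}(iii)) enters; the price is a longer verification of facts that the cited proposition supplies for free. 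Only cosmetic points remain: handle the degenerate case $t=0$ separately (there $\signal{z}=\signal{0}\in C_\tnorm$ trivially, since homogeneity is stated only for positive scalars), and note that the two-point representation of $\mathrm{conv}(C_\tnorm\cup -C_\tnorm)$ is legitimate precisely because both $C_\tnorm$ and $-C_\tnorm$ are convex.
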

\begin{proof}
	Here we use extensively the properties of $\ngn$ in Proposition~\ref{proposition.qnorm}(i)-(iv). The proof that convexity on the interior of the domain of an everywhere continuous function implies convexity on the whole domain is standard, but we show the main steps for completeness. Assume that $(\signal{x}_n)_{n\in\Natural}$ and $(\signal{y}_n)_{n\in\Natural}$ are two sequences in $\real_{++}^N$ converging in the normed vector space $(\real^N,\|\cdot\|)$ to, respectively,  the vectors  $\signal{x}\in \real_{+}^N$ and $\signal{y}\in \real_{+}^N$.  Given $\alpha\in~]0,1[$, convexity of $\ngn$ on $\real_{++}^N$ implies that $(\forall n\in\Natural)~\ngn(\alpha\signal{x}_n+(1-\alpha)\signal{y}_n)\le \alpha \ngn(\signal{x}_n)+(1-\alpha)\ngn(\signal{y}_n)$. Passing to the limit as $n\to\infty$ and recalling that $\ngn$ is continuous on $\real^N_+$, we deduce $\ngn(\alpha\signal{x}+(1-\alpha)\signal{y})\le \alpha \ngn(\signal{x})+(1-\alpha)\ngn(\signal{y})$, which establishes convexity of $\ngn$ on $\real_{+}^N$. In particular, setting $\alpha$ to $\alpha=1/2$ and using homogeneity of $\ngn$, we deduce $\alpha\ngn(\signal{x}+\signal{y})\le \alpha \ngn(\signal{x})+\alpha \ngn(\signal{y})$, which proves that $\ngn$ satisfies the triangle inequality on $\real_+^N$. Therefore, $\ngn$ satisfies all properties of a norm when restricted to vectors in the closed cone $\real_+^N$, so the set $C_\tnorm:=\{\signal{x}\in\real^N_{+}~\mid \ngn(\signal{x})\le 1\}$ is a compact convex set with nonempty interior. Furthermore, $C_\tnorm$ is downward comprehensive on $\real_{+}^N$ as a consequence of Proposition~\ref{proposition.qnorm}(iii). All the properties of $C_\tnorm$ we have proved enable us to apply \cite[Proposition~2]{renatomaxmin} to conclude that the Minkowski functional in \refeq{eq.minkowski} is indeed a monotone norm, and it satisfies 
	\begin{align}
		\label{eq.setc}
		C_\tnorm=\{\signal{x}\in\real_+^N\mid \|\signal{x}\|_\tnorm\le 1\}=\{\signal{x}\in\real_+^N\mid \ngn(\signal{x})\le 1\};
	\end{align}
	i.e., the norm $\|\cdot\|_\tnorm$ and the function $\ngn$ generate the same level set at level $1$. We now prove that $\|\cdot\|_\tnorm$ and $\ngn$ are pointwise equal on $\real_+^N$. Proposition \ref{proposition.qnorm}(ii) shows that $0=\|\signal{0}\|_\tnorm=g_\tnorm(\signal{0})$, so we only need to consider the case  $\signal{x}\in\real_+^N\backslash\{\signal{0}\}$. We deduce from $\ngn(\signal{x})>0$ (Proposition~\ref{proposition.qnorm}(ii)) and positive homogeneity of $\ngn$ that $1= \ngn((1/\ngn(\signal{x}))\signal{x})$, so  $(1/\ngn(\signal{x}))\signal{x}\in C_\tnorm$ from the second characterization of the set $C_\tnorm$ in \refeq{eq.setc}. From the first characterization of $C_\tnorm$ in \refeq{eq.setc} and $(1/\ngn(\signal{x}))\signal{x}\in C_\tnorm$, we have $\|(1/\ngn(\signal{x}))\signal{x}\|_\tnorm\le 1$, and thus 
	\begin{align}
		\label{eq.ineq1}
	\|\signal{x}\|_\tnorm \le \ngn(\signal{x}).
	\end{align}
	 Similarly, from the first characterization of $C_\tnorm$ in \refeq{eq.setc}, we verify that $(1/\|\signal{x}\|_\tnorm)\signal{x} \in C_\tnorm$. Therefore, using the second characterization of $C_\tnorm$ in \refeq{eq.setc}, we deduce $g_\tnorm((1/\|\signal{x}\|_\tnorm)\signal{x})\le 1$, which implies \begin{align}
	 	\label{eq.ineq2}
	 	\|\signal{x}\|_\tnorm \ge \ngn(\signal{x})
	 \end{align}
	 because $g_\tnorm$ is positively homogeneous. Combining \refeq{eq.ineq1} and \refeq{eq.ineq2}, we conclude the proof that $(\forall\signal{x}\in\real_+^N)~\|\signal{x}\|_\tnorm=g_\tnorm(\signal{x})$.
\end{proof}

The result in Proposition \ref{prop.norm} provides a foundation for a natural definition of a norm as follows:

\begin{definition}
	\label{def.norm_inducing} 
	Let $T:\real_+^N\to\real_{++}^N$ and $\tnorm:\real_+^N\to\real_{+}^N$ satisfy hypothesis ($T$,$\tnorm$) for a given monotone norm $\|\cdot\|$. If the function $\real_+^N\to\real_+:\signal{x}\mapsto \rho(\mathrm{diag}(\signal{x})\tnorm)$ is convex on $\real_{++}^N$, we say that $\tnorm$ is norm-inducing, and we call the norm $\|\cdot\|_\tnorm$ in \refeq{eq.minkowski} the norm induced by $\tnorm$.
\end{definition}

Our primary focus is on constructing norm-inducing general interference mappings derived from an explicitly given standard interference mapping and a monotone norm. However, in certain instances, it is straightforward to extract a suitable standard interference mapping and a monotone norm from a general interference mapping $G$ to verify that $G$ is norm-inducing, as illustrated in the next example, which is common in wireless networks -- see, for example, the set in \refeq{eq.slinear} in Sect.~\ref{sect.nopower}. 

\begin{example}
	\label{example.M}
	Consider the general interference mapping \linebreak[4] $G:\real_+^N\to\real_{+}^N:\signal{x}\mapsto \signal{Mx}$, where $\signal{M}\in\real_{++}^{N\times N}$ is an inverse Z-matrix. Then $G$ is norm-inducing, and the induced norm satisfies $(\forall\signal{x}\in\real_+^N)~\|\signal{x}\|_G=\rho(\diagvec{x}\signal{M})$.
\end{example}
\begin{proof}		
	Standard Perron-Frobenius theory shows that the spectral radius $\rho(G)$ is positive, and there exists a positive left eigenvector $\signal{l}\in\real^N_{++}$ and a positive right eigenvector $\signal{r}\in\real_{++}^N$ associated with $\rho(G)$. Choose $\epsilon>0$ such that $\signal{M}_1 = \signal{M}-\epsilon~\signal{lr}^t\in\real_{+}^{N\times N}$ (a scalar $\epsilon$ with this property always exists because $\signal{M}$ is positive). The mapping $T:\real_+^N\to\real_{++}^N:\signal{x}\mapsto \signal{M}_1\signal{x}+\epsilon\signal{l}$ is a standard interference mapping because it is positive and concave \cite[Proposition~1]{cavalcante2016}. Using the monotone norm $(\forall\signal{x}\in\real^N)~\|\signal{x}\|_\signal{r}:=\signal{r}^t|\signal{x}|$,  we have $(\forall\signal{x}\in\real_+^N)~T_{\|\cdot\|_\signal{r}}(\signal{x})= G(\signal{x})=\signal{Mx}=\signal{M}_1\signal{x}+\epsilon\signal{l}\|\signal{x}\|_\signal{r}$, so $T$ and $T_{\|\cdot\|_\signal{r}}$ satisfy hypothesis $(T,T_{\|\cdot\|_\signal{r}})$. Therefore,  we are in the setting of Definition~\ref{def.norm_inducing} because $g_G:\real_{+}^N\to\real_+^N:\signal{x}\mapsto \rho(\mathrm{diag}(\signal{x})G)=\rho(\mathrm{diag}(\signal{x})\signal{M})$ is convex on $\real_+^N$ as a direct consequence of Fact~\ref{fact.friedland} in the Supplemental Material, so $G$ is norm-inducing as claimed, and the induced norm satisfies $(\forall\signal{x}\in\real_+^N)\|\signal{x}\|_G=\rho(\diagvec{x}\signal{M})$.
\end{proof}

\begin{remark}
	\label{remark.triangular}
	The function $G:\real^N_+\to\real_+:\signal{x}\mapsto \rho(\mathrm{diag}(\signal{x})\signal{M})$ is also convex if $\signal{M}\in\real_{+}^{N\times N}$ is either (i) irreducible and symmetric positive semidefinite \cite[Theorem~1.51]{slawomir09}, or (ii) triangular (upper or lower). In the triangular case, for a given matrix $\signal{M}\in\real^{N\times N}_+$, the function $G:\real_{+}^N\to\real_+:\signal{x}=(x_1,\dots,x_N)\mapsto\rho(\mathrm{diag}(\signal{x})\signal{M})=\max_{n\in\{1,\ldots,N\}} x_n \signal{M}_{n,n}$ (NOTE: $\signal{M}_{n,n}$ denotes the $n$th diagonal entry of $\signal{M}$) is the maximum of convex (linear) functions, hence convex.  Note that nonnegative matrices in the classes above need not be inverse Z-matrices or norm-inducing. We do not pursue these cases here, but they may be useful for establishing convexity of utility regions beyond the scope of the present study.
\end{remark}

The next result establishes a sufficient condition ensuring that a mapping $\tnorm$ of the type considered in Example~\ref{example.illustration} is norm-inducing.

\begin{example}
	\label{example.induced_norm}
	Let $\signal{M}\in\real_{+}^{N\times N}$ be a nonnegative matrix, $\signal{u}\in \real_{++}^N$ a positive vector, and $\|\cdot\|$ a polyhedral monotone norm. As a result, there exist $K\in\Natural$ vectors  $\signal{a}_1,\ldots,\signal{a}_K$ in $\real^N_{+}\backslash\{\signal{0}\}$ satisfying $(\forall\signal{x}\in\real^N) \|\signal{x}\|=\max_{n\in\{1,\ldots,K\}}\signal{a}_n^t|\signal{x}|$. Define $T:\real_+^N\to\real_{++}^N:\signal{x}\mapsto \signal{Mx}+\signal{u}.$
		The mapping $T$ is positive and concave, so it is also a standard interference mapping \cite[Proposition~1]{cavalcante2016}. By construction, $T$ and $\tnorm:\real_+^N\to\real_{+}^N:\signal{x}\mapsto \signal{Mx}+\signal{u}\|\signal{x}\|$ satisfy hypothesis ($T$,$\tnorm$).  If the matrices $(\signal{M}+\signal{u}\signal{a}_n^t)_{n\in\{1,\ldots,K\}}$ are inverse Z-matrices, then $\tnorm$ is norm-inducing, and the induced norm $\|\cdot\|_\tnorm$ satisfies 
	\begin{multline}
		\label{eq.Gnorm}
		(\forall\signal{x}\in\real_+^N)~\|\signal{x}\|_\tnorm=\rho(\mathrm{diag}(\signal{x})\tnorm)\\=\max_{n\in\{1,\ldots,K\}}\rho(\mathrm{diag}(\signal{x})(\signal{M}+\signal{u}\signal{a}^t_n)).
	\end{multline}

\end{example}
\begin{proof}
	Fix $\signal{x}\in\real^N_{++}$, and, for each $n\in\{1,\ldots,K\}$, define  $\signal{M}_n:=\signal{M}+\signal{u}\signal{a}^t_n$.  Let $\signal{z}$ be the eigenvector corresponding to the spectral radius $\rho(\mathrm{diag}(\signal{x})\tnorm)$, and note that $\signal{z}\in\real_{++}^N$ and $\rho(\mathrm{diag}(\signal{x})\tnorm)>0$ as a consequence of Proposition~\ref{proposition.cond_eig}. Therefore, we have
	\begin{multline*}
		(\forall n\in\{1,\ldots,K\})~
		\rho(\mathrm{diag}(\signal{x})\tnorm) \signal{z} = \mathrm{diag}(\signal{x})\tnorm(\signal{z}) \\ = \mathrm{diag}(\signal{x})\signal{M}_{n^\star}\signal{z}  
		\ge \mathrm{diag}(\signal{x})\signal{M}_n\signal{z},
	\end{multline*}
	where $n^\star\in\argmax_{n\in\{1,\ldots,K\}}\signal{a}_n^t\signal{z}$.
	We conclude from the above inequalities and Fact~\ref{fact.ineq_spec}(i)-(ii) in the Supplemental Material that 
	\begin{align}
		\label{eq.ineqa}
		\ngn(\signal{x})=\rho(\mathrm{diag}(\signal{x})\tnorm) = \max_{n\in\{1,\ldots,K\}} \rho(\mathrm{diag}(\signal{x})\signal{M}_n).
	\end{align}
	For each $n\in\{1,\ldots,K\}$, the matrix $\signal{M}_n$ is an inverse Z-matrix by assumption, so Fact~\ref{fact.friedland} in the Supplemental Material and \refeq{eq.ineqa} show that the function $\ngn$, restricted to $\real_{++}^N$, is the pointwise maximum of convex functions. As a result, $\ngn$ is convex on $\real_{++}^N$ \cite[Proposition~8.16]{baus17}. Therefore, in light of Proposition \ref{prop.norm}, $\tnorm$ is norm-inducing and the scalar $\ngn(\signal{x})$ in $\refeq{eq.ineqa}$ corresponds to the value of the induced norm $\|\cdot\|_\tnorm$ evaluated at the chosen $\signal{x}\in\real_+^N$.
\end{proof}

\section{Applications to wireless networks}
\label{sect.model}

\subsection{Interference models}
We now focus on the SINR as the utility function for the uplink of a wireless system with $N \in \mathbb{N}$ single-antenna users, which are components of the set $\mathcal{N}:=\{1,\ldots,N\}$. However, some results we discuss below can also be extended to the downlink using duality arguments \cite{boche2002,schubert2024duality,miretti2024ul}\cite[Sect.~4.3.2]{massivemimobook}. To avoid technology specific models, we refrain from specifying the number of access points, the number of antennas at these access points, or the particular wireless architecture in use (e.g., cellular massive MIMO, cell-less, etc.). For many of these technologies, given the uplink power allocation $\signal{p}^\star=(p_1^\star,\ldots,p_N^\star)\in\real_{++}^N$, the SINR $s_n$ for each user $n\in\mathcal{N}$ can be expressed in a unified form as follows:
\begin{align}\label{eq.sinr}
	(\forall n\in\mathcal{N})~ s_n=\sup_{\signal{y}\in\mathcal{Y}_n}\dfrac{b_n(\signal{y})~p_n^\star}{\signal{c}_n(\signal{y})^t\signal{p}^\star+\sigma_n(\signal{y})},
\end{align}
where $\mathcal{Y}_n$ is typically a set of random or deterministic vectors and, for every $\signal{y}\in\mathcal{Y}_n$, the parameters $b_n(\signal{y}) \in \real_{++}$, $\signal{c}_n(\signal{y}) \in \real_{+}^{N}$, and $\sigma_n(\signal{y})\in\real_{++}$ are system specific -- see Examples~\ref{example.cell-less} and \ref{example.cellular} below. We assume that, for each $n\in\mathcal{N}$, the function $\mathcal{Y}_n\to\real_{++}:\signal{y}\mapsto\sigma_n(\signal{y})$ is bounded away from zero (i.e., $(\exists \delta>0)(\forall n\in\mathcal{N})(\forall \signal{y}\in\mathcal{Y}_n)~\sigma_n(\signal{y})\ge\delta$), which is a mild assumption that typically holds in networks with noise. To avoid technical digressions of little practical relevance, we also assume that the supremum in \refeq{eq.sinr} is always attained and, for every $n\in\mathcal{N}$, we have  $s_n\in\real_{++}$ if $\signal{p}^\star\in\real_{++}^N$. As Example~\ref{example.cell-less} below shows, in current studies,  $\mathcal{Y}_n$ is often (though not always -- see Example~\ref{example.cellular}) the set of available beamformers for user $n\in\mathcal{N}$, and the remaining parameters, which are functions of the beamformers, are constructed based on information about the channels, thermal noise, and the quality of channel estimates. Particular cases of this model can be found in, for instance,  cell-less networks  \cite[Ch.~3, Ch.~7]{demir2021}\cite{miretti2022closed,miretti2025two,schubert2024duality,miretti2024sum,chafaa2025}, massive MIMO systems  \cite[Ch.5.3.2]{marzetta16}, \cite[Theorem7.1]{massivemimobook}, and conventional cellular networks \cite[Ch.~4]{slawomir09}.  The next two examples illustrate how to specialize \refeq{eq.sinr} to model modern cell-less networks and classical cellular networks with optimal cell-site selection.  %To be concrete, we map the above notation a particular model in cell-less systems.

\begin{example}
	\label{example.cell-less}
	(Two-time scale receive beamforming in cell-less systems \cite{miretti2025two,miretti2022closed,chafaa2025})
	In the uplink of a cell-less network with single-antenna users, let the channel of user $n\in\mathcal{N}$ across all $M\in\Natural$ access points, each equipped with $L\in\Natural$ antennas, be a $\mathbb{C}^{ML}$-valued random vector $\signal{h}_n$. Assume a fixed (digital) receive beamforming strategy; i.e., for each user $n\in\mathcal{N}$, the set $\mathcal{Y}_n$ is a singleton containing a $\mathbb{C}^{ML}$-valued random vector $\signal{v}_n$. For instance, if the access points use as their local receive beamformers the  traditional maximum ratio combining strategy, the random vector $\signal{v}_n$ can be interpreted as an estimate of the instantaneous channel $\signal{h}_n$. Nevertheless, we emphasize that other strategies (such as zero-forcing receive beamforming) are allowed, as long as the receive beamformer remains unchanged if the power allocation changes. This assumption is adopted in \cite{miretti2022closed,chafaa2025} and also in the numerous references cited therein. To compute the well-known use-and-then-forget (UatF) ergodic capacity inner bound \cite{marzetta16} for user $n\in\mathcal{N}$, we use the following SINR expression for a given power allocation $\signal{p}^\star=(p_1^\star,\ldots,p_N^\star)\in\real_{++}^N$:
	\begin{align}
		\label{eq.sinrcl}
		s_n=\dfrac{p^\star_n |E[\signal{h}_n^H\signal{v}_n]|^2}{p^\star_n V(\signal{h}^H_n\signal{v}_n)+\sum \limits_{k\in\mathcal{N}\backslash\{n\}}p^\star_k~E[|\signal{h}_k^H\signal{v}_n|^2]+E[\|\signal{v}_n\|_2^2]},
	\end{align}
	where $E(\cdot)$ and $V(\cdot)$ denote, respectively, the expectation and the variance of a random variable. Using the notation in \refeq{eq.sinr}, for $n=1$, we verify that $\mathcal{Y}_1=\{\signal{v}_1\}$, $b_1(\signal{v}_1)=|E[\signal{h}_1^H\signal{v}_1]|^2$, $\sigma_1(\signal{v}_1)=E[\|\signal{v}_1\|_2^2]$, and $$\signal{c}_1(\signal{v}_1) = (V(\signal{h}^H_1\signal{v}_1), E[|\signal{h}_2^H\signal{v}_1|^2],\ldots,E[|\signal{h}_N^H\signal{v}_1|^2]).$$ An essential component to note in this model is the self-interference term $V(\signal{h}^H_1\signal{v}_1)$ (which is also related to the notion of \emph{beamforming gain uncertainty} in the massive {MIMO} literature \cite{marzetta16}). This term is crucial in our analysis of the shape of the SINR and achievable rate regions, as explained later in Sect.~\ref{sect.impact}. 
\end{example}

\begin{example}
	\label{example.cellular}
	(Cellular systems with optimal cell-site selection \cite{hanly1995algorithm}) 	Consider the uplink of a cellular network with $Y\in\Natural$ base stations indexed by the set $\mathcal{Y}=\{1,\ldots,Y\}$, and let the  $N\in\Natural$ users be indexed by the set $\mathcal{N}=\{1,\ldots,N\}$. Let $b_n(y)\in\real_{++}$ denote the channel gain between user $n\in\mathcal{N}$ and base station $y\in\mathcal{Y}_n$, where $\mathcal{Y}_n\subset\mathcal{Y}$ represents the set of base stations to which user $n\in\mathcal{N}$ is allowed to connect. For simplicity, assume that the noise level $\sigma\in\real_{++}$ is identical for all links.
	
	Define the interference channel vector for user $n\in\mathcal{N}$ at base station $y\in\mathcal{Y}_n$ by  $\signal{c}_n(y)=[c_{1,n}(y),\cdots,c_{N,n}(y)]^t\in\real_{++}^N$, where 
	$c_{k,n}(y)=b_k(y)$ if $k\neq n$ and $0$ otherwise. If $\signal{p}^\star=[p^\star_1,\ldots,p^\star_N]^t\in\real_{++}^N$ is the uplink transmit power vector for all users, and each user $n\in\mathcal{N}$ connects to its best serving base station, then the resulting SINR of user $n\in\mathcal{N}$ is given by
	\begin{align*}
		s_n=\max_{y\in\mathcal{Y}_n}\dfrac{b_n(y)p_n^\star}{\signal{c}_n(y)^t\signal{p}^\star+\sigma},
	\end{align*}
	which is a particular case of \refeq{eq.sinr}.
\end{example}

In the remainder of this study, Example~\ref{example.cell-less} will be used extensively to illustrate the main implications of our results. Nevertheless, a few remarks regarding channel acquisition are in order.

\begin{remark}
As noted in Example~\ref{example.cell-less}, the beamformers in \refeq{eq.sinrcl} may be designed based on \emph{channel estimates} rather than the true channels, so we naturally account for channel estimation errors. In the sequel, however, we assume that the channel estimation quality -- and the resources allocated to channel estimation -- are independent of the particular set of users being served. Relaxing this assumption would substantially complicate the analysis. In particular, regardless of whether the achievable-rate region is convex or nonconvex, determining whether users benefit from being grouped would then require an explicit model of the estimation procedure (e.g., how pilot overhead and the resulting estimation accuracy change when users are partitioned into smaller or larger groups).
\end{remark}

We can also establish a connection between \refeq{eq.sinr} and the discussion in Sect.~\ref{sect.overview}. More specifically, the function $t_n$ in \refeq{eq.utility} for user $n\in\mathcal{N}$ is given by 
\begin{align}
	\label{eq.fn}
	t_n:\real_+^N\to\real_{++}:\signal{p}\mapsto \inf_{\signal{y}\in\mathcal{Y}_n} \dfrac{\signal{c}_n(\signal{y})^t\signal{p}+\sigma_n(\signal{y})}{b_n(\signal{y})}
\end{align}
(which we assume to be continuous) and $s_n$ in \refeq{eq.sinr} is the value taken by the utility $U_n:\real_+^N\to\real_+$ in \refeq{eq.utility} of user $n\in\mathcal{N}$ for a given power vector $\signal{p}^\star\in\real^N_{++}$; i.e., $U_n(\signal{p}^\star)=p^\star_n/t_n(\signal{p}^\star)=s_n$. The function $t_n$ in \refeq{eq.fn} is a standard interference function for every $n\in\mathcal{N}$ because, for each $\signal{y}\in\mathcal{Y}_n$, the function  $\signal{p}\mapsto (\signal{c}_n(\signal{y})^t\signal{p}+\sigma_n(\signal{y}))/b_n(\signal{y})$ is affine (hence concave), so $t_n$ is concave because it is the pointwise infimum of concave functions \cite[Proposition~8.16]{baus17}.\footnote{For every $n\in\mathcal{N}$, the restriction  $t_n:\real^N_{++}\to\real_+$ of $t_n$ to the domain $\real_{++}^N$ admits a continuous, concavity-preserving extension to the domain $\real_+^N$ \cite[Theorem 10.3]{rock70} that is also bounded away from zero. Therefore, we can always redefine $t_n$ to be this continuous extension and assume continuity of $t_n$ on $\real_{+}^N$.} Furthermore, $t_n$ is positive everywhere on the nonnegative cone $\real_{+}^N$ by assumption (which in particular implies that $(\forall n\in\mathcal{N})\inf_{\signal{y}\in\mathcal{Y}_n}\sigma_n(\signal{y})/b_n(\signal{y})>0$), so, being positive and concave, $t_n$ is a standard interference function \cite[Proposition~1]{cavalcante2016}.

To impose power constraints in the system model, we define the set of valid power levels by: 
\begin{align}
	\label{eq.C}
	\mathcal{P}:=\{\signal{p}\in\real_+^N\mid \|\signal{p}\|\le 1\},
\end{align}
where $\|\cdot\|$ is a monotone norm. Power constraints of this type are fairly general because any set with nonempty interior that is compact, convex, and downward comprehensive in $\real_+^N$ can be expressed as in \refeq{eq.C} for some monotone norm $\|\cdot\|$ \cite[Proposition~2]{renatomaxmin}. Readers may notice that set $\mathcal{P}$ in \refeq{eq.C} differs from the set $\mathcal{P}$ already introduced in Sect.~\ref{sect.overview} because the scalar $p_\mathrm{max}\in\real_{++}$ is omitted in \refeq{eq.C}. However, this omission does not result in any loss of generality. Given any monotone norm $\|\cdot\|_\star$, the constraint $\|\signal{p}\|_\star\le p_\mathrm{max}$ can be equivalently expressed as the set $\mathcal{P}$ in \refeq{eq.C} by considering the scaled monotone norm $(\forall\signal{p}\in\real^N)~ \|\signal{p}\|:=(1/p_\mathrm{max})\|\signal{p}\|_\star$, which satisfies $(\forall\signal{p}\in\real^N)~ \|\signal{p}\|\le 1 \Leftrightarrow \|\signal{p}\|_\star \le {p_\mathrm{max}}$.

To rewrite \refeq{eq.sinr} in a form suitable for analysis via fixed point theory, we isolate $p_n^\star\in\real_{++}$ in the numerator of the fraction inside the supremum for each $n\in\mathcal{N}$. Using the coordinate functions defined in \refeq{eq.fn}, we obtain  $(\forall n\in\mathcal{N})~ p_n^\star=s_n t_n(\signal{p}^\star),$ which, in vector form, can be equivalently expressed as  
\begin{align}
	\label{eq.matrix_form}
	\signal{p}^\star=\diagvec{s}T(\signal{p}^\star),
\end{align}
where $T:\real_+^N\to\real_{++}^N:\signal{p}\mapsto \left[t_1(\signal{p}),\cdots, t_N(\signal{p})\right]^t$ and  $\signal{s}=(s_1,\ldots,s_N)\in\real_{++}^N$ is the vector of SINR values.

To relate the above model to the results derived in Sect.~\ref{sect.convexity} -- and, in particular, Definition~\ref{def.compatible} -- consider the monotone norm $\|\cdot\|$ in the definition of the set $\mathcal{P}$ in \refeq{eq.C}, and define
 \begin{align}
	\label{eq.G} \tnorm:\real_{+}^N\to\real_+^N:\signal{p}\mapsto [\tn{1}(\signal{p}),\ldots,\tn{N}(\signal{p})]^t,
\end{align} where $\tn{n}$ for $n\in\mathcal{N}$ is given by
\begin{align*}
	\tn{n}:\real_+^N\to\real_{++}:\signal{p}\mapsto \inf_{\signal{y}\in\mathcal{Y}_n}\left( \dfrac{\signal{c}_n(\signal{y})^t\signal{p}}{b_n(\signal{y})} + \dfrac{\sigma_n(\signal{y})}{b_n(\signal{y})} \|\signal{p}\| \right)
\end{align*}
(which we also assume to be continuous). We verify that $T$ and $\tnorm$ defined as above satisfy hypothesis $(T,\tnorm)$ for the monotone norm $\|\cdot\|$ used in \refeq{eq.C}.

\subsection{Achievable rate and SINR regions in the absence of power constraints}
\label{sect.nopower}
To study the feasible SINR and rate regions of the wireless model introduced above, let us start by ignoring the presence of the power constraint $\mathcal{P}$ to relax key assumptions made in \cite[Ch.~5]{slawomir09}, particularly the linearity of the interference models and the symmetry of the interference matrices (cf. \cite[Ch.~5.4.4]{slawomir09}). 

The equality in \refeq{eq.matrix_form} proves that, if the SINR requirements described by the positive vector $\signal{s}$ are achievable by the power vector $\signal{p}^\star\in\real_{++}^N$, then
\begin{align}
	\label{eq.fixed_point}
	\signal{p}^\star\in\mathrm{Fix}(\diagvec{s}T):=\{\signal{p}\in\real_{++}^N~|~\diagvec{s}T(\signal{p})=\signal{p} \},
\end{align}
where $T:\real_{+}^N\to\real_{++}^N$ is the standard interference mapping defined below \refeq{eq.matrix_form}. Therefore, the set $\mathrm{Fix}(\diagvec{s}T)$ is either a singleton or the empty set \cite{yates95}, and we verify that the SINR requirements represented by the vector $\signal{s}$ are achievable by a power vector $\signal{p}^\star\in\real_{++}^N$ if and only if $\mathrm{Fix}(\diagvec{s}T)\neq\emptyset$, and this power vector is unique if it exists \cite{yates95}. In compact form, the set of positive SINR values that the network can provide to the users can be expressed as $\mathcal{S}=\{\signal{s} \in\real_{++}^N \mid \mathrm{Fix}(\mathrm{diag}(\signal{s})T)\neq\emptyset\}.$  This approach based on fixed point theory has been introduced to the wireless literature in \cite{yates95}, and it has since seen significant advancements in both theory and practice \cite{nuzman07,martin11,cavalcante2019,piotrowski2022,miretti2024ul}.

In the absence of power constraints, we can use the concept of asymptotic mappings (Definition~\ref{def.am}) to obtain a second characterization of the set $\mathcal{S}$ of positive, achievable SINR levels $\signal{s}=(s_1,\ldots,s_N)\in\real_{++}^N$. More precisely, given $\signal{s}\in\real^N_{++}$, continuity of the mapping $\signal{x}\mapsto \mathrm{diag}(\signal{s})\signal{x}$ yields 
\begin{multline*}
	(\forall\signal{s}\in\real_{++}^N)(\forall\signal{p}\in\real^N)\\ (\mathrm{diag}(\signal{s})T)_\infty (\signal{p})=
	\lim_{h\to\infty}(1/h)\mathrm{diag}(\signal{s})T(h\signal{p}) \\= \mathrm{diag}(\signal{s})\lim_{h\to\infty}(1/h)T(h\signal{p})  = \mathrm{diag}(\signal{s})T_\infty(\signal{p}),
\end{multline*}
which, together with Fact~\ref{fact.tinf} in the Supplemental Material, leads to the following alternative characterization of $\mathcal{S}$:
\begin{multline}
	\label{eq.s_no_constraints}
	\mathcal{S}=\{\signal{s} \in\real_{++}^N \mid \mathrm{Fix}(\mathrm{diag}(\signal{s})T)\neq\emptyset\} \\ = \{\signal{s} \in\real_{++}^N \mid \rho(\mathrm{diag}(\signal{s}){T}_\infty)<1\}.
\end{multline}

\begin{example}
	\label{example.linear}
	Consider the model and notation in Example~\ref{example.cell-less}. To simplify notation, we drop the arguments $(\signal{v}_n)_{n\in\mathcal{N}}$ because the sets $(\mathcal{Y}_n=\{\signal{v}_n\})_{n\in\mathcal{N}}$ are assumed to be singletons. For instance, $\signal{c}_n(\signal{v}_n)$, $b_n(\signal{v}_n)$, and $\sigma_n(\signal{v}_n)$ for each $n\in\mathcal{N}$ is simply denoted by $\signal{c}_n$, $b_n$, and $\sigma_n$, respectively. As a result, by defining $(b_1,\ldots,b_N) =: \vec{b} \in \real_{++}^{N}$, $(\vec{c}_1,\ldots,\vec{c}_N) =: \vec{C}\in \real_{+}^{N\times N}$, and $(\sigma_1,\ldots, \sigma_N)=: \vec{\sigma}\in\real_{++}^N$, the mapping $T:\real^N_+\to\real_{++}^N$ in the fixed point equation $\signal{p}^\star=\diagvec{s}T(\signal{p}^\star)$ in \refeq{eq.matrix_form} simplifies to 
	$T:\real_+^N\to\real^N_{++}:\signal{p}\mapsto \signal{Mp}+\signal{u},$ where $\signal{M} := \mathrm{diag}(\vec{b})^{-1}\vec{C}^t$ is referred to as the \emph{interference matrix},  and $\vec{u} := \mathrm{diag}(\signal{b})^{-1}\signal{\sigma}.$ For this mapping $T$, we readily verify that its corresponding asymptotic mapping $T_\infty$ (see Definition~\ref{def.am}) is given by $T_\infty:\real^N_+\to\real_+^N:\signal{p}\mapsto \signal{Mp},$ and thus the achievable SINR region $\mathcal{S}$ in \refeq{eq.s_no_constraints} simplifies to 
	\begin{align}
		\label{eq.slinear}
		\mathcal{S}:= \{\signal{s} \in\real_{++}^N \mid \rho(\mathrm{diag}(\signal{s})\signal{M})<1\},
	\end{align}
	which is the characterization of the achievable SINR region discussed in \cite[Ch.~5]{slawomir09} for another particular case of \refeq{eq.sinr}. Therefore, the characterization of the set $\mathcal{S}$ in \refeq{eq.s_no_constraints} -- which does not assume linear asymptotic mappings $T_\infty$ and includes the linear mapping $T_\infty$ in this Example as a special case -- can be seen as a generalization of the results in \cite[Ch.~5]{slawomir09} to possibly nonlinear systems. In particular, \refeq{eq.s_no_constraints} covers the model in Example~\ref{example.cellular}, and it can also accommodate optimal beamforming strategies, such as those studied in \cite{miretti2025two}. 
\end{example}

For the affine interference model in the previous example, we can also obtain an easily verifiable, sufficient condition for convexity of $\mathcal{S}$ in \refeq{eq.slinear}:\footnote{The results in Remark~\ref{remark.triangular} may be useful for establishing convexity for interference models beyond those considered in this study.}  
\begin{remark}
	\label{remark.nopc}
	If the interference matrix $\signal{M}$ in Example~\ref{example.linear} is an inverse Z-matrix, then, by Fact~\ref{fact.friedland} in the Supplemental Material, the function $\mathcal{D}_{+}^{N\times\N}\to\real_+:\signal{D}\mapsto\rho(\signal{DM})$ is convex. Therefore, the SINR region $\mathcal{S}$ in \refeq{eq.slinear} is convex because this set is the intersection of the level set of a convex function with the convex cone $\real_{++}^N$. 
\end{remark}
More generally, if we use the standard nonlinear bijection $\mathrm{rate} = \log(1+\mathrm{SINR})$ (NOTE: here $\log$ represents the natural logarithm, so the rate is expressed in nats per symbol), in which case we say that we are \emph{treating interference as noise (single-user decoder)}, we can express the relationship between rates and SINR of all users as $\signal{D}=e^\signal{R}-\signal{I}$, where $\signal{D}=\mathrm{diag}(s_1,\ldots,s_N)$ is the diagonal matrix of SINR values, $\signal{R}=\mathrm{diag}(r_1,\ldots,r_N)$ is the corresponding diagonal matrix of rates, and $\signal{I}$ denotes the identity matrix. Therefore, recalling \refeq{eq.s_no_constraints}, we can express the set of achievable rates $\mathcal{R}$ as the following set
\begin{align}
	\label{eq.R}
	\mathcal{R}:=\{\signal{r}\in \real_{+}^N \mid \rho((e^\diagvec{r}-\signal{I})T_\infty)<1\}.
\end{align} 
(Here we consider all achievable rates in the full domain $\real^N_+$ to avoid technical digressions in the proof of the next proposition). To establish convexity of $\mathcal{R}$, we use the next result:

\begin{proposition}
	\label{proposition.Rconvexity}
	Assume that the function $\real_+^N\to\real_+:\signal{s}\mapsto\rho(\diagvec{s}T_\infty)$ is convex for a given standard interference mapping $T:\real_+^N\to\real_{++}^N$. Then the function $$f:\real_+^N\to\real_+:\signal{r}\mapsto \rho((e^\diagvec{r}-\signal{I})T_\infty)$$ is also convex.
\end{proposition}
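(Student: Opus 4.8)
The plan is to read off $f$ as a composition. Writing $\phi:\real_+^N\to\real_+^N$ for the coordinatewise map $\phi(\signal{r}):=(e^{r_1}-1,\ldots,e^{r_N}-1)$, we have $e^{\diagvec{r}}-\signal{I}=\diagvec{\phi(\signal{r})}$, so that $f(\signal{r})=\rho(\diagvec{\phi(\signal{r})}T_\infty)=g(\phi(\signal{r}))$, where $g:\real_+^N\to\real_+:\signal{s}\mapsto\rho(\diagvec{s}T_\infty)$ is the function assumed convex in the hypothesis. Since $e^{r_n}-1\ge 0$ whenever $r_n\ge 0$, the map $\phi$ sends $\real_+^N$ into $\real_+^N$, so the composition is well defined on the cone where $g$ is convex. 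The proof then reduces to the textbook convex-composition rule, for which I need two ingredients: (a) $g$ is nondecreasing in each coordinate of its argument, and (b) each component $\phi_n$ is convex.

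Ingredient (b) is immediate: $\phi_n(\signal{r})=e^{r_n}-1$ depends only on $r_n$ and $t\mapsto e^{t}-1$ is convex, so $\phi_n$ is convex on $\real_+^N$. For ingredient (a), I would argue exactly as in the proof of Proposition~\ref{proposition.qnorm}(iii): if $\signal{x}\le\signal{y}$, then coordinatewise nonnegativity of the general interference mapping $T_\infty$ gives $(\forall\signal{u}\in\real_+^N)~\diagvec{x}T_\infty(\signal{u})\le\diagvec{y}T_\infty(\signal{u})$, and evaluating both sides at the eigenvector of $\diagvec{x}T_\infty$ associated with $\rho(\diagvec{x}T_\infty)$ yields, via the spectral-radius monotonicity inequalities of Nussbaum (the Fact cited in that proof), $\rho(\diagvec{x}T_\infty)\le\rho(\diagvec{y}T_\infty)$; i.e.\ $g(\signal{x})\le g(\signal{y})$. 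Thus $g$ is monotone.

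With (a) and (b) established, convexity of $f$ follows from the standard composition chain: for $\alpha\in\,]0,1[$ and $\signal{r},\signal{r}'\in\real_+^N$, componentwise convexity of $\phi$ gives $\phi(\alpha\signal{r}+(1-\alpha)\signal{r}')\le\alpha\phi(\signal{r})+(1-\alpha)\phi(\signal{r}')$; applying the monotonicity of $g$ and then its convexity yields $f(\alpha\signal{r}+(1-\alpha)\signal{r}')=g(\phi(\alpha\signal{r}+(1-\alpha)\signal{r}'))\le g(\alpha\phi(\signal{r})+(1-\alpha)\phi(\signal{r}'))\le\alpha g(\phi(\signal{r}))+(1-\alpha)g(\phi(\signal{r}'))=\alpha f(\signal{r})+(1-\alpha)f(\signal{r}')$, which is the claim.

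The only nonroutine step is (a), the monotonicity of the outer function $g$. The subtlety worth flagging is that $T_\infty$ is merely a general interference mapping and need not satisfy hypothesis ($T$,$\tnorm$), so Proposition~\ref{proposition.qnorm} cannot be invoked verbatim; however, the proof of its part (iii) uses only coordinatewise nonnegativity of the mapping together with the monotonicity of the nonlinear spectral radius, both of which hold for $T_\infty$, so that argument transfers directly. Everything else is the classical rule that a convex, coordinatewise nondecreasing function composed with convex components is convex.
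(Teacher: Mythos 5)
Your proof is correct and follows essentially the same route as the paper: both establish monotonicity of $\signal{s}\mapsto\rho(\diagvec{s}T_\infty)$ via the Nussbaum spectral-radius comparison inequalities and then apply the monotone-convex-outer/convex-inner composition rule with $r_n\mapsto e^{r_n}-1$. The only difference is cosmetic -- the paper first extends the outer function to all of $\real^N$ via the operator $[\cdot]_+$ so it can cite the textbook composition theorem verbatim, whereas you stay on $\real_+^N$ and write out the two-inequality chain directly, which is equally valid since $\phi$ maps $\real_+^N$ into $\real_+^N$.
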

\begin{proof}
	Recall that $[\cdot]_+$ is the operator that sets the negative components of a vector or matrix to zero. Consider the function
	\begin{align*}
		\begin{array}{rcl}
			h:\real^N&\to&\real_+ \\ (x_1,\ldots,x_N) &\mapsto& \rho([\mathrm{diag}(x_1,\ldots,x_N)]_+T_\infty),
		\end{array}
	\end{align*}
	The results in Fact~\ref{fact.ineq_spec} in the Supplemental Material imply that, for any two general interference mappings $G_1:\real_+^N\to\real_+^N$ and $G_2:\real_+^N\to\real_+^N$ such that $(\forall\signal{x}\in\real_+^N)~G_1(\signal{x})\le G_2(\signal{x})$, we have $\rho(G_1)\le\rho(G_2)$. Therefore, $h$ is monotone in the sense that $(\forall\signal{x}\in\real^N) (\forall\signal{y}\in\real^N)~ \signal{x} \le \signal{y} \Rightarrow h(\signal{x})\le h(\signal{y})$. To prove that $h$ is also convex, choose two vectors $(\signal{x}_1,\signal{x}_2)\in\real^{N}\times \real^{N}$ and a scalar $\alpha\in~]0,1[$ arbitrarily. We establish convexity of $h$ as follows:
	\begin{multline}
		\label{eq.chainineq}
		h(\alpha\signal{x}_1+(1-\alpha)\signal{x}_2) \le h(\alpha [\signal{x}_1]_+ + (1-\alpha)[\signal{x}_2]_+) \\ \le \alpha h([\signal{x}_1]_+) + (1-\alpha) h([\signal{x}_2]_+)  = \alpha h(\signal{x}_1) + (1-\alpha) h(\signal{x}_2),
	\end{multline}
	where the first inequality is a consequence of monotonicity of $h$, and the second inequality follows from the assumption of convexity of $\signal{s}\mapsto\rho(\diagvec{s}T_\infty)$ on $\real_{+}^N$. Therefore, we deduce from \cite[p. 86]{boyd}\cite[Theorem 2.1.3(vi)]{zalinescu2002} that the function $h\circ g:\real^N\to\real_+$, where $$g:\real^N\to\real^N:(x_1,\ldots,x_N)\mapsto (e^{x_1}-1,\ldots,e^{x_N}-1),$$ is also convex  because $h\circ g$ is the composition of the monotonic convex function $h$ and the mapping $g$, which is convex in each coordinate. Convexity of $h\circ g$ implies convexity of the function $f$ because $f$ is the restriction of the convex function $h\circ g$ to the nonnegative orthant.
\end{proof}

The discussion above establishes that convexity of the mapping $\real_{++}^N\to\real_+:\signal{s}\mapsto\rho(\mathrm{diag}(\signal{s})T_\infty)$ implies the convexity of the set $\mathcal{S}$ in \refeq{eq.s_no_constraints} because $\mathcal{S}$ is the level set of a convex function. As an immediate implication of this result and Proposition~\ref{proposition.Rconvexity}, we deduce that the set $\mathcal{R}$ in \refeq{eq.R} is also convex. Moreover, Remark~\ref{remark.nopc} provides a simple criterion for establishing convexity in special cases. In more detail, for affine interference models such as those described in Example~\ref{example.linear}, the sets $\mathcal{S}$ and $\mathcal{R}$ are convex whenever the interference matrix $\signal{M}$ is an inverse Z-matrix. For later reference, we formalize these results below.

\begin{Cor}
	\label{cor.setr}
	If the asymptotic mapping $T_\infty:\real_+^N\to\real_+^N$ is such that the function $\real_{++}^N\to\real_+:\signal{s}\mapsto\rho(\mathrm{diag}(\signal{s})T_\infty)$ is convex, then the sets $\mathcal{S}$ and $\mathcal{R}$ in \refeq{eq.s_no_constraints} and \refeq{eq.R}, respectively, are convex. In particular, for the interference model in Example~\ref{example.linear}, the sets $\mathcal{S}$ and $\mathcal{R}$ are convex if the interference matrix $\signal{M}$ in the definition of the linear asymptotic mapping  $T_\infty:\real_{+}^N\to\real_+^N:\signal{p}\mapsto\signal{Mp}$ is an inverse Z-matrix.
\end{Cor}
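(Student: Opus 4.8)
The plan is to realize both $\mathcal{S}$ and $\mathcal{R}$ as strict sublevel sets, at level $1$, of convex functions of their defining variables, and then to dispatch the linear specialization by verifying the convexity hypothesis through Friedland's theorem. For $\mathcal{S}$ the argument is immediate from the hypothesis, whereas for $\mathcal{R}$ there is one genuine subtlety: the corollary only postulates convexity of $\signal{s}\mapsto\rho(\diagvec{s}T_\infty)$ on the open cone $\real_{++}^N$, but the route to $\mathcal{R}$ via Proposition~\ref{proposition.Rconvexity} requires convexity on the closed cone $\real_+^N$, since $\mathcal{R}$ meets the boundary (e.g.\ $\signal{0}\in\mathcal{R}$).

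First I would treat $\mathcal{S}$. By \refeq{eq.s_no_constraints}, $\mathcal{S}=\{\signal{s}\in\real_{++}^N\mid \rho(\diagvec{s}T_\infty)<1\}$ is the strict sublevel set of $\signal{s}\mapsto\rho(\diagvec{s}T_\infty)$ over the convex set $\real_{++}^N$; since this function is convex on $\real_{++}^N$ by hypothesis and strict sublevel sets of convex functions are convex, $\mathcal{S}$ is convex. Next I would close the boundary gap needed for $\mathcal{R}$. Because $T_\infty$ is a general interference mapping it is continuous, and the operator-norm estimate together with \cite[Theorem~3.1(2)]{nussbaum1986convexity} (Fact~\ref{fact.continuity}) used in the proof of Proposition~\ref{proposition.qnorm}(iv) shows that $\signal{s}\mapsto\rho(\diagvec{s}T_\infty)$ is continuous on all of $\real_+^N$. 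A continuous function convex on the interior of a convex domain is convex on the whole domain -- precisely the extension step carried out at the start of the proof of Proposition~\ref{prop.norm} -- so convexity on $\real_{++}^N$ upgrades to convexity on $\real_+^N$.

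With convexity on $\real_+^N$ secured, I would invoke Proposition~\ref{proposition.Rconvexity} to conclude that $f:\signal{r}\mapsto\rho((e^{\diagvec{r}}-\signal{I})T_\infty)$ is convex on $\real_+^N$. Then $\mathcal{R}=\{\signal{r}\in\real_+^N\mid f(\signal{r})<1\}$ from \refeq{eq.R} is again a strict sublevel set of a convex function, hence convex. Finally, for the linear model of Example~\ref{example.linear} with $T_\infty:\signal{x}\mapsto\signal{Mx}$ and $\signal{M}$ an inverse Z-matrix, the nonlinear spectral radius coincides with the classical one, and $\diagvec{s}T_\infty$ corresponds to $\signal{DM}$ with $\signal{D}=\diagvec{s}\in\mathcal{D}_+^{N\times N}$; Friedland's result \cite[Theorem~4.3]{friedland81} (Fact~\ref{fact.friedland}), already recorded in Remark~\ref{remark.nopc}, states that $\signal{D}\mapsto\rho(\signal{DM})$ is convex, and composing with the affine map $\signal{s}\mapsto\diagvec{s}$ yields convexity of $\signal{s}\mapsto\rho(\diagvec{s}\signal{M})$, which is exactly the hypothesis for this $T_\infty$. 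The general case then applies to give convexity of $\mathcal{S}$ in \refeq{eq.slinear} and of $\mathcal{R}$. I expect the only step demanding care to be the continuity-based extension from $\real_{++}^N$ to $\real_+^N$; the remaining arguments are routine sublevel-set reasoning resting on results already in hand.
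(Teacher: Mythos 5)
Your proposal is correct and follows essentially the same route as the paper: the paper obtains $\mathcal{S}$ as a sublevel set of the assumed-convex function, gets $\mathcal{R}$ via Proposition~\ref{proposition.Rconvexity}, and handles the linear case through Remark~\ref{remark.nopc} (Friedland's theorem). Your explicit continuity-based extension of convexity from $\real_{++}^N$ to $\real_+^N$ -- needed because Proposition~\ref{proposition.Rconvexity} assumes convexity on the closed cone while the corollary only hypothesizes it on the open cone -- is a legitimate gap-filling step that the paper treats as immediate, and it goes through exactly as you describe using the argument of Proposition~\ref{proposition.qnorm}(iv) and the opening of the proof of Proposition~\ref{prop.norm}.
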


\subsection{Achievable rate and SINR regions with power constraints}
\label{sect.withpower}

In the previous section, we discussed existing results proving that there exists at most one power vector $\signal{p}^\star\in\real_{++}^N$ able to provide a given SINR requirement $\signal{s}=(s_1,\ldots,s_N)\in\real_{++}^N$, and, if it exists, this power vector is precisely the fixed point of the mapping $\diagvec{s}T$ -- see \refeq{eq.fixed_point}. Therefore, to describe the set $\mathcal{S}_\mathcal{P}$  of positive achievable SINR values with power constraints, we have to remove from the set $\mathcal{S}$ in \refeq{eq.s_no_constraints} all points $\signal{s}$ corresponding to feasible SINR levels obtained with a power vector $\signal{p}^\star\notin \mathcal{P}$, where $\mathcal{P}$ is defined in \refeq{eq.C}. More precisely, verifying whether $\mathrm{Fix}(\diagvec{s}T)$ is nonempty is insufficient. If $\mathrm{Fix}(\diagvec{s}T)\neq\emptyset$, we also need to determine whether the unique fixed point $\signal{p}^\star\in \mathrm{Fix}(\diagvec{s}T)$, which is the only power allocation able to provide the SINR requirement represented by the vector $\signal{s}$, satisfies $\signal{p}^\star\in\mathcal{P}$. From Fact~\ref{fact.feasibility} in the Supplemental Material we know that $\signal{p}^\star\in \mathrm{Fix}(\diagvec{s}T)\neq\emptyset$ and $\signal{p}^\star\in\mathcal{P}$ if and only if the conditional eigenvalue $\lambda$ associated with $\diagvec{s}T$ and $\|\cdot\|$ satisfies $\lambda\le 1$, and Proposition~\ref{proposition.cond_eig} shows that $\lambda\le 1$ if and only if $\rho(\diagvec{s}\tnorm)\le 1$, where $\tnorm$ is the general interference mapping defined in \refeq{eq.G}. As a result, the set of positive SINR values achievable under the power constraint $\mathcal{P}$, excluding the points on the boundary  $\real_+^N\backslash\real^N_{++}$ of the nonnegative cone $\real_+^N$, can be expressed in the following compact form:
\begin{multline}
	\label{eq.sp}
	\mathcal{S}_\mathcal{P}:=\{\signal{s}\in\real_{++}^N\mid \rho(\diagvec{s}\tnorm)\le 1\}.
\end{multline}
In turn, the set of achievable rates obtained by treating interference as noise is given by:
\begin{multline}
	\label{eq.rp}
	\mathcal{R}_\mathcal{P}:=\{\signal{r}\in\real_{++}^N\mid \rho((e^\diagvec{r}-\signal{I})\tnorm)\le 1\}.
\end{multline}

We can now obtain a parametrization of the weak Pareto boundary of the SINR and rate regions in terms of the spectral radius of the general interference mapping $\tnorm$ in \refeq{eq.G}: 

\begin{proposition}
	\label{proposition.wp}
	Let $\signal{s}\in\real_{++}^N$ be a given SINR vector. Then $\signal{s}$ lies on the weak Pareto boundary of $\mathcal{S}_\mathcal{P}$ in \refeq{eq.sp} if and only if $\rho(\diagvec{s}\tnorm) = 1$. Similarly, assume that $\signal{r}\in\real_{++}^N$ is a given rate vector. Then $\signal{r}$ lies on the weak Pareto boundary of $\mathcal{R}_\mathcal{P}$ in \refeq{eq.rp} if and only if $\rho((e^{\diagvec{r}}-\signal{I})\tnorm)= 1$.
\end{proposition}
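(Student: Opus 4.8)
The plan is to recognize that both regions are sublevel sets of the single function $\ngn:\real_+^N\to\real_+:\signal{x}\mapsto\rho(\diagvec{x}\tnorm)$ analyzed in Proposition~\ref{proposition.qnorm}, and to exploit the three properties established there: positive homogeneity (i), strict positivity away from the origin (ii), and monotonicity (iii). Indeed, $\mathcal{S}_\mathcal{P}$ in \refeq{eq.sp} is precisely $\{\signal{s}\in\real_{++}^N\mid \ngn(\signal{s})\le 1\}$, so the SINR claim reduces to showing that a point $\signal{s}\in\real_{++}^N$ is weak Pareto for $\mathcal{S}_\mathcal{P}$ if and only if $\ngn(\signal{s})=1$.

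For the SINR statement I would argue both implications by contradiction through the scaling behaviour of $\ngn$. For the forward direction, suppose $\signal{s}$ lies on the weak Pareto boundary; then $\signal{s}\in\mathcal{S}_\mathcal{P}$ forces $\ngn(\signal{s})\le 1$, and if $\ngn(\signal{s})<1$ I set $\beta:=1/\ngn(\signal{s})>1$, which is legitimate since $\ngn(\signal{s})>0$ by Proposition~\ref{proposition.qnorm}(ii). Positive homogeneity gives $\ngn(\beta\signal{s})=1$, so $\beta\signal{s}\in\mathcal{S}_\mathcal{P}$, while $\beta\signal{s}=\signal{s}+(\beta-1)\signal{s}$ with $(\beta-1)\signal{s}\in\real_{++}^N$, contradicting weak Pareto optimality; hence $\ngn(\signal{s})=1$. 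For the converse, suppose $\ngn(\signal{s})=1$ but some $\signal{y}\in\real_{++}^N$ satisfies $\signal{s}+\signal{y}\in\mathcal{S}_\mathcal{P}$. Choosing $\alpha:=1+\min_{n}(y_n/s_n)>1$ yields $\alpha\signal{s}\le\signal{s}+\signal{y}$ coordinatewise, so monotonicity and homogeneity give $\ngn(\signal{s}+\signal{y})\ge\ngn(\alpha\signal{s})=\alpha\,\ngn(\signal{s})=\alpha>1$, contradicting $\ngn(\signal{s}+\signal{y})\le 1$. Thus $\signal{s}$ is weak Pareto.

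For the rate statement I would transfer everything through the coordinatewise bijection $\Phi:\real_{++}^N\to\real_{++}^N:\signal{r}\mapsto(e^{r_1}-1,\ldots,e^{r_N}-1)$, whose associated diagonal matrix is exactly $e^{\diagvec{r}}-\signal{I}$, so that $\ngn(\Phi(\signal{r}))=\rho((e^{\diagvec{r}}-\signal{I})\tnorm)$ and, comparing \refeq{eq.rp} with \refeq{eq.sp}, $\Phi(\mathcal{R}_\mathcal{P})=\mathcal{S}_\mathcal{P}$. Since each coordinate map $r_n\mapsto e^{r_n}-1$ is a strictly increasing bijection of $\real_{++}$, both $\Phi$ and $\Phi^{-1}$ preserve strict coordinatewise inequalities; hence $\signal{r}$ admits a strictly dominating point in $\mathcal{R}_\mathcal{P}$ if and only if $\Phi(\signal{r})$ admits one in $\mathcal{S}_\mathcal{P}$. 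Consequently $\signal{r}$ is weak Pareto for $\mathcal{R}_\mathcal{P}$ exactly when $\Phi(\signal{r})$ is weak Pareto for $\mathcal{S}_\mathcal{P}$, and the already-proved SINR characterization then gives $\rho((e^{\diagvec{r}}-\signal{I})\tnorm)=\ngn(\Phi(\signal{r}))=1$.

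The main obstacle is the converse of the SINR case: the function $\ngn$ is only weakly monotone, so a strict coordinatewise increase $\signal{s}<\signal{s}+\signal{y}$ does not by itself force a strict increase of $\ngn$. The device that overcomes this is the uniform scaling factor $\alpha=1+\min_n(y_n/s_n)>1$, which converts weak monotonicity together with exact positive homogeneity into the required strict inequality $\ngn(\signal{s}+\signal{y})>1$; the strict positivity of every $s_n$, guaranteed because we work in $\real_{++}^N$, is precisely what makes this $\alpha$ well defined and strictly larger than one.
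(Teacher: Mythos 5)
Your proof is correct, and the overall strategy matches the paper's: both treat $\mathcal{S}_\mathcal{P}$ as the sublevel set $\{\signal{s}\in\real_{++}^N\mid \ngn(\signal{s})\le 1\}$, prove the forward implication by scaling $\signal{s}$ up by $1/\ngn(\signal{s})>1$ (identical to the paper's argument), and dispatch the rate region through the bijection $\mathrm{rate}=\log(1+\mathrm{SINR})$, which the paper merely asserts and you spell out. The one place where you take a genuinely different route is the converse for $\mathcal{S}_\mathcal{P}$: the paper goes back to the nonlinear Perron--Frobenius machinery, taking the positive eigenvector $\signal{x}$ of $\diagvec{s}\tnorm$ (via Proposition~\ref{proposition.cond_eig}), observing $\signal{x}<\mathrm{diag}(\signal{s}+\signal{\epsilon})\tnorm(\signal{x})$, and invoking the Collatz--Wielandt-type bound of \cite[Lemmas 3.2 and 3.3]{nussbaum1986convexity} to force $\rho(\mathrm{diag}(\signal{s}+\signal{\epsilon})\tnorm)>1$. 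You instead stay entirely at the level of the already-established properties of $\ngn$ in Proposition~\ref{proposition.qnorm}, converting the strict coordinatewise increase into the uniform scaling $\alpha=1+\min_n(y_n/s_n)>1$ and combining weak monotonicity with exact homogeneity. Your device is arguably cleaner and more self-contained (it never touches eigenvectors), while the paper's version makes the mechanism behind the strict inequality explicit at the level of the underlying spectral-radius comparison lemma; both rest on the same facts, since Proposition~\ref{proposition.qnorm}(iii) is itself proved from those lemmas.
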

\begin{proof}
	In this proof, we only consider the set $\mathcal{S}_\mathcal{P}$ because the proof for the set $\mathcal{R}_\mathcal{P}$ follows from the standard bijection $\text{rate}=\log(1+\text{SINR})$. 	From Proposition~\ref{proposition.qnorm}(ii) we know that $0<\alpha:=\rho(\diagvec{s}\tnorm)$ because $\signal{s}\neq\signal{0}$. For the sake of contradiction, assume that $\alpha < 1$ and $\signal{s}$ is on the weak Pareto boundary of $\mathcal{S}_\mathcal{P}$. Positive homogeneity of the spectral radius yields $1 = (1/\alpha) \rho(\mathrm{diag}(\signal{s})\tnorm) = \rho(\mathrm{diag}((1/\alpha) \signal{s})\tnorm)$, which shows that $(1/\alpha) \signal{s}$ is a feasible SINR vector, which contradicts that $\signal{s}$ is on the weak Pareto boundary because $(1/\alpha) \signal{s}>\signal{s}$; i.e., we can increase all components of $\signal{s}$ while remaining in the set $\mathcal{S}_\mathcal{P}$. To prove the converse, assume that $\rho(\mathrm{diag}(\signal{s})\tnorm)=1$ for a given $\signal{s}\in\real_{++}^N$, and let $\signal{\epsilon}\in\real_{++}^N$ be arbitrary. Proposition~\ref{proposition.cond_eig} shows that the eigenvector $\signal{x}$ of the mapping $\signal{u}\mapsto \diagvec{s}\tnorm(\signal{u})$ associated with the spectral radius $\rho(\mathrm{diag}(\signal{s})\tnorm)$ is positive. Therefore, from the definition of (nonlinear) eigenvectors and eigenvalues, we have 
	\begin{align*}
		\signal{0}<\signal{x} = \mathrm{diag}(\signal{s})\tnorm(\signal{x}) \Leftrightarrow \signal{0}<[\mathrm{diag}(\signal{s})]^{-1}\signal{x} = \tnorm(\signal{x}),
	\end{align*}
	and thus $\signal{0}<\signal{x} < \mathrm{diag}(\signal{s}+\signal{\epsilon})\tnorm(\signal{x})$. This last inequality proves that there exists a scalar $\lambda > 1$ satisfying $\lambda\signal{x} \le \mathrm{diag}(\signal{s}+\signal{\epsilon})\tnorm(\signal{x})$, which implies $\rho(\mathrm{diag}(\signal{s}+\signal{\epsilon})\tnorm)>1$ as a consequence of Fact~\ref{fact.ineq_spec}(iii) in the Supplemental Material. Therefore, we have  $\signal{s}+\signal{\epsilon}\notin \mathcal{S}_\mathcal{P}$ for every $\signal{\epsilon}\in\real_{++}^N$, so $\signal{s}$ lies on the weak Pareto boundary of $\mathcal{S}_\mathcal{P}$.
\end{proof}

If $\tnorm$ is norm-inducing in the sense of Definition \ref{def.norm_inducing}, with induced norm denoted by $\|\cdot\|_\tnorm$, which satisfies $(\forall\signal{s}\in\real_{+}^N)~\|\signal{s}\|_\tnorm= \rho(\diagvec{s}\tnorm)$ as shown in Proposition~\ref{prop.norm}, the closure $\bsp$ of the set $\mathcal{S}_\mathcal{P}$ is given by 
\begin{align}
	\label{eq.csp}
	\bsp:=\{\signal{s}\in\real_{+}^N\mid \|\signal{s}\|_\tnorm\le 1\},
\end{align}
which is a compact convex set. In this case, the closure $\brp$ of the set $\mathcal{R}_\mathcal{P}$ in \refeq{eq.rp} can be equivalently expressed as:
\begin{multline}
	\label{eq.brp}
	\brp:=\{\signal{r}=(r_1,\ldots,r_N)\in\real_{+}^N\mid \\ \|(e^{r_1}-1,\ldots,e^{r_N}-1)\|_\tnorm\le 1\},
\end{multline}
which is also compact and convex as proved below: 
\begin{proposition}
	\label{proposition.norm_inducing_G}
	Assume that $\tnorm$ in \refeq{eq.G} is norm-inducing. Then the set $\brp$ of achievable rates is a compact convex set.
\end{proposition}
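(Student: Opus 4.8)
The plan is to prove convexity and compactness separately, in each case reducing the claim to objects already treated in the excerpt. I would begin from the hypothesis that $\tnorm$ is norm inducing: by Definition~\ref{def.norm_inducing}(ii), the function $\ngn:\signal{x}\mapsto\rho(\diagvec{x}\tnorm)$ is convex on $\real_{++}^N$, so Proposition~\ref{prop.norm} makes it convex on all of $\real_+^N$, where it coincides with the induced monotone norm; that is, $(\forall\signal{s}\in\real_+^N)~\|\signal{s}\|_\tnorm=\ngn(\signal{s})=\rho(\diagvec{s}\tnorm)$. This identity is the bridge between the norm-based description \refeq{eq.brp} and the spectral-radius description used elsewhere.

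For convexity, I would first recast $\brp$ as a sublevel set. Since every $\signal{r}\in\real_+^N$ yields $(e^{r_1}-1,\ldots,e^{r_N}-1)\in\real_+^N$, the defining inequality of \refeq{eq.brp} reads $\|(e^{r_1}-1,\ldots,e^{r_N}-1)\|_\tnorm=\rho((e^\diagvec{r}-\signal{I})\tnorm)\le 1$, so $\brp=\{\signal{r}\in\real_+^N\mid \rho((e^\diagvec{r}-\signal{I})\tnorm)\le 1\}$. It then suffices to show that $f:\signal{r}\mapsto\rho((e^\diagvec{r}-\signal{I})\tnorm)$ is convex, since $\brp$ is the intersection of $\{f\le 1\}$ with the convex cone $\real_+^N$. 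The crucial point is that the proof of Proposition~\ref{proposition.Rconvexity} uses only that $T_\infty$ is a general interference mapping together with convexity of $\signal{s}\mapsto\rho(\diagvec{s}T_\infty)$; because $\tnorm$ is itself a general interference mapping and $\signal{s}\mapsto\rho(\diagvec{s}\tnorm)=\ngn(\signal{s})$ is convex by the first paragraph, the very same monotone-convex composition argument applies with $T_\infty$ replaced by $\tnorm$, giving convexity of $f$ and hence of $\brp$.

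For compactness, I would exploit the homeomorphism $\phi:\real_+^N\to\real_+^N:\signal{r}\mapsto(e^{r_1}-1,\ldots,e^{r_N}-1)$, whose inverse $\psi:\signal{s}\mapsto(\log(1+s_1),\ldots,\log(1+s_N))$ is continuous. The sublevel-set identity above shows $\brp=\phi^{-1}(\bsp)=\psi(\bsp)$, where $\bsp$ of \refeq{eq.csp} is compact (it is the unit ball of the monotone norm $\|\cdot\|_\tnorm$ intersected with the closed cone $\real_+^N$, as already noted in the text). As the continuous image of a compact set under $\psi$, the set $\brp$ is compact, completing the argument.

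The main obstacle is conceptual rather than computational: recognizing that Proposition~\ref{proposition.Rconvexity}, though phrased for the asymptotic mapping $T_\infty$, is really a statement about any general interference mapping whose diagonally scaled spectral radius is convex. Making this reuse explicit is what lets convexity follow without re-deriving the composition argument. Once the norm identity of Proposition~\ref{prop.norm} is secured, the remaining steps are routine.
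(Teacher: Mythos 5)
Your proof is correct and follows essentially the same route as the paper: both establish convexity by reusing the monotone--convex composition argument of Proposition~\ref{proposition.Rconvexity} with $T_\infty$ replaced by $\tnorm$ (the paper phrases the auxiliary function as $\signal{x}\mapsto\|[\signal{x}]_+\|_\tnorm$, which by Proposition~\ref{prop.norm} is the same function you use). The only divergence is in the compactness step, where the paper argues closedness plus boundedness directly (boundedness of the $\|\cdot\|_\tnorm$-ball together with $e^x\to\infty$), whereas you obtain $\brp=\psi(\bsp)$ as the continuous image of the compact set $\bsp$; both arguments are valid and equally routine.
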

\begin{proof}
	We follow the same procedure as in the proof of Proposition~\ref{proposition.Rconvexity}, highlighting only the necessary modifications for the sake of brevity. Consider the function
	\begin{align*}
		\begin{array}{rcl}
			h:\real^N&\to&\real_+ \\ (x_1,\ldots,x_N) &\mapsto& \|[(x_1,\ldots,x_N)]_+\|_\tnorm.
		\end{array}
	\end{align*}
	
	Choose two vectors $(\signal{x}_1,\signal{x}_2)\in\real^N\times \real^N$ and a scalar $\alpha\in~]0,1[$, and recall that the norm $\|\cdot\|_\tnorm$ is monotone as consequence of Proposition~\ref{proposition.qnorm}(iii) and Proposition~\ref{prop.norm}. Convexity and monotonicity of $\|\cdot\|_\tnorm$ imply convexity of $h$:
	\begin{multline*}
		h(\alpha\signal{x}_1+(1-\alpha)\signal{x}_2) \le h(\alpha [\signal{x}_1]_+ + (1-\alpha)[\signal{x}_2]_+) \\ \le \alpha h([\signal{x}_1]_+) + (1-\alpha) h([\signal{x}_2]_+) \\ = \alpha h(\signal{x}_1) + (1-\alpha) h(\signal{x}_2).
	\end{multline*}
	Now, follow the exact same steps below \refeq{eq.chainineq} to conclude that the set $\brp$ is convex. Boundedness of $\signal{r}^\star\in\brp$ follows immediately from boundedness of closed balls $\{\signal{x}\in\real^N \mid \|\signal{x}\|_\tnorm\le 1\}$, nonnegativity of vectors in $\brp$, and the fact that $x\to\infty$ implies $e^x\to\infty$. Therefore, being closed and bounded in the standard (finite-dimensional) Euclidean space, $\brp$ is compact in this space.
\end{proof}

Below we show a sufficient condition for $\tnorm$ to be norm-inducing, so that the sets $\mathcal{S}_\mathcal{P}$ and $\mathcal{R}_\mathcal{P}$ are convex. This condition is immediate from Example~\ref{example.induced_norm} and Proposition~\ref{proposition.norm_inducing_G}.

\begin{Cor}
	\label{cor.withpc}
	For the wireless model in Examples~\ref{example.cell-less} and ~\ref{example.linear}, let $\signal{a}_1,\ldots,\signal{a}_K$ be vectors in $\real_{+}^N\backslash\{\signal{0}\}$ such that the polyhedral monotone norm $\|\cdot\|$ in the definition of the power constraint $\mathcal{P}$ in \refeq{eq.C} can be expressed as $\|\signal{x}\|=\max\limits_{n\in\{1,\ldots,K\}}\signal{a}_n^t|\signal{x}|$. Assume that $\signal{M}+\signal{u}\signal{a}_n^t$ is an inverse Z-matrix for each $n\in\{1,\ldots,K\}$. Then the sets $\mathcal{S}_\mathcal{P}$ and $\mathcal{R}_\mathcal{P}$ are convex. The closure of these sets, shown in \refeq{eq.csp} and \refeq{eq.brp}, uses the norm $\|\cdot\|_\tnorm$ with the property in \refeq{eq.Gnorm}. 
\end{Cor}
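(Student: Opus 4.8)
The plan is to recognize that this corollary merely assembles the two preceding results, so the main task is to verify that the hypotheses align and then chain the conclusions. First I would note that for the affine model of Example~\ref{example.linear} the standard interference mapping is $T:\signal{p}\mapsto\signal{Mp}+\signal{u}$ with $\signal{M}\in\real_+^{N\times N}$ and $\signal{u}\in\real_{++}^N$, and that the associated general interference mapping from \refeq{eq.G} is exactly $\tnorm:\signal{p}\mapsto\signal{Mp}+\signal{u}\|\signal{p}\|$. Combined with the polyhedral representation $\|\signal{x}\|=\max_{n\in\{1,\ldots,K\}}\signal{a}_n^t|\signal{x}|$ of the power-constraint norm and the hypothesis that each $\signal{M}+\signal{u}\signal{a}_n^t$ is an inverse Z-matrix, this is precisely the configuration studied in Example~\ref{example.induced_norm}. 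Invoking that example directly yields that $\tnorm$ is norm inducing and that the induced norm $\|\cdot\|_\tnorm$ obeys the explicit formula \refeq{eq.Gnorm}, which already settles the final sentence of the statement.

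Having established that $\tnorm$ is norm inducing, I would obtain convexity of $\mathcal{S}_\mathcal{P}$ from Proposition~\ref{prop.norm}, which guarantees $(\forall\signal{s}\in\real_+^N)~\rho(\diagvec{s}\tnorm)=\|\signal{s}\|_\tnorm$. Substituting this identity into the definition \refeq{eq.sp} rewrites $\mathcal{S}_\mathcal{P}$ as $\{\signal{s}\in\real_{++}^N\mid\|\signal{s}\|_\tnorm\le 1\}$, namely the intersection of the closed unit ball of the norm $\|\cdot\|_\tnorm$ with the open positive cone $\real_{++}^N$; both sets are convex, so $\mathcal{S}_\mathcal{P}$ is convex. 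For $\mathcal{R}_\mathcal{P}$ I would invoke Proposition~\ref{proposition.norm_inducing_G}, whose sole hypothesis (that $\tnorm$ is norm inducing) has just been verified; that proposition shows the closure $\brp$ in \refeq{eq.brp} is compact and convex. Using the same identity $\rho(\diagvec{x}\tnorm)=\|\signal{x}\|_\tnorm$ with $\signal{x}=(e^{r_1}-1,\ldots,e^{r_N}-1)$ gives $\mathcal{R}_\mathcal{P}=\brp\cap\real_{++}^N$, so convexity of $\mathcal{R}_\mathcal{P}$ again follows as an intersection of convex sets.

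Since every step is a direct citation of an earlier result, there is no genuine obstacle to the argument. The only care required is bookkeeping: confirming that the affine model of Example~\ref{example.linear} truly matches the template of Example~\ref{example.induced_norm} (in particular the positivity of $\signal{u}$ and nonnegativity of $\signal{M}$, which hold by construction), and observing that passing from the convex closures $\bsp,\brp$ to the sets $\mathcal{S}_\mathcal{P},\mathcal{R}_\mathcal{P}$ is costless because intersecting a convex set with the convex cone $\real_{++}^N$ preserves convexity.
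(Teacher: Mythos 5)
Your proposal is correct and follows essentially the same route as the paper, which derives the corollary directly from Example~\ref{example.induced_norm} (to establish that $\tnorm$ is norm inducing with the induced norm satisfying \refeq{eq.Gnorm}) together with Proposition~\ref{prop.norm} and Proposition~\ref{proposition.norm_inducing_G}. Your additional bookkeeping -- writing $\mathcal{S}_\mathcal{P}$ and $\mathcal{R}_\mathcal{P}$ as intersections of the convex closures with the convex cone $\real_{++}^N$ -- is a correct and slightly more explicit account of a step the paper leaves implicit.
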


\begin{remark} \label{remark.simplicity} With $K=N$ and $\signal{a}_1,\ldots,\signal{a}_K$ as the standard basis vectors,  Corollary~\ref{cor.withpc} could be partially deduced from the findings in \cite{friedland2008maximizing}. Even in this particular case of the results we present, our approach offers some significant advantages. First, convexity of the SINR and rate regions can be identified even in scenarios with strong interference and high SNR, which is a regime that cannot be covered by results in previous studies. For example, the hypothetical interference matrix $\signal{M}=\left[\begin{matrix}2&10 \\ 10^{-1} & 1\end{matrix}\right]$ has a component (in the first row, second column) corresponding to strong interference compared to all other components of the matrix. Nevertheless, $\signal{M}+\signal{ua}_n^t$ is an inverse Z-matrix for every $n\in\{1,\ldots,K\}$ if the components of all vectors $(\signal{a}_n)_{n\in\{1,\ldots,N\}}$ are sufficiently small, which can be guaranteed by making the transmit power sufficiently large. Therefore, we are neither in the low SNR regime nor in the low interference regime, yet the conditions of Corollary~\ref{cor.withpc} hold. Second, our approach focuses on the spectral radius of a single nonlinear mapping (see \refeq{eq.sp}), which not only streamlines the notation but also accommodates more flexible power constraints. More generally, the characterization of the feasible rate and SINR regions through the spectral radius of the nonlinear mappings in \refeq{eq.sp} and \refeq{eq.rp} does not require affine interference models or convex polytopes as the power constraints, so our characterization is inaccessible to previous results that do not use the concept of nonlinear spectral radius.
\end{remark}

\section{Discussion}
\label{sect.implication}

To highlight further the relevance of the preceding results for network designers, except for Sect.~\ref{subsect.sum}, we focus on the interference model in Examples \ref{example.cell-less} and \ref{example.linear}. Notably, that interference model is very common in the massive/extremely large MIMO and cell-less literature (see \cite{miretti2022closed,chen2018achievable,chafaa2025} and the references therein). In the discussion that follows, we extensively use Fact~\ref{fact.zm} in the Supplemental Material, which establishes that an invertible interference matrix $\signal{M}\in\real_{+}^{N\times N}$, being nonnegative, is an inverse Z-matrix if and only if it is an inverse M-matrix. This relationship allows us to utilize the insights from the previous sections, along with the comprehensive body of research on these matrix classes \cite{johnson2011}, to explain interference patterns observed in common network models for resource allocation. 

\subsection{Impact of self-interference on the convexity of the feasible rate and SINR regions}
\label{sect.impact}

The diagonal entries of the interference matrix $\signal{M}\in\real_{+}^{N\times N}$ defined in Example~\ref{example.linear} correspond to the so-called \emph{self-interference}, which in many models in the massive MIMO and cell-less literature is caused by the lack of perfect channel state information. For Remark~\ref{remark.nopc} to be applicable, self-interference must exist in all communication links $n\in\mathcal{N}$. The reason is that a necessary (but not sufficient) condition for $\signal{M}\in\real_+^{N\times N}$ to be an inverse Z-matrix  (or, equivalently, an inverse M-matrix, as mentioned above) is that all diagonal entries must be positive \cite[Theorem~2.9]{johnson2011}. 

In contrast, Corollary~\ref{cor.withpc} may be applicable even in the absence of self-interference in some links because of the presence of the additive matrices $(\signal{u}\signal{a}_n^t)_{n\in\{1,\ldots,K\}}$. However, their presence does not  guarantee that $(\signal{M}+\signal{u}\signal{a}_n^t)_{n\in\{1,\ldots,K\}}$ are inverse Z-matrices, which is the main assumption required in Corollary~\ref{cor.withpc}. Nevertheless, as self-interference increases in every link, all feasible SINR and rate regions eventually become convex under the practical assumption that all links cause interference to each other, as proved below:

\begin{Cor}
	\label{cor.selfint} Assume that the interference matrix $\signal{M}\in\real_+^{N\times N}$ in Example~\ref{example.linear} has positive off-diagonal entries, and it is replaced by $\signal{S}+\signal{M}$, where $\signal{S}\in\mathcal{D}_{+}^{N\times N}$ models the addition of self-interference to the system. Then there exists a nonnegative diagonal matrix $\signal{S}_\mathrm{min}\in\mathcal{D}_{+}^{N\times N}$ such that the sets $\mathcal{S}$, $\mathcal{S}_\mathcal{P}$, $\mathcal{R}$, and $\brp$ are convex for every $\signal{S} \ge \signal{S}_\mathrm{min}$.
\end{Cor}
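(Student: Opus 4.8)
The plan is to reduce the statement to a single matrix-theoretic claim and then invoke the convexity criteria already established. By Corollary~\ref{cor.setr}, applied with interference matrix $\signal{S}+\signal{M}$, the sets $\mathcal{S}$ and $\mathcal{R}$ are convex as soon as $\signal{S}+\signal{M}$ is an inverse Z-matrix. Likewise, by Corollary~\ref{cor.withpc} together with Proposition~\ref{proposition.norm_inducing_G}, the sets $\mathcal{S}_\mathcal{P}$ and $\brp$ are convex as soon as each of the finitely many matrices $\signal{S}+\signal{M}+\signal{u}\signal{a}_n^t$, $n\in\{1,\ldots,K\}$, is an inverse Z-matrix. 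Since $\signal{M}$ has positive off-diagonal entries, every matrix in the finite family $\signal{A}\in\{\signal{M},\,\signal{M}+\signal{u}\signal{a}_1^t,\ldots,\signal{M}+\signal{u}\signal{a}_K^t\}$ is nonnegative with strictly positive off-diagonal entries, and adding $\signal{S}\in\mathcal{D}_+^{N\times N}$ only enlarges the (already positive) diagonal. Hence it suffices to prove the uniform claim: for a fixed nonnegative matrix $\signal{A}$ with positive off-diagonal entries, there is a threshold $\bar\delta>0$ such that $\signal{S}+\signal{A}$ is an inverse Z-matrix whenever every diagonal entry of $\signal{S}+\signal{A}$ is at least $\bar\delta$.

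To prove this claim I would write $\signal{S}+\signal{A}=\signal{\Delta}+\signal{A}_o$, where $\signal{\Delta}\in\mathcal{D}_{++}^{N\times N}$ collects the diagonal and $\signal{A}_o\ge\signal{0}$ is the off-diagonal part, and factor $\signal{\Delta}+\signal{A}_o=\signal{\Delta}(\signal{I}+\signal{E})$ with $\signal{E}:=\signal{\Delta}^{-1}\signal{A}_o$. Choosing $\bar\delta$ large enough that $\varepsilon:=a_{\max}/\bar\delta$ (with $a_{\max}$ the largest entry of $\signal{A}_o$) makes $\|\signal{E}\|_\infty\le(N-1)\varepsilon$ strictly below $1/2$, the Neumann series $(\signal{I}+\signal{E})^{-1}=\sum_{k\ge0}(-\signal{E})^k$ converges, so $\signal{S}+\signal{A}$ is invertible and its inverse equals $(\signal{I}+\signal{E})^{-1}\signal{\Delta}^{-1}$. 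Since right-multiplication by the positive diagonal $\signal{\Delta}^{-1}$ preserves signs, it remains to show that the off-diagonal entries $f_{ij}$ ($i\neq j$) of $(\signal{I}+\signal{E})^{-1}$ are negative. Using $f_{ij}=-\sum_{k\neq i}e_{ik}f_{kj}$, each such entry carries a common factor $1/\Delta_i$; writing $f_{ij}=g_{ij}/\Delta_i$ isolates $g_{ij}=-a_{ij}f_{jj}-\sum_{k\neq i,j}a_{ik}f_{kj}$. A short estimate from the series gives $f_{jj}\ge 1/2$ and $|f_{kj}|\le 3\varepsilon$ for $k\neq j$, whence $g_{ij}\le -a_{\min}/2+3(N-2)a_{\max}\varepsilon$, where $a_{\min}>0$ is the smallest off-diagonal entry. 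Enlarging $\bar\delta$ so that also $\varepsilon<a_{\min}/\bigl(6(N-2)a_{\max}\bigr)$ forces $g_{ij}<0$, hence $f_{ij}<0$, for every $i\neq j$. Both smallness conditions on $\varepsilon$ constrain only $\min_k\Delta_k$, so the conclusion holds uniformly over all diagonals with entries $\ge\bar\delta$, which is exactly the uniformity demanded by the ``for every $\signal{S}\ge\signal{S}_\mathrm{min}$'' phrasing.

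Finally I would apply the claim to each of the $K+1$ matrices of the family, take the maximum of the resulting thresholds as $\bar\delta$, and set $(\signal{S}_\mathrm{min})_{kk}:=\max\{0,\ \bar\delta-(\signal{M})_{kk}\}$; because $\signal{u}\signal{a}_n^t$ only increases the diagonal beyond that of $\signal{M}$, the binding constraint is $\signal{M}$ itself, so $\signal{S}\ge\signal{S}_\mathrm{min}$ guarantees that every relevant diagonal entry exceeds $\bar\delta$, and all four sets are convex for every such $\signal{S}$. The main obstacle is precisely this uniformity: a naive Neumann expansion shows the off-diagonals of the inverse are negative for one large diagonal, but one must rule out that increasing a single $\Delta_i$ flips a sign. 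The factorization resolves this, since the dominant negative term $-a_{ij}f_{jj}/\Delta_i$ and the leading error term share the factor $1/\Delta_i$; thus the sign of $f_{ij}$ is governed only by $\{\Delta_k\}_{k\neq i}$ and remains negative throughout the region $\{\signal{S}\ge\signal{S}_\mathrm{min}\}$.
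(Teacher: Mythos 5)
Your proposal is correct in substance, but it takes a genuinely different route from the paper. The paper's proof is a two-line application of known facts about inverse M-matrices: by \cite[Theorems 4.5 and 9.2]{johnson2011} (Fact~\ref{fact.diag}), each of the finitely many matrices $\signal{M}$, $\signal{M}+\signal{u}\signal{a}_n^t$ becomes an inverse M-matrix after adding a sufficiently large multiple $\alpha_n\signal{I}$ of the identity, and by \cite[Theorem~1.7]{johnson2011} (Fact~\ref{fact.dsum}) the class of inverse M-matrices is closed under addition of \emph{any} nonnegative diagonal matrix; the latter fact disposes of the uniformity over all $\signal{S}\ge\signal{S}_\mathrm{min}$ in a single step, and the equivalence of inverse M- and inverse Z-matrices for nonnegative matrices (Fact~\ref{fact.zm}) then feeds into Remark~\ref{remark.nopc} and Corollaries~\ref{cor.setr} and~\ref{cor.withpc} exactly as in your opening reduction. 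You instead prove the needed matrix-theoretic claim from scratch via the factorization $\signal{\Delta}(\signal{I}+\signal{E})$ and a Neumann-series sign analysis of $(\signal{I}+\signal{E})^{-1}$, obtaining uniformity by observing that every estimate depends only on $\min_k\Delta_k$. This is a valid and self-contained alternative that yields an explicit, computable threshold $\bar\delta$ in terms of $N$, $a_{\min}$, and $a_{\max}$ -- something the paper's citation-based argument does not provide -- at the cost of more bookkeeping. Two minor points to tighten: the bound $|f_{kj}|\le 3\varepsilon$ for $k\neq j$ does not follow from $\|\signal{E}\|_\infty<1/2$ alone (the tail of the series contributes $O\bigl((N-1)^2\varepsilon^2\bigr)$, so you need an additional, $N$-dependent smallness condition on $\varepsilon$, which is harmless since $\bar\delta$ is free); and you should state explicitly that $a_{\min}$ is taken as the smallest off-diagonal entry over the whole finite family $\{\signal{M},\signal{M}+\signal{u}\signal{a}_1^t,\ldots,\signal{M}+\signal{u}\signal{a}_K^t\}$, which is positive because adding $\signal{u}\signal{a}_n^t\ge\signal{0}$ cannot decrease the off-diagonal entries of $\signal{M}$.
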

\begin{proof}
	Fact~\ref{fact.diag} in the Supplemental Material shows that there exist scalars $(\alpha_1,\ldots,\alpha_{K+1})\in\real_{+}^{K+1}$ such that $(\alpha_n\signal{I}+\signal{M}+\signal{u}\signal{a}_n^t)_{n\in\{1,\ldots,K\}}$ and $\alpha_{K+1}\signal{I}+\signal{M}$ are inverse M-matrices. It then follows from Fact~\ref{fact.dsum} in the Supplemental Material that, for any $\signal{S}\in\mathcal{D}_{+}^{N\times N}$ satisfying $\signal{S}\ge\signal{S}_\mathrm{min}:=\bar{\alpha}\signal{I}$, where  $\bar{\alpha}:=\max\limits_{n\in\{1,\ldots,K+1\}} \alpha_n$, the matrices $\signal{M}+\signal{S}$, $\signal{M}+\signal{S}+\signal{u}\signal{a}_1^t$, ...,  $\signal{M}+\signal{S}+\signal{u}\signal{a}_K^t$ are inverse M-matrices, hence also inverse Z-matrices because they are nonnegative (Fact~\ref{fact.zm} in the Supplemental Material). A direct application of Remark \ref{remark.nopc}, Corollary \ref{cor.setr}, and Corollary \ref{cor.withpc} completes the proof.
\end{proof}

\subsection{Users offering compatible channels}
\label{sect.Zcompat}
Assume that two single-antenna users need to be served by a single base station equipped with $L\in\Natural$ antennas (we consider the general case soon). In the massive MIMO literature, a key concept indicating whether both users can be served simultaneously while maintaining good link performance is \emph{favorable} or \emph{approximately favorable propagation} \cite[Ch.~7]{marzetta16}. More precisely, we say that the users offer \emph{favorable propagation} if their channels $(\signal{h}_1,\signal{h}_2)\in\mathbb{C}^L\times \mathbb{C}^L$ are such that $\signal{h}_1^H\signal{h}_2=0$, or that they offer \emph{approximately favorable propagation} if $|\signal{h}_1^H\signal{h}_2|$ is sufficiently small (possibly in a probabilistic sense for models such as those in Example~\ref{example.cell-less}). It is often recommended to avoid scheduling users on the same resources when favorable propagation is absent \cite[p.~154]{marzetta16}, and this principle can be trivially generalized to systems with multiple users and base stations by considering the channels for each link pair. However, this system design recommendation overlooks key aspects of real systems, including, among others: (i) it does not take into account the presence of beamformers and the available transmit power;  (ii) it ignores self-interference; (iii) many results proving approximate favorable propagation consider asymptotic regimes by letting the number of antennas diverge to infinity, but systems with a small number of antennas are common; (iv) users cannot be simply dropped -- they need to be scheduled at some time slot considering some fairness criterion; (v) and it lacks mathematical guarantees that users would not achieve better performance by being divided into smaller groups and served at different time slots.

The results in Sect.~\ref{sect.model} provide a framework for defining a notion of user compatibility that overcomes the above drawbacks. Specifically, if the achievable rate region is convex, as discussed in Sect.~\ref{sect.overview}, we have a mathematical certificate that the network performance cannot be improved by splitting users into smaller groups and by serving these groups at different times, irrespective of the point on the weak Pareto boundary being considered (the fairness criterion). Therefore, the following definition is justified:

\begin{definition}
	\label{def.ch_compatible}
	Consider the system model in Examples \ref{example.cell-less} and \ref{example.linear}. We say that the users offer \emph{Z-compatible channels without power constraints} if the interference matrix $\signal{M}$ is an inverse Z-matrix. Likewise,  we say that the users offer Z-compatible channels with the power constraint $\mathcal{P}$ in \refeq{eq.C} if the matrices  $(\signal{M}+\signal{ua}_n^t)_{n\in\{1,\ldots,K\}}$ are inverse Z-matrices.
\end{definition}

Corollaries \ref{cor.setr} and ~\ref{cor.withpc} show  that Z-compatibility guarantees convexity of the SINR and rate regions.

\subsection{Numerical aspects}
A simple means to verify channel Z-compatibility across a group of users, in the sense of Definition~\ref{def.ch_compatible}, is to invert the matrix $\signal{M}$ or the matrices $(\signal{M}+\signal{u}\signal{a}_n^t)_{n\in\{1,\ldots,K\}}$ (if they are invertible) and check whether the resulting matrices are Z-matrices (i.e., the off-diagonal entries are nonpositive). However, this direct approach based on matrix inverses becomes computationally expensive and numerically unstable in large systems. These issues can be mitigated by leveraging the vast literature on the identification of inverse Z-matrices or inverse M-matrices without actually computing any inverses, which is a well-studied problem with applications across different scientific and engineering domains \cite{johnson2011}. Below, we explore some of these established results to reduce the computational burden of identifying users with Z-compatible or incompatible channels. For brevity, we focus on the interference matrix $\signal{M}$ because the same principles apply to the matrices $(\signal{M}+\signal{u}\signal{a}_n^t)_{n\in\{1,\ldots,K\}}$ without any changes.

If the interference matrix $\signal{M}$ is of dimension $2\times 2$ (i.e., we try to produce a certificate of channel Z-compatibility for two users), inverting such a small matrix may not pose serious challenges. Nevertheless, it is more efficient and numerically stable to apply the result stated in Fact~\ref{fact.2x2} in the Supplemental Material. More specifically, an interference matrix $\signal{M}\in\real_+^{2\times 2}$ is an inverse Z-matrix if and only if its determinant $\mathrm{det}(\signal{M})$ satisfies $\mathrm{det}(\signal{M})>0$. Unfortunately, this simple inequality is only a necessary condition for larger matrices \cite[Theorem~1.2]{johnson2011}. Nevertheless, all principal minors of inverse Z-matrices are positive \cite[Corollary~2.3.3]{johnson2011}, which lead us to a simple rule for identifying pairs of users \emph{potentially} leading to a nonconvex SINR or rate region:

\begin{remark}
	\label{remark.zmatrix}
	Consider the interference model in Examples \ref{example.cell-less} and \ref{example.linear}. Let $\signal{M}\in\real_+^{N\times N}$ be the interference matrix of a network with $N$ users. Given two users $(i,j)\in\mathcal{N}\times\mathcal{N}$, let $\signal{M}[i,j]\in\real^{2\times 2}_+$ be the reduced interference matrix obtained from $\signal{M}$ by keeping only the $i$th and $j$th columns and the $i$th and $j$th rows. If $\mathrm{det}(\signal{M}[i,j])\le 0$, then the full matrix  $\signal{M}$ is not an inverse Z-matrix.
\end{remark}

\subsection{A counter-example to the conjecture in \cite[Sect.~5.4.4]{slawomir09}\cite{stanczak2007convexity}}
\label{subsect.conjecture}

Given a matrix $\signal{M}\in\real_+^{N\times N}$, the studies in \cite[Sect.~5.4.4]{slawomir09}\cite{stanczak2007convexity} conjecture that convexity of the function $l_\signal{M}:\real^N_+\to\real_+:\signal{x}\mapsto \rho(\diagvec{\signal{x}}\signal{M})$ is linked to the positive semidefiniteness of the symmetrized interference matrix $\signal{M}+\signal{M}^t$. If this conjecture were true, it would provide a simple sufficient condition for verifying the convexity of the SINR region $\mathcal{S}$ for the system model in Example~\ref{example.cell-less}. We now show that positive semidefiniteness of $\signal{M}+\signal{M}^t$ is neither necessary nor sufficient for convexity. More specifically, if the interference matrix $\signal{M}\in\real^{N\times N}_+$ is an inverse Z-matrix, in which case the function $l_\signal{M}$ is convex, then there exists a positive diagonal matrix $\signal{D}\in\real^{N\times N}_+$ such that the symmetrized interference matrix $\signal{DM}+\signal{M}^t\signal{D}$ is positive definite \cite[Theorem~1.4]{johnson2011}, but $\signal{D}$ is not necessarily the identity matrix. Therefore, positive definiteness of $\signal{M}+\signal{M}^t$ is not a necessary condition for convexity of $l_\signal{M}$. To show that this property is not sufficient, take $\signal{x}_1=[0.5,~0.1,~1]^t$, $\signal{x}_2=[0.5,~0.5,~0.5]^t$, $\alpha=0.9$, and
\begin{align*}
	\signal{M}=\left[\begin{matrix}
		11  & 10   & 1 \\
		1 & 11 & 10 \\
		10    & 10   & 10     
	\end{matrix}\right].
\end{align*}
We can verify that $\signal{M}+\signal{M}^t$ is positive definite and that $l_{\signal{M}}$ is not quasi-convex (and, hence, not convex) because  $l_{\signal{M}}(\alpha\signal{x}_1+(1-\alpha)\signal{x}_2)> \max\{ l_{\signal{M}}(\signal{x}_1), l_{\signal{M}}(\signal{x}_2)\}$.

\subsection{The sum-rate maximization problem}
\label{subsect.sum}
In this final subsection of Sect.~\ref{sect.implication}, we turn the attention to the interference model in \refeq{eq.sinr} (or, equivalently \refeq{eq.fn}) in its full generality, with the models presented in Examples \ref{example.cell-less} and \ref{example.linear} serving as particular cases. In more detail, many scheduling and resource allocation schemes rely heavily on the sum-rate maximization problem, which is known to be NP-hard in general \cite{luo2008dynamic}. However, if $\tnorm$ in \refeq{eq.G} is norm-inducing, the closure $ \brp=\{\signal{r}\in\real_+^N\mid \rho((e^{\diagvec{r}}-\signal{I})\tnorm)\le 1\}$ of the set of achievable rates $\mathcal{R}_\mathcal{P}$ in \refeq{eq.rp} is convex as shown in Proposition~\ref{proposition.norm_inducing_G}, and $\brp$ can be equivalently expressed using a monotone norm as shown in \refeq{eq.brp}. In this case, (weighted) sum-rate maximization can be posed as the following convex optimization problem, which opens up the possibility for deriving efficient solvers that provably converge to global optima (particular solvers will be considered elsewhere):
\begin{align}
	\label{eq.sumrate}
	\begin{array}{rl}
		\mathrm{max.} & \signal{w}^t\signal{r} \\
		\mathrm{s.t.} & \signal{r}\in\brp = \{\signal{r}\in\real_{+}^N\mid \rho((e^\diagvec{r}-\signal{I})\tnorm)\le 1\},		
	\end{array}
\end{align}
where $\signal{w}\in\real_{++}^N$ are the desired weights. Once a solution $\signal{r}^\star$ to \refeq{eq.sumrate} is obtained, assuming that $\signal{r}^\star\in\real_{++}^N$, we can recover the optimal power allocation $\signal{p}^\star\in\real_{++}^N$ by computing the fixed point of the standard interference mapping $\signal{D}^\star T$ in \refeq{eq.fixed_point}, where $\signal{D}^\star=e^\diagvec{r^\star}-\signal{I}$ is the matrix of optimal SINR levels corresponding to the optimal rates $\signal{r}^\star$. In particular, the uniquely existing fixed point $\signal{p}^\star\in\mathrm{Fix}(\signal{D}^\star T)$ can be computed with the standard fixed point iterations \cite{yates95,nuzman07} or with acceleration methods \cite{cavalcante2016}. If $T$ is also a positive concave mapping, convergence is guaranteed to be geometric in the standard Euclidean space \cite{piotrowski2022}. If any components of $\signal{r}^\star$ are zero, we can simply remove the corresponding users from the system and consider a fixed point problem with a reduced dimension. 

It is worth emphasizing that the class of sum-rate maximization problems that can be reformulated as convex optimization problems without change in variables does not necessarily require $\tnorm$ to be norm-inducing, and the results from Sect.~\ref{sect.model} offer analytical support for a straightforward guideline for designing sum-rate maximization algorithms:

\begin{remark}
	\label{remark.design}
	We have shown that convexity of the function $\signal{s}\mapsto\rho(\diagvec{s}\tnorm)$ implies convexity of the set $\bsp$ and also convexity of the set $\brp$, as established in Proposition~\ref{proposition.norm_inducing_G}. However,  $\brp$ can be convex even if neither the set $\bsp$ nor the function $\signal{s}\mapsto\rho(\diagvec{s}\tnorm)$ is convex. This fact suggests to use the achievable rates directly as the optimization variables in sum-rate maximization problems, as done in \refeq{eq.sumrate}. The alternative approach of using SINR values $\signal{s}=(s_1,\ldots,s_N)$ as the optimization variables (or the power levels providing the required SINR values) and the sum-rate function $\sum_{n\in\mathcal{N}}\log(1+s_n)$ as the cost function may require the introduction of the set $\bsp$ as a constraint, which is less likely to be convex compared to the set $\brp$. Therefore, we may lose the opportunity to exploit a potential hidden convexity of the sum-rate maximization problem, increasing the risk of iterative solvers converging to local rather than global optima, if convergence can be guaranteed.
\end{remark}

\section{Simulations}
To illustrate the results from the previous section using the model in  Examples \ref{example.cell-less} and \ref{example.linear}, we examine the uplink of a small cell-less network consisting of four access points, each equipped with two antennas, uniformly distributed within a 100m × 100m square area. Three single-antenna users are positioned uniformly at random in this area, with each user connected to the two access points providing the strongest channels. The small number of users and access points is chosen deliberately because plotting the feasible SINR and rate regions is impossible if the number of users exceeds three. Nevertheless, the results in the previous section are valid for systems of any dimension. For the beamformers, we employ maximum ratio combining. Concretely, in the SINR expression in \refeq{eq.sinrcl}, the aggregate beamformer vector $\signal{v}_n$ (obtained by stacking the beamformers of every access point) for each user $n\in\mathcal{N}=\{1,2,3\}$ is a scaled version of the aggregated channel vector $\signal{h}_n$, with zeros placed in the coordinates corresponding to the access points to which the user is not connected. Expectations are replaced by empirical averages from 100 realizations of the random variables, and all channel parameters and noise samples are constructed exactly as described in \cite[Sect. III-D]{miretti2022closed}, so we omit the details for brevity. 
For the monotone norm $\|\cdot\|$ in the power constraint $\mathcal{P}$ in \refeq{eq.C}, we use the $l_\infty$ norm, scaled such that the maximum transmit power per user is limited to $p_{\max} =0.2$ W; i.e., we use the polyhedral monotone norm $(\forall\signal{p}\in\real^N)~ \|\signal{p}\|:=(1/p_{\max})~\|\signal{p}\|_\infty$. Equivalently, the vectors $(\signal{a}_n)_{n\in\{1,2,3\}}$ in Corollary~\ref{cor.withpc} are the standard basis vectors appropriately scaled so that the users do not exceed the maximum transmit power.

To display the feasible rate and SINR regions, we proceed as follows. For each user $n\in \mathcal{N}=\{1,2,3\}$, the initial transmit power $p_n$ is a sample drawn uniformly at random from the interval $]0,1]$. The transmit power vector $\signal{p}=[p_1,~p_2,~p_3]$ is then normalized so that the power constraint is satisfied with equality, thus we have $\|\signal{p}\|=1$. This normalization ensures that the power vector produces rates and SINR levels lying on the weak Pareto boundary, as proved in \cite{cavalcante2023}. We generate 1,000 such power vectors using this method, and the corresponding SINR and rate values are shown as point clouds in 3D plots.

Fig.~\ref{fig.convex} illustrates results for a sample user placement scenario where all matrices $(\signal{M}+\signal{u}\signal{a}^t_n)_{n\in\{1,2,3\}}$ described in Example \ref{example.cell-less} are inverse Z-matrices. These figures indicate convexity of the feasible rate and SINR regions, which is in agreement with the results in Corollary~\ref{cor.withpc}. 

\begin{figure}
	\centering
	\begin{subfigure}[b]{0.45\columnwidth}
		\centering
		\includegraphics[width=\textwidth, trim=2cm 0cm 2cm 2cm, clip]{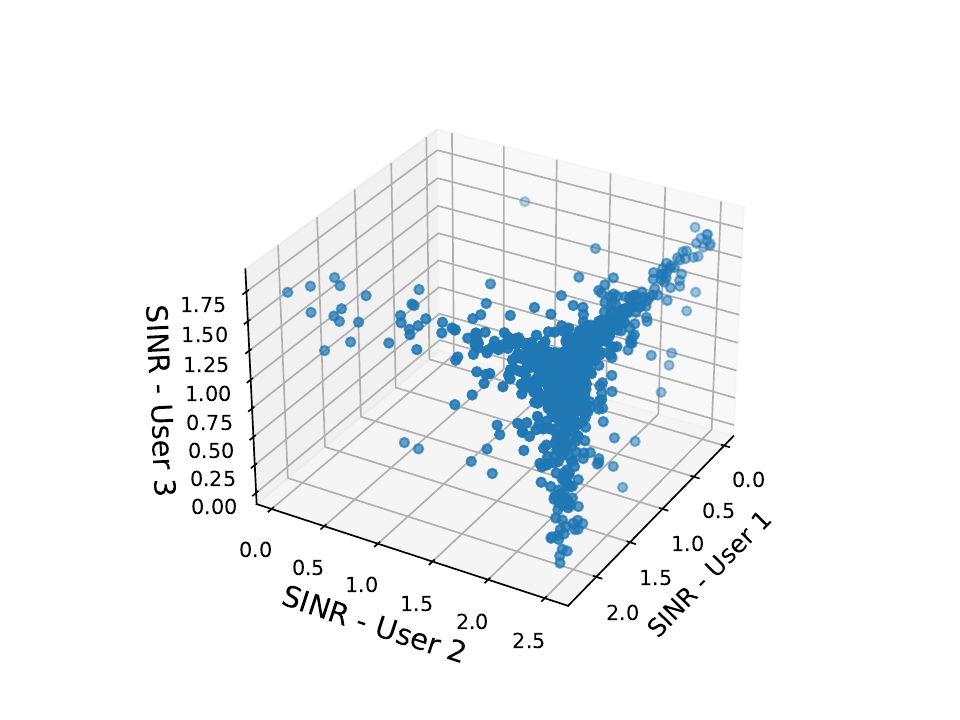} 
		\caption{SINR}
		\label{fig:subfig1}
	\end{subfigure}
	\hfill
	\begin{subfigure}[b]{0.45\columnwidth}
		\centering		
		\includegraphics[width=\textwidth, trim=2cm 0cm 2cm 2cm, clip]{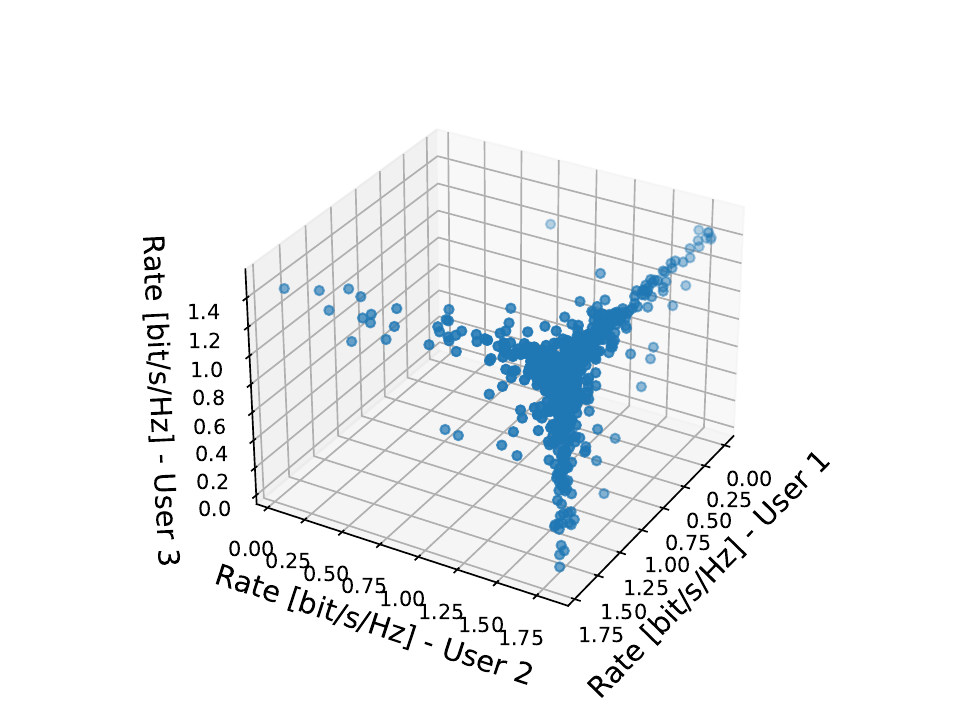} 
		\caption{Achievable rate}
		\label{fig:subfig2}
	\end{subfigure}
	\caption{\footnotesize Sample points on the weak Pareto boundary when the conditions of Corollary~\ref{cor.withpc} are satisfied.}
	\label{fig.convex}
	\vspace{-.5cm}
\end{figure}

Fig.~\ref{fig.nonconvex} presents the results of a user placement scenario where the matrices $(\signal{M}+\signal{u}\signal{a}^t_n)_{n\in\{1,2,3\}}$ are not inverse Z-matrices. To be precise, one of these matrices is given by 
\begin{align*}
	\begin{bmatrix}
		3.4 \times 10^{-1} & 1.4 \times 10^{-4} & 9.4 \times 10^{-2} \\
		5.8 \times 10^{-2} & 4.4 \times 10^{-1} & 4.3 \times 10^{-2} \\
		3.4  & 7.4 \times 10^{-4} & 5.0 \times 10^{-1}
	\end{bmatrix},
\end{align*}
which, in light of Remark~\ref{remark.zmatrix}, cannot be an inverse Z-matrix because, in particular, the determinant of the $2\times 2$ matrix obtained by removing the second row and the second column is negative; i.e., users 1 and 3 do not offer Z-compatible channels in this three-user network. Consequently, our theoretical results cannot guarantee convexity of the SINR and rate regions, and we indeed verify nonconvexity for the SINR region. However, Fig.~\ref{fig:subfig22} indicates that the rate region may still exhibit convexity. This result does not contradict Corollary~\ref{cor.withpc} because the conditions we derived are sufficient but not necessary.  Nevertheless, the potential convexity of the rate region, despite a nonconvex SINR region, provides numerical evidence supporting the design recommendation outlined in Remark~\ref{remark.design}.

\begin{figure}
	\centering
	\begin{subfigure}[b]{0.49\columnwidth}
		\centering
		\includegraphics[width=\textwidth, trim=2cm 0cm 2cm 2cm, clip]{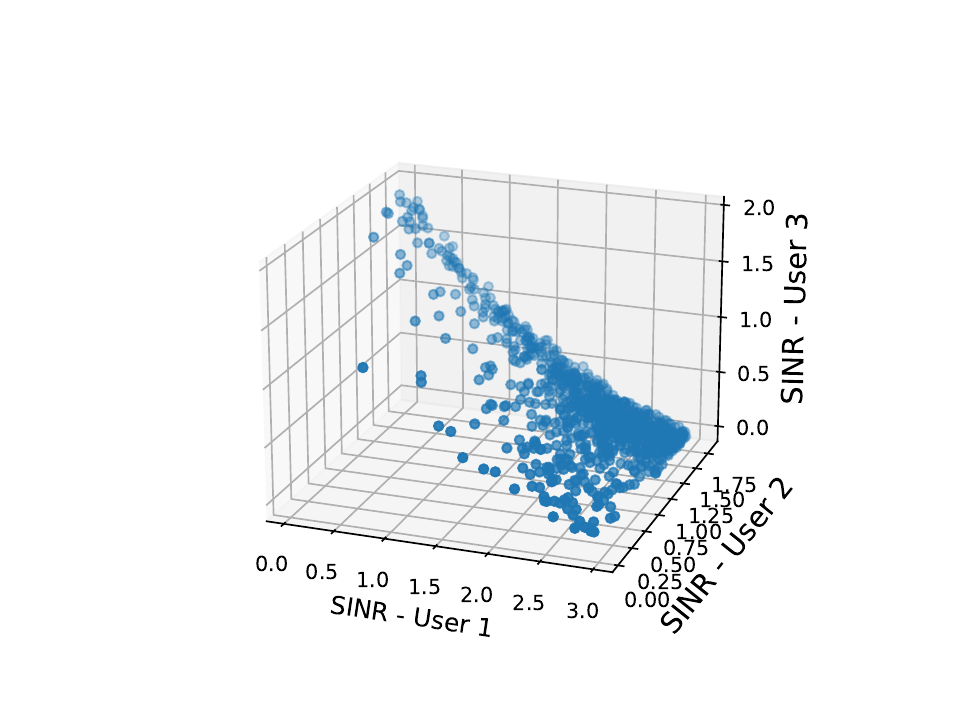} % Replace with your image file
		\caption{SINR}
		\label{fig:subfig21}
	\end{subfigure}
	\hfill
	% Second subfigure
	\begin{subfigure}[b]{0.49\columnwidth}
		\centering
		\includegraphics[width=\textwidth, trim=2cm 0cm 2cm 2cm, clip]{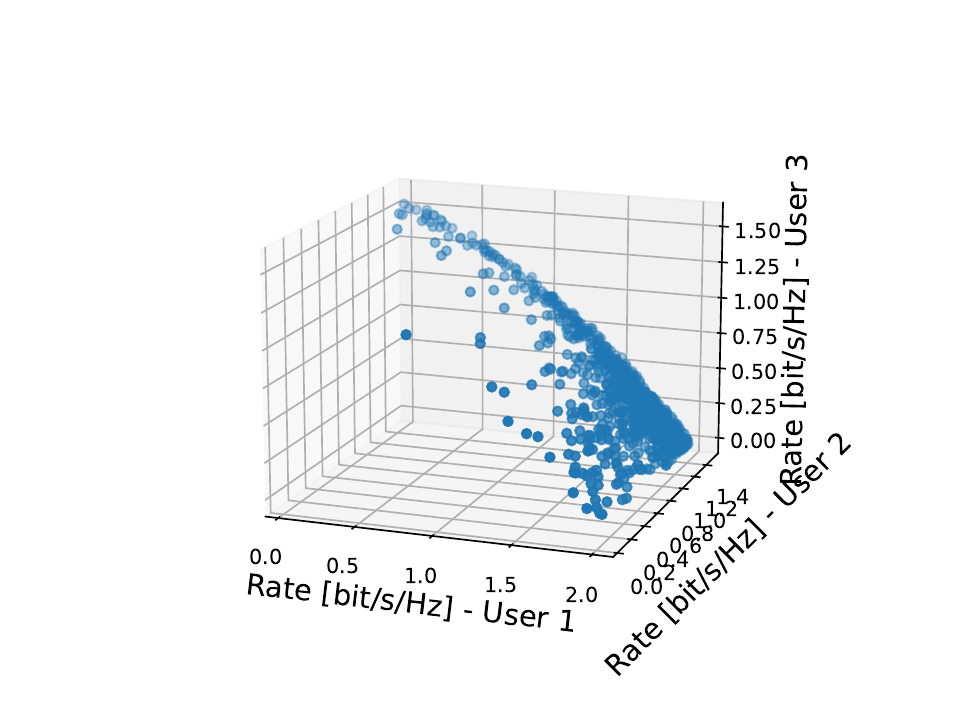} % Replace with your image file
		\caption{Achievable rate}
		\label{fig:subfig22}
	\end{subfigure}
	\caption{\footnotesize Sample points on the weak Pareto boundary when the conditions of Corollary~\ref{cor.withpc} are not satisfied.}
	\label{fig.nonconvex}
	\vspace{-.3cm}
\end{figure}

\section{Summary}
We have derived simple characterizations of the SINR and rate regions, with and without power constraints, using the concept of spectral radius of nonlinear mappings. This viewpoint streamlines the analysis and connects the study of feasible regions in wireless networks to a well-developed mathematical framework, opening the door to further extensions. In particular, our characterization of the feasible rate region as the level-set of a function involving the nonlinear spectral radius enables  the identification of regimes in which time sharing brings no benefit with respect to the achievable rates. In particular, for affine interference models that explicitly account for self-interference, we show that convexity of the SINR and rate regions is guaranteed under a tractable sufficient condition: the relevant interference matrices are inverse Z-matrices. We further connect common interference patterns in wireless networks to the extensive theory of inverse Z-matrices, providing tools to recognize convex feasible regions without explicitly computing matrix inverses. Beyond these structural results, we disprove the conjecture posed in \cite[Sect.~5.4.4]{slawomir09}\cite{stanczak2007convexity}, and we introduce the notion of channel Z-compatibility, which overcomes some limitations of the concept of favorable propagation in the massive MIMO literature. Finally, supported by both theory and numerical evidence, we advocate formulating sum-rate maximization problems in terms of rates rather than SINR levels, as this choice can reveal hidden convexity and lead to tractable large-scale optimization problems.

\bibliographystyle{IEEEtran}
\bibliography{IEEEabrv,references}

\clearpage
\section*{Supplemental material}

Below, we collect well-known results from the literature that serve as key components in the proofs presented in our study.

\begin{fact}
	\label{fact.2x2}
	%\cite[Theorem 2.1]{johnson2011} old reference
	\cite[Theorem 2.9.2]{johnson2011} A nonnegative matrix $\signal{M}\in\real^{N\times N}_+$ is an inverse Z-matrix for $N=2$ if and only if $\mathrm{det}(\signal{M})>0$. %(TODO: use this fact to initialize the user grouping)
\end{fact}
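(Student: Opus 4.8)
The plan is to prove the equivalence by explicitly inverting a generic nonnegative $2\times 2$ matrix and reading off the signs of the off-diagonal entries of the inverse. Write
\begin{align*}
\signal{M}=\begin{bmatrix} a & b \\ c & d \end{bmatrix}, \qquad a,b,c,d\ge 0 .
\end{align*}
Because an inverse Z-matrix is by definition invertible, both sides of the claimed equivalence already rule out the case $\mathrm{det}(\signal{M})=ad-bc=0$: the left-hand side since $\signal{M}^{-1}$ would not exist, and the right-hand side since the strict inequality $\mathrm{det}(\signal{M})>0$ then fails. Hence I may assume $\mathrm{det}(\signal{M})\neq 0$ and use the closed-form inverse
\begin{align*}
\signal{M}^{-1}=\frac{1}{\mathrm{det}(\signal{M})}\begin{bmatrix} d & -b \\ -c & a \end{bmatrix},
\end{align*}
so that the question of whether $\signal{M}^{-1}$ is a Z-matrix reduces to whether its two off-diagonal entries $-b/\mathrm{det}(\signal{M})$ and $-c/\mathrm{det}(\signal{M})$ are both nonpositive.

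For sufficiency, I would assume $\mathrm{det}(\signal{M})>0$; since $b,c\ge 0$, both off-diagonal entries are nonpositive, so $\signal{M}^{-1}$ is a Z-matrix and $\signal{M}$ is an inverse Z-matrix. For necessity, I would argue by contradiction: if $\signal{M}$ is an inverse Z-matrix but $\mathrm{det}(\signal{M})<0$, then $bc>ad\ge 0$ forces $b>0$ and $c>0$, whence $-b/\mathrm{det}(\signal{M})>0$ and $-c/\mathrm{det}(\signal{M})>0$; the inverse then has strictly positive off-diagonal entries, contradicting the Z-matrix property. Combined with the already excluded case $\mathrm{det}(\signal{M})=0$, this leaves only $\mathrm{det}(\signal{M})>0$, which completes the equivalence.

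There is no genuine obstacle here --- the argument is a finite computation --- so the only care needed is bookkeeping: treating the noninvertible case $\mathrm{det}(\signal{M})=0$ consistently on both sides, and noting that the sign deduction in the necessity direction relies crucially on the nonnegativity hypothesis $b,c\ge 0$, which guarantees $bc>0$ (and hence strictly positive off-diagonals of $\signal{M}^{-1}$) whenever $\mathrm{det}(\signal{M})<0$. For $N=2$ this already exhausts the structure, which is precisely why the clean determinant test holds only in dimension two and degrades to a mere necessary condition (positivity of principal minors) for larger $N$.
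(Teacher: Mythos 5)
Your argument is correct and complete: the explicit cofactor formula for the $2\times 2$ inverse, together with the nonnegativity of the off-diagonal entries $b,c$ of $\signal{M}$, immediately yields both directions, and you handle the singular case $\mathrm{det}(\signal{M})=0$ consistently on both sides of the equivalence. Note, however, that the paper does not prove this statement at all --- Fact~\ref{fact.2x2} is imported verbatim from \cite[Theorem 2.9.2]{johnson2011} as part of the Supplemental Material's list of external results --- so there is no in-paper proof to compare against; your elementary computation is a valid self-contained substitute for the citation, and your closing remark correctly identifies why the determinant test is only necessary (via positivity of the $2\times 2$ principal minors) in dimensions $N>2$.
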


\begin{fact}
	\label{fact.diag}
	
	\cite[Theorems 4.5 and 9.2]{johnson2011} Let $\signal{M}\in\real_{++}^{N\times N}$. Then there exists $\alpha>0$ such that $\alpha\signal{I}+\signal{M}$ is an inverse Z-matrix.% TODO: use this fact to show that increasing self-interference the achievable utility region becomes convex. Question: should I relax this result and consider nonnegative matrices, in which case I have to take into account the zero pattern?
\end{fact}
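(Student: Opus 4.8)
The plan is to work directly with the inverse of $\alpha\signal{I}+\signal{M}$ and to show that, once $\alpha$ is large enough, all of its off-diagonal entries are strictly negative; this is precisely the Z-matrix property, so $\alpha\signal{I}+\signal{M}$ is then by definition an inverse Z-matrix. The strict positivity of the entries of $\signal{M}$ (i.e. $\signal{M}\in\real_{++}^{N\times N}$) is exactly what drives the argument, since it pins down the sign of the leading-order term of each off-diagonal entry of the inverse.

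First I would fix $\alpha>\rho(\signal{M})$, where $\rho(\signal{M})$ is the usual spectral radius of the nonnegative matrix $\signal{M}$. For such $\alpha$ every eigenvalue $\lambda$ of $\signal{M}$ satisfies $|\lambda|\le\rho(\signal{M})<\alpha$, so $-\alpha$ is not an eigenvalue and $\alpha\signal{I}+\signal{M}$ is invertible; moreover the Neumann series converges, yielding
\[
(\alpha\signal{I}+\signal{M})^{-1}=\frac{1}{\alpha}\Big(\signal{I}+\tfrac{1}{\alpha}\signal{M}\Big)^{-1}=\sum_{k=0}^{\infty}\frac{(-1)^k}{\alpha^{k+1}}\signal{M}^k.
\]
Fixing an off-diagonal pair $i\ne j$, the $k=0$ term drops out because $[\signal{I}]_{ij}=0$, so
\[
\big[(\alpha\signal{I}+\signal{M})^{-1}\big]_{ij}=-\frac{1}{\alpha^2}[\signal{M}]_{ij}+\sum_{k=2}^{\infty}\frac{(-1)^k}{\alpha^{k+1}}[\signal{M}^k]_{ij}.
\]
The leading term $-\alpha^{-2}[\signal{M}]_{ij}$ is strictly negative because $[\signal{M}]_{ij}>0$, while the tail is $O(\alpha^{-3})$ as $\alpha\to\infty$ (bound its modulus by a geometric series using any submultiplicative matrix norm). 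Hence for each pair there is a threshold beyond which this entry is strictly negative.

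Since there are only finitely many off-diagonal pairs, I would take $\alpha^\star$ to be the maximum of these finitely many thresholds together with $\rho(\signal{M})$. Then $(\alpha^\star\signal{I}+\signal{M})^{-1}$ has every off-diagonal entry strictly negative, i.e. it is a Z-matrix, so $\alpha^\star\signal{I}+\signal{M}$ is an inverse Z-matrix, which is the claim. Alternatively, since $\alpha\signal{I}+\signal{M}$ is nonnegative and invertible, one could appeal to Fact~\ref{fact.zm} and instead verify that it is an inverse M-matrix, but the direct estimate above is self-contained.

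The main obstacle is the remainder control in the middle step: one must make precise that the higher-order Neumann terms cannot overwhelm the first-order term $-\alpha^{-2}[\signal{M}]_{ij}$. This is exactly where $\signal{M}\in\real_{++}^{N\times N}$ is indispensable, since it supplies a fixed strictly negative coefficient at order $\alpha^{-2}$ that the $O(\alpha^{-3})$ tail cannot cancel for $\alpha$ large. If some $[\signal{M}]_{ij}$ were permitted to vanish the leading coefficient would be zero and the sign of the off-diagonal entry would be decided by lower-order terms, which explains why strict positivity of the entries appears in the hypothesis.
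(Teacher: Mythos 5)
Your argument is correct. Note that the paper does not prove this statement at all: it is listed in the Supplemental Material as an imported fact, with the proof delegated to \cite[Theorems 4.5 and 9.2]{johnson2011}, so there is no in-paper argument to compare against. What you supply is a self-contained, elementary alternative: the Neumann expansion $(\alpha\signal{I}+\signal{M})^{-1}=\sum_{k\ge 0}(-1)^k\alpha^{-(k+1)}\signal{M}^k$ is valid for $\alpha>\rho(\signal{M})$, the $k=0$ term contributes nothing off the diagonal, the $k=1$ term contributes the strictly negative quantity $-\alpha^{-2}[\signal{M}]_{ij}$, and the tail is bounded by $\|\signal{M}\|_\infty^2/\bigl(\alpha^2(\alpha-\|\signal{M}\|_\infty)\bigr)=O(\alpha^{-3})$ once $\alpha>\|\signal{M}\|_\infty$ (using the max-row-sum norm, for which entries are dominated by the norm), so each off-diagonal entry of the inverse is negative for $\alpha$ beyond an explicit finite threshold $\|\signal{M}\|_\infty+\|\signal{M}\|_\infty^2/[\signal{M}]_{ij}$; taking the maximum over the finitely many pairs finishes the proof. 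Two small remarks: requiring $\alpha>\|\signal{M}\|_\infty$ (rather than just $\alpha>\rho(\signal{M})$) is what actually licenses the geometric tail bound, but your final thresholds already exceed $\|\signal{M}\|_\infty$, so nothing breaks; and your closing observation about why strict positivity of the entries is indispensable is exactly right --- if $[\signal{M}]_{ij}=0$ the sign at that entry is decided by $+\alpha^{-3}[\signal{M}^2]_{ij}$, which can be positive. The cited theorems in \cite{johnson2011} yield more structural information about which diagonal shifts produce inverse M-matrices, but for the bare existence claim stated here your perturbative argument is complete and arguably more transparent.
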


\begin{fact}
	\label{fact.dsum}
	\cite[Theorem~1.7]{johnson2011} If $\signal{M}\in\real^{N\times N}$ is an inverse M-matrix, then $\signal{M}+\signal{D}$ is also an inverse M-matrix for any nonnegative diagonal matrix $\signal{D}\in\real_{+}^{N\times N}$.
\end{fact}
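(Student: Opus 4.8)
The plan is to show that $(\signal{M}+\signal{D})^{-1}$ is an M-matrix, exploiting the standard characterization that a Z-matrix whose inverse is nonnegative is necessarily an M-matrix. First I would recall that an inverse M-matrix is itself nonnegative: if $\signal{M}^{-1}=\alpha\signal{I}-\signal{P}$ with $\signal{P}\in\real_+^{N\times N}$ and $\alpha>\rho(\signal{P})$, then $\signal{M}=(\alpha\signal{I}-\signal{P})^{-1}=\alpha^{-1}\sum_{k=0}^{\infty}(\signal{P}/\alpha)^k\ge\signal{0}$. Consequently $\signal{M}+\signal{D}\ge\signal{0}$ for any $\signal{D}\in\mathcal{D}_+^{N\times N}$, so it remains only to prove that $\signal{M}+\signal{D}$ is invertible with $(\signal{M}+\signal{D})^{-1}$ a Z-matrix. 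Because $\signal{D}=\sum_{i=1}^{N}d_i\signal{e}_i\signal{e}_i^t$ with each $d_i\ge 0$, I would add the diagonal entries one at a time and argue by induction on $i$; it therefore suffices to establish the single rank-one step, namely that if $\signal{M}$ is an inverse M-matrix and $d\ge 0$, then $\signal{M}+d\signal{e}_i\signal{e}_i^t$ is again an inverse M-matrix.

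For the rank-one step, set $\signal{A}:=\signal{M}^{-1}$, which is an M-matrix and hence a Z-matrix with strictly positive diagonal (indeed $(\signal{A})_{ii}=\alpha-(\signal{P})_{ii}\ge\alpha-\rho(\signal{P})>0$). Since $1+d\,(\signal{A})_{ii}\ge 1>0$, the Sherman--Morrison formula applies and yields both the invertibility of $\signal{M}+d\signal{e}_i\signal{e}_i^t$ and
\begin{align*}
 (\signal{M}+d\signal{e}_i\signal{e}_i^t)^{-1}=\signal{A}-\frac{d\,(\signal{A}\signal{e}_i)(\signal{e}_i^t\signal{A})}{1+d\,(\signal{A})_{ii}}=:\signal{A}',
\end{align*}
so that entrywise $(\signal{A}')_{jk}=(\signal{A})_{jk}-\dfrac{d\,(\signal{A})_{ji}(\signal{A})_{ik}}{1+d\,(\signal{A})_{ii}}$.

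The heart of the argument is checking that $\signal{A}'$ is a Z-matrix, i.e.\ that $(\signal{A}')_{jk}\le 0$ for all $j\neq k$, which I would do by a short case analysis on whether the index $i$ coincides with $j$ or $k$. When $j\neq i$ and $k\neq i$, both $(\signal{A})_{ji}$ and $(\signal{A})_{ik}$ are off-diagonal entries of the Z-matrix $\signal{A}$, hence nonpositive, so their product is nonnegative and the correction term only drives $(\signal{A}')_{jk}\le(\signal{A})_{jk}\le 0$ further down. When $j=i$ (and $k\neq i$) the formula collapses to $(\signal{A}')_{ik}=(\signal{A})_{ik}/(1+d\,(\signal{A})_{ii})\le 0$, and symmetrically for $k=i$; thus all off-diagonal entries of $\signal{A}'$ are nonpositive. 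Since $\signal{A}'$ is a Z-matrix whose inverse $\signal{M}+d\signal{e}_i\signal{e}_i^t\ge\signal{0}$ is nonnegative, the characterization invoked at the outset shows $\signal{A}'$ is an M-matrix, so $\signal{M}+d\signal{e}_i\signal{e}_i^t$ is an inverse M-matrix; the induction then delivers the claim for the full $\signal{D}$.

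The step I expect to be the main obstacle is justifying the characterization ``a Z-matrix with nonnegative inverse is an M-matrix'' under the $\alpha\signal{I}-\signal{P}$ definition used in this paper. I would supply it by writing any Z-matrix $\signal{B}$ with $\signal{B}^{-1}\ge\signal{0}$ as $\signal{B}=\alpha\signal{I}-\signal{P}$ with $\alpha:=\max_n(\signal{B})_{nn}$ and $\signal{P}:=\alpha\signal{I}-\signal{B}\ge\signal{0}$, and then deducing $\rho(\signal{P})<\alpha$ from the fact that $\signal{B}^{-1}=\alpha^{-1}(\signal{I}-\signal{P}/\alpha)^{-1}\ge\signal{0}$ with $\signal{P}/\alpha\ge\signal{0}$ forces $\rho(\signal{P}/\alpha)<1$ (if $\rho(\signal{P}/\alpha)\ge 1$, a Perron--Frobenius eigenvector $\signal{v}\ge\signal{0}$ would make $(\signal{I}-\signal{P}/\alpha)^{-1}\signal{v}$ a negative multiple of $\signal{v}$, contradicting nonnegativity). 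The remaining sign bookkeeping in the case analysis is routine once the rank-one reduction is in place.
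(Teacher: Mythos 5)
Your proof is correct, but note that the paper itself offers no proof of this statement: Fact~\ref{fact.dsum} is quoted verbatim from \cite[Theorem~1.7]{johnson2011} as part of the supplemental collection of known results, so your argument is a self-contained replacement for an external citation rather than a parallel to an in-paper proof. Your route --- peel off $\signal{D}=\sum_i d_i\signal{e}_i\signal{e}_i^t$ one rank-one term at a time, apply Sherman--Morrison with the well-defined denominator $1+d(\signal{A})_{ii}\ge 1$ (using $(\signal{A})_{ii}=\alpha-(\signal{P})_{ii}\ge\alpha-\rho(\signal{P})>0$), and verify the Z-property of the update by the three-way case analysis on whether $i\in\{j,k\}$ --- is sound; in particular the collapse $(\signal{A}')_{ik}=(\signal{A})_{ik}/(1+d(\signal{A})_{ii})$ in the $j=i$ case is computed correctly, and the induction over $i$ is legitimate because each intermediate matrix is again an inverse M-matrix. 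Two small details in your final characterization step deserve explicit mention, though both are one-line fixes. First, writing $\signal{B}^{-1}=\alpha^{-1}(\signal{I}-\signal{P}/\alpha)^{-1}$ presupposes $\alpha=\max_n(\signal{B})_{nn}>0$; this does hold, since otherwise the Z-matrix $\signal{B}$ would be entrywise nonpositive, and then $\signal{B}\signal{B}^{-1}\le\signal{0}$ would contradict $\signal{B}\signal{B}^{-1}=\signal{I}$. Second, your Perron--Frobenius contradiction as phrased covers $\rho(\signal{P}/\alpha)>1$ but not the boundary case $\rho(\signal{P}/\alpha)=1$, where $(\signal{I}-\signal{P}/\alpha)\signal{v}=\signal{0}$ makes $\signal{B}$ singular --- which contradicts the assumed existence of $\signal{B}^{-1}$ directly rather than via a sign argument. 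With these two remarks inserted, your proof is complete and arguably more elementary than chasing the cited theorem, since it uses nothing beyond Sherman--Morrison, the Neumann series, and the definition $\signal{M}^{-1}=\alpha\signal{I}-\signal{P}$ adopted in this paper; what the citation buys instead is brevity and access to the broader structure theory of inverse M-matrices in \cite{johnson2011} that the paper leans on elsewhere (Facts~\ref{fact.2x2}, \ref{fact.diag}, and \ref{fact.zm}).
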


\begin{fact}
	\label{fact.zm}
	\cite[Theorem 1.1]{johnson2011} Let the nonnegative matrix $\signal{M}\in\real_+^{N\times N}$ be invertible. Then $\signal{M}$ is an inverse Z-matrix if and only if $\signal{M}^{-1}$ is an M-matrix (i.e., $\signal{M}$ is an inverse M-matrix).
\end{fact}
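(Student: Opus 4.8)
The plan is to abbreviate $\signal{A} := \signal{M}^{-1}$ and reduce the claim to a statement purely about $\signal{A}$. Since $\signal{M}$ is invertible and nonnegative, $\signal{A}$ is invertible and its inverse $\signal{A}^{-1} = \signal{M}$ is entrywise nonnegative, i.e., $\signal{A}^{-1} \geq \signal{0}$. By the definitions in Sect.~\ref{sect.notation}, the hypothesis ``$\signal{M}$ is an inverse Z-matrix'' says exactly that $\signal{A}$ is a Z-matrix, while ``$\signal{M}^{-1}$ is an M-matrix'' says that $\signal{A}$ is an M-matrix. Thus it suffices to prove that a matrix $\signal{A}$ with $\signal{A}^{-1} \geq \signal{0}$ is a Z-matrix if and only if it is an M-matrix; this is one of the classical equivalent characterizations of nonsingular M-matrices.

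One direction requires nothing from the invertibility or nonnegativity hypotheses. If $\signal{A}$ is an M-matrix, then $\signal{A} = \alpha\signal{I} - \signal{P}$ with $\signal{P} \in \real_+^{N\times N}$ and $\alpha > \rho(\signal{P})$, so the off-diagonal entries of $\signal{A}$ coincide with those of $-\signal{P}$ and are therefore nonpositive; hence $\signal{A}$ is a Z-matrix, i.e., $\signal{M}$ is an inverse Z-matrix.

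For the converse --- the substantive part --- I would introduce the splitting $\signal{A} = s\signal{I} - \signal{B}$, where $s$ is any scalar exceeding every diagonal entry of $\signal{A}$ and $\signal{B} := s\signal{I} - \signal{A}$. Because $\signal{A}$ is a Z-matrix (nonpositive off-diagonals) and $s$ dominates its diagonal, $\signal{B}$ is entrywise nonnegative. To conclude that $\signal{A}$ is an M-matrix it then remains to establish the spectral inequality $\rho(\signal{B}) < s$. Here I would invoke Perron--Frobenius for the nonnegative matrix $\signal{B}$ to obtain a nonzero eigenvector $\signal{v} \geq \signal{0}$ with $\signal{B}\signal{v} = \rho(\signal{B})\signal{v}$, which gives $\signal{A}\signal{v} = (s - \rho(\signal{B}))\signal{v}$. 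Invertibility of $\signal{A}$ excludes $s = \rho(\signal{B})$. If instead $s < \rho(\signal{B})$, then $\signal{A}^{-1}\signal{v} = (s-\rho(\signal{B}))^{-1}\signal{v}$ would be strictly negative at any coordinate where $\signal{v}$ is positive, contradicting $\signal{A}^{-1} = \signal{M} \geq \signal{0}$ together with $\signal{v} \geq \signal{0}$. Hence $s > \rho(\signal{B})$, so $\signal{A} = s\signal{I} - \signal{B}$ exhibits $\signal{A}$ as an M-matrix, i.e., $\signal{M}$ is an inverse M-matrix.

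I expect the sign-chasing in the converse to be the crux: the entire argument hinges on coupling the Perron--Frobenius eigenvector of $\signal{B}$ with the entrywise nonnegativity of $\signal{A}^{-1}$ to force $\rho(\signal{B}) < s$. The only point demanding care is that, when $\signal{B}$ is reducible, the Perron--Frobenius eigenvector is merely nonnegative rather than strictly positive; this is harmless, since it suffices to test the sign of $\signal{A}^{-1}\signal{v}$ at a single coordinate where $\signal{v}$ is strictly positive.
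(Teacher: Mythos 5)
Your proof is correct. Note, however, that the paper does not prove this statement at all: it is listed in the Supplemental Material purely as a citation to \cite[Theorem 2.1]{johnson2011}, so there is no in-paper argument to compare against. What you have supplied is a self-contained, elementary proof of the underlying classical characterization (a Z-matrix with entrywise nonnegative inverse is a nonsingular M-matrix), and it checks out against the paper's own definitions: the splitting $\signal{A}=s\signal{I}-\signal{B}$ with $s$ dominating the diagonal makes $\signal{B}$ nonnegative precisely because $\signal{A}$ is a Z-matrix; the weak Perron--Frobenius theorem (valid for reducible $\signal{B}$) supplies the nonnegative eigenvector; invertibility rules out $s=\rho(\signal{B})$; and the sign argument $\signal{A}^{-1}\signal{v}=(s-\rho(\signal{B}))^{-1}\signal{v}\ge\signal{0}$ correctly forces $s>\rho(\signal{B})$, which is exactly the paper's definition of an M-matrix ($\alpha>\rho(\signal{P})$). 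The easy direction is also handled properly. Your closing caveat about reducibility is well placed and resolved correctly --- you only need one strictly positive coordinate of $\signal{v}$ to derive the contradiction.
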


\begin{fact} \cite[Theorem~4.3]{friedland81}
	\label{fact.friedland}
	Assume that the nonnegative matrix $\signal{M}\in\real_+^{N\times N}$ is an inverse Z-matrix. Then the mapping $\mathcal{D}_{+}^{N\times N}\to\real_+:\signal{D}\mapsto \rho(\signal{DM})$ is a convex function. 
\end{fact}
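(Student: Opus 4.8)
The plan is to prove convexity by reducing it to midpoint-convexity of the unit sublevel set and then exploiting the sign pattern of $\signal{M}^{-1}$ through a geometric-mean test vector. Write $\phi(\signal{x}):=\rho(\diagvec{x}\signal{M})$ for $\signal{x}\in\real_+^N$ (identifying $\signal{D}=\diagvec{x}$); this is positively homogeneous of degree one because $\rho$ is. It therefore suffices to show that the set $X:=\{\signal{x}\in\real_+^N \mid \phi(\signal{x})\le 1\}$ is convex: since $\signal{0}\in X$ and $\phi$ is the gauge (Minkowski functional) of $X$, convexity of $X$ forces $\phi$ to be sublinear, hence convex on $\real_+^N$. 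I would first establish the claim on the open cone $\real_{++}^N$ for irreducible $\signal{M}$ (so that all Perron eigenvectors are strictly positive), and then recover the statement on $\mathcal{D}_+^{N\times N}$ by continuity of the spectral radius, exactly as continuity is invoked in Proposition~\ref{proposition.qnorm}(iv) and in the extension argument of Proposition~\ref{prop.norm}.

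The first substantive step is a Collatz--Wielandt description of $X$. By Fact~\ref{fact.zm}, the nonnegative invertible inverse Z-matrix $\signal{M}$ has an inverse $\signal{B}:=\signal{M}^{-1}$ that is an M-matrix; in particular $\signal{B}$ is a Z-matrix with positive diagonal $B_{ii}>0$ and nonpositive off-diagonal entries $B_{ij}\le 0$ ($i\ne j$), while $\signal{B}^{-1}=\signal{M}\ge\signal{0}$. I claim that for $\signal{x}\in\real_{++}^N$ one has $\phi(\signal{x})\le 1$ if and only if there exists $\signal{w}\in\real_{++}^N$ with $\diagvec{x}\signal{w}\le \signal{Bw}$. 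For the ``only if'' direction, take the positive Perron eigenvector $\signal{v}$ of $\diagvec{x}\signal{M}$ and set $\signal{w}=\signal{Mv}>\signal{0}$; then $\diagvec{x}\signal{w}=\rho(\diagvec{x}\signal{M})\,\signal{v}\le\signal{v}=\signal{Bw}$. The ``if'' direction follows by applying $\signal{M}\ge\signal{0}$ to $\diagvec{x}\signal{w}\le\signal{Bw}$ to obtain $\signal{M}\diagvec{x}\signal{w}\le\signal{w}$, whence $\rho(\signal{M}\diagvec{x})=\rho(\diagvec{x}\signal{M})\le 1$ by Collatz--Wielandt for the nonnegative matrix $\signal{M}\diagvec{x}$.

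The core of the argument is the convexity of $X$, which is where the Z-matrix sign pattern enters. Given $\signal{x},\signal{x}'\in X$ with positive witnesses $\signal{w},\signal{w}'$, I would test the midpoint with the coordinatewise geometric mean $w''_i:=\sqrt{w_iw'_i}$. Rewriting the witness inequalities as $x_i\le B_{ii}-\sum_{j\ne i}|B_{ij}|\,(w_j/w_i)$ and $x'_i\le B_{ii}-\sum_{j\ne i}|B_{ij}|\,(w'_j/w'_i)$, averaging them, and applying the scalar inequality $\tfrac12(w_j/w_i+w'_j/w'_i)\ge\sqrt{(w_j/w_i)(w'_j/w'_i)}=w''_j/w''_i$ gives
\[
\tfrac12(x_i+x'_i)\ \le\ B_{ii}-\sum_{j\ne i}|B_{ij}|\,\frac{w''_j}{w''_i}\ =\ \frac{(\signal{Bw}'')_i}{w''_i},
\]
where the direction of the AM--GM estimate is preserved precisely because the off-diagonal coefficients $|B_{ij}|$ are \emph{subtracted}. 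Hence $\diagvec{(\signal{x}+\signal{x}')/2}\,\signal{w}''\le\signal{Bw}''$, so the midpoint lies in $X$; together with closedness (continuity of $\phi$) this yields convexity of $X$, and therefore of $\phi$.

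The main obstacle is isolating and proving the convexity of $X$ in the preceding paragraph: this is the only step that genuinely uses the inverse Z-matrix hypothesis, and its nonobvious ingredient is the geometric-mean witness, which turns the problem into a coordinatewise AM--GM estimate whose sign is fixed by the nonpositivity of the off-diagonal entries of $\signal{M}^{-1}$. The remaining difficulties are purely technical: guaranteeing strictly positive witnesses (handled by restricting to irreducible $\signal{M}$ and the open cone via Perron--Frobenius) and passing from the open cone and irreducible matrices to all of $\mathcal{D}_+^{N\times N}$ by continuity of the spectral radius.
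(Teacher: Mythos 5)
The paper does not prove this statement at all: Fact~\ref{fact.friedland} is imported verbatim from \cite[Theorem~4.3]{friedland81} and used as a black box, so there is no in-paper proof to compare against. Judged on its own terms, your argument is essentially correct and is, as far as I can tell, a faithful reconstruction of the classical route to this result. The three ingredients all check out: (a) since $\phi(\signal{x}):=\rho(\diagvec{x}\signal{M})$ is positively homogeneous and nonnegative, it coincides with the Minkowski gauge of its unit sublevel set $X$, so convexity of $X$ does imply convexity of $\phi$; (b) the witness characterization $\phi(\signal{x})\le 1\Leftrightarrow(\exists\signal{w}>\signal{0})\ \diagvec{x}\signal{w}\le\signal{M}^{-1}\signal{w}$ is correct in both directions (your choice $\signal{w}=\signal{Mv}$ with $\signal{v}$ the Perron vector, and Collatz--Wielandt applied to $\signal{M}\diagvec{x}$ together with $\rho(\signal{M}\diagvec{x})=\rho(\diagvec{x}\signal{M})$); and (c) the geometric-mean midpoint test is exactly where the sign pattern of the Z-matrix $\signal{M}^{-1}$ is consumed, since AM--GM goes the right way only because the off-diagonal coefficients enter with a minus sign. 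The passage from $\real_{++}^N$ to $\real_+^N$ is the same closure argument the paper itself uses in Proposition~\ref{prop.norm}.

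The one step you should not wave off as ``purely technical'' without a sentence of justification is the reduction from a general inverse Z-matrix to an irreducible one, because your only-if direction genuinely needs a strictly positive Perron vector. The fix is standard but deserves to be stated: writing $\signal{B}=\signal{M}^{-1}=\alpha\signal{I}-\signal{P}$ with $\signal{P}\ge\signal{0}$ and $\alpha>\rho(\signal{P})$ (Fact~\ref{fact.zm}), the perturbation $\signal{B}_\epsilon:=\signal{B}-\epsilon(\signal{J}-\signal{I})$, with $\signal{J}$ the all-ones matrix, remains a nonsingular M-matrix for small $\epsilon>0$ by continuity of $\rho$, has strictly negative off-diagonal entries, and hence has an entrywise positive (so irreducible) inverse $\signal{M}_\epsilon\to\signal{M}$; convexity of $\signal{x}\mapsto\rho(\diagvec{x}\signal{M}_\epsilon)$ then passes to the pointwise limit. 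With that sentence added, the proof is complete and self-contained, which is arguably more than the paper offers for this fact.
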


\begin{fact}
	\label{fact.cond_eig} \cite{nuzman07}
	Let $\|\cdot\|$ be a monotone norm. Assume that $T:\real_{+}^N\to\real_{++}^N$ is a standard interference mapping. Then each of the following holds:
\begin{itemize}
	\item[(i)] There exists a unique solution $(\signal{x}^\star, \lambda^\star)\in\real_{++}^N\times\real_{++}$ to the conditional eigenvalue problem
	\begin{problem}
		\label{problem.cond_eig}
		Find $(\signal{x}, \lambda)\in\real_{+}^N\times\real_{+}$ such that $T(\signal{x})=\lambda\signal{x}$ and $\|\signal{x}\|=1$.
	\end{problem}

	\item[(ii)] The sequence $(\signal{x}_n)_{n\in\Natural}\subset\real_{+}^N$ generated via 
	\begin{align}
		\label{eq.krause_iter}
		\signal{x}_{n+1} = T^\prime({\signal{x}_n}):=\dfrac{1}{\|T(\signal{x}_n)\|}T(\signal{x}_n),\quad\signal{x}_1\in\real_{++}^N,
	\end{align}
	converges to the uniquely existing vector $\signal{x}^\star\in\mathrm{Fix}(T^\prime):=\{\signal{x}\in\real_{+}^N~|~\signal{x}=T^\prime(\signal{x})\}$, which is also the vector $\signal{x}^\star$ of the tuple $(\signal{x}^\star,\lambda^\star)$ that solves Problem~\ref{problem.cond_eig}. Furthermore, the sequence $(\lambda_n:=\|T(\signal{x}_n)\|)_{n\in\Natural}\subset\real_{++}$ converges to $\lambda^\star$.
\end{itemize}
\end{fact}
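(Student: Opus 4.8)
The plan is to recast Problem~\ref{problem.cond_eig} as a fixed-point problem for the normalized mapping $T'$ in \refeq{eq.krause_iter} and to treat existence, uniqueness, and convergence in turn. First observe that $(\signal{x},\lambda)\in\real_{++}^N\times\real_{++}$ solves Problem~\ref{problem.cond_eig} if and only if $\|\signal{x}\|=1$ and $\signal{x}=T'(\signal{x})$, in which case necessarily $\lambda=\|T(\signal{x})\|$ (apply $\|\cdot\|$ to $T(\signal{x})=\lambda\signal{x}$ and use $\|\signal{x}\|=1$). Hence it suffices to study the self-map $T'$ on the compact set $P:=\{\signal{x}\in\real_+^N\mid \|\signal{x}\|=1\}$. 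For existence I would note that $P$ is homeomorphic, via radial projection, to the standard simplex $\Delta:=\{\signal{x}\in\real_+^N\mid \sum_n x_n=1\}$, hence topologically a compact convex set; since $T$ is continuous and positive, $T'$ is a continuous self-map of $P$, so Brouwer's fixed-point theorem yields a fixed point $\signal{x}^\star\in P$. Positivity of $T$ forces $\signal{x}^\star=T(\signal{x}^\star)/\|T(\signal{x}^\star)\|>\signal{0}$, so $\signal{x}^\star\in\real_{++}^N$ and $(\signal{x}^\star,\lambda^\star)$ with $\lambda^\star:=\|T(\signal{x}^\star)\|>0$ solves Problem~\ref{problem.cond_eig}.

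Uniqueness is the order-theoretic core. Given two solutions $(\signal{x},\lambda)$ and $(\signal{y},\mu)$ with $\signal{x},\signal{y}\in\real_{++}^N$ and unit norm, introduce the part functional $M(\signal{x}/\signal{y}):=\inf\{\tau>0\mid \signal{x}\le\tau\signal{y}\}=\max_n (x_n/y_n)$. If $\signal{x}\neq\signal{y}$, at least one of $M(\signal{x}/\signal{y})$, $M(\signal{y}/\signal{x})$ exceeds $1$ (otherwise $\signal{x}\le\signal{y}\le\signal{x}$), so WLOG $t:=M(\signal{x}/\signal{y})>1$. Monotonicity and scalability then give $\lambda\signal{x}=T(\signal{x})\le T(t\signal{y})<t\,T(\signal{y})=t\mu\signal{y}$, whence $\signal{x}<(t\mu/\lambda)\signal{y}$; evaluating at a coordinate attaining the maximum in $t$ forces $\lambda<\mu$. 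Now if also $M(\signal{y}/\signal{x})>1$, the symmetric argument yields $\mu<\lambda$, a contradiction; otherwise $\signal{y}\le\signal{x}$, and combining $\mu\signal{y}=T(\signal{y})\le T(\signal{x})=\lambda\signal{x}$ with $\lambda/\mu<1$ gives $\signal{y}\le(\lambda/\mu)\signal{x}\le(1-\delta)\signal{x}$ for some $\delta>0$ (using $\signal{x}>\signal{0}$ and finite dimension), so $1=\|\signal{y}\|\le(1-\delta)\|\signal{x}\|=1-\delta$, again a contradiction. Hence $\signal{x}=\signal{y}$, and then $\lambda=\|T(\signal{x})\|=\mu$.

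For the convergence claim in part~(ii), the plan is to equip the cone $\real_{++}^N$ with the Thompson metric $d(\signal{x},\signal{y}):=\log\max\{M(\signal{x}/\signal{y}),M(\signal{y}/\signal{x})\}$, under which monotonicity together with subhomogeneity (the nonstrict form of scalability, $T(t\signal{x})\le t\,T(\signal{x})$ for $t\ge 1$) makes $T$ nonexpansive; tracking the ratios $M(\signal{x}_n/\signal{x}^\star)$ and $M(\signal{x}^\star/\signal{x}_n)$ along the normalized orbit, one shows they are monotone and bounded, hence convergent, and that the orbit $(\signal{x}_n)_{n\in\Natural}$ stays in the compact set $P\cap\real_{++}^N$. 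Passing to a convergent subsequence, its limit is a fixed point of the continuous map $T'$ and therefore equals the unique $\signal{x}^\star$; continuity of $T$ and $\|\cdot\|$ then gives $\lambda_n=\|T(\signal{x}_n)\|\to\|T(\signal{x}^\star)\|=\lambda^\star$. The main obstacle is precisely this promotion from subsequential to full convergence: strict scalability delivers only nonexpansiveness, not a uniform contraction factor, and a nonexpansive map with a unique fixed point need not have convergent orbits in general (think of rotations). Closing this gap is the delicate step and requires exploiting the \emph{strictness} in scalability---via the convergence theory of order-preserving, subhomogeneous (concave-type) maps on cones in the spirit of Krause and Lemmens--Nussbaum---to rule out residual nonexpansive phenomena such as periodic orbits on the compact sphere, and it is for this reason that the result, though routine in its existence and uniqueness parts, is nontrivial overall.
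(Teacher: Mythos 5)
The paper does not prove this statement at all: it is imported as a Fact from \cite{nuzman07}, so the only comparison available is with that reference. Your part (i) is complete and correct: existence via Brouwer on the normed section of the cone (homeomorphic to the simplex by radial projection), positivity of the fixed point from positivity of $T$, and uniqueness via the part functional $M(\signal{x}/\signal{y}):=\max_n x_n/y_n$ combined with monotonicity, strict scalability, and monotonicity of the norm. This is essentially the classical Krause/Nuzman argument and needs no repair.

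Part (ii) is where the genuine gap lies, and you have flagged it yourself. Two concrete problems. First, the quantities you propose to track, $M(\signal{x}_n/\signal{x}^\star)$ and $M(\signal{x}^\star/\signal{x}_n)$, are \emph{not} individually monotone along the normalized orbit: writing $u_n:=M(\signal{x}_n/\signal{x}^\star)$, $l_n:=M(\signal{x}^\star/\signal{x}_n)$ and $\lambda_n:=\|T(\signal{x}_n)\|$, monotonicity and scalability only yield $u_{n+1}\le u_n\lambda^\star/\lambda_n$ and $l_{n+1}\le l_n\lambda_n/\lambda^\star$, so it is the \emph{product} $u_n l_n$ (the exponential of the Hilbert projective distance to $\signal{x}^\star$, which is insensitive to the normalization) that is nonincreasing, with strict decrease whenever $\signal{x}_n\neq\signal{x}^\star$ because then $u_n>1$ or $l_n>1$ and scalability bites strictly. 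Second, and more importantly, you stop exactly at the step that makes the theorem nontrivial: promoting subsequential to full convergence. This can be closed without invoking the general Lemmens--Nussbaum machinery: since $l_n\le u_nl_n\le u_1l_1$, the orbit stays in a compact subset of your set $P$ bounded away from the boundary of the cone; any subsequential limit $\signal{z}$ satisfies $M(\signal{z}/\signal{x}^\star)M(\signal{x}^\star/\signal{z})=\lim_n u_nl_n=M(T^\prime(\signal{z})/\signal{x}^\star)M(\signal{x}^\star/T^\prime(\signal{z}))$ by continuity of $T^\prime$ and convergence of the full sequence $(u_nl_n)_{n\in\Natural}$, which contradicts the strict decrease under one application of $T^\prime$ unless $\signal{z}=\signal{x}^\star$; uniqueness of the subsequential limit plus compactness then gives convergence of $(\signal{x}_n)_{n\in\Natural}$, and $\lambda_n\to\lambda^\star$ follows by continuity of $T$ and $\|\cdot\|$. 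As written, your proposal asserts the conclusion of this argument but does not supply it, so part (ii) remains unproved.
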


The next result is immediate from \cite[Lemmas 3.2 and 3.3]{nussbaum1986convexity} and the definition of the nonlinear spectral radius.

\begin{fact}
	\label{fact.ineq_spec}
	Let $G:\real_+^N\to\real_+^N$ be a general interference mapping. Then each of the following holds:
	\item[(i)] if $(\signal{x},\lambda)\in\real_{++}^N\times\real_{++}$ and $G(\signal{x})\le \lambda \signal{x}$, then $\rho(G)\le\lambda$; 
	\item[(ii)] if $(\signal{x},\lambda)\in\real_{++}^N\times\real_{++}$ and $G(\signal{x}) = \lambda \signal{x}$, then $\rho(G) = \lambda$; and
	\item[(iii)] if $(\signal{x},\lambda)\in\real_{+}^N\backslash\{\signal{0}\}\times\real_{+}$ and $G(\signal{x}) \ge \lambda \signal{x}$ then $\rho(G)\ge\lambda$.
\end{fact}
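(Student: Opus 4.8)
The plan is to treat parts (i) and (ii) by elementary means, reducing both to the attainment of the spectral radius guaranteed by the NOTE in Definition~\ref{def.nl_radius}, and to isolate part (iii) as the genuinely delicate Collatz--Wielandt lower bound. Throughout I would use only that each coordinate of $G$ is monotone and positively homogeneous, together with the observation that the iterates $G^k$ inherit both properties (a composition of monotone maps is monotone, and $G(G(\alpha\signal{x}))=\alpha\,G(G(\signal{x}))$).

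For part (i), let $\signal{x}^\star\in\real_+^N\backslash\{\signal{0}\}$ be an eigenvector attaining the spectral radius, so that $G(\signal{x}^\star)=\rho(G)\signal{x}^\star$, which exists by Definition~\ref{def.nl_radius}. First I would iterate the hypothesis $G(\signal{x})\le\lambda\signal{x}$: monotonicity followed by homogeneity gives $G^2(\signal{x})\le G(\lambda\signal{x})=\lambda G(\signal{x})\le\lambda^2\signal{x}$, and by induction $G^k(\signal{x})\le\lambda^k\signal{x}$ for every $k\in\Natural$. Since $\signal{x}\in\real_{++}^N$ is strictly positive, there is $t>0$ with $\signal{x}^\star\le t\signal{x}$ (take $t=\max_n x^\star_n/x_n$). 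Applying monotonicity and homogeneity of $G^k$ then yields $\rho(G)^k\signal{x}^\star=G^k(\signal{x}^\star)\le G^k(t\signal{x})=t\,G^k(\signal{x})\le t\lambda^k\signal{x}$, that is, $(\rho(G)/\lambda)^k\signal{x}^\star\le t\signal{x}$ for all $k$. Choosing any coordinate where $\signal{x}^\star$ is positive, the left-hand side would diverge if $\rho(G)>\lambda$ while the right-hand side stays fixed; this contradiction proves $\rho(G)\le\lambda$. Part (ii) is then immediate: if $G(\signal{x})=\lambda\signal{x}$ with $(\signal{x},\lambda)\in\real_{++}^N\times\real_{++}$, the pair $(\signal{x},\lambda)$ is an eigenpair, so $\rho(G)\ge\lambda$ directly from the supremum in \refeq{eq.nl_radius}, while $G(\signal{x})\le\lambda\signal{x}$ gives $\rho(G)\le\lambda$ by part (i); hence $\rho(G)=\lambda$.

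Part (iii) is where the main obstacle lies, because $\signal{x}$ is only assumed nonnegative and nonzero, so the domination $\signal{x}^\star\le t\signal{x}$ used in (i) is no longer available. The case $\lambda=0$ is trivial since $\rho(G)\ge 0$. For $\lambda>0$ I would iterate the hypothesis to $G^k(\signal{x})\ge\lambda^k\signal{x}$ exactly as above; fixing any monotone norm $\|\cdot\|$ this gives $\|G^k(\signal{x})\|^{1/k}\ge\lambda\|\signal{x}\|^{1/k}\to\lambda$, so $\liminf_k\|G^k(\signal{x})\|^{1/k}\ge\lambda$. It then suffices to establish the Gelfand-type upper bound $\limsup_k\|G^k(\signal{x})\|^{1/k}\le\rho(G)$, after which $\lambda\le\rho(G)$ follows at once.

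The genuine difficulty sits precisely in that upper bound. If the eigenvector $\signal{x}^\star$ attaining $\rho(G)$ happened to be strictly positive, it is easy: every $\signal{x}\ge\signal{0}$ satisfies $\signal{x}\le s\signal{x}^\star$ for some $s>0$, whence $G^k(\signal{x})\le s\rho(G)^k\signal{x}^\star$ and the $k$th root tends to $\rho(G)$. The obstruction is the reducible case, where $\signal{x}^\star$ has zero coordinates and this domination fails, so neither the argument of (i) nor its mirror image applies directly; one must restrict $G$ to the support of a limiting eigenvector or pass through a strictly positive perturbation and then take a limit, and it is exactly this boundary behaviour that separates (iii) from (i)--(ii). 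Resolving it cleanly is the content of \cite[Lemmas~3.2 and 3.3]{nussbaum1986convexity}, whose Collatz--Wielandt characterization of $\rho$ forces $\rho(G)\ge\min_{i:\,x_i>0}G(\signal{x})_i/x_i\ge\lambda$; accordingly I would invoke those lemmas to conclude $\rho(G)\ge\lambda$ rather than reproduce the boundary analysis in full.
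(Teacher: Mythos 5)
Your proposal is correct. Note, however, that the paper does not actually prove this Fact: it is listed in the Supplemental Material as a known result, declared ``immediate from [Nussbaum, Lemmas~3.2 and~3.3] and the definition of the nonlinear spectral radius,'' so all three parts are delegated to the cited lemmas. You instead give self-contained elementary arguments for (i) and (ii) -- iterating $G^k(\signal{x})\le\lambda^k\signal{x}$ via monotonicity and positive homogeneity, dominating the eigenvector $\signal{x}^\star$ attaining $\rho(G)$ (whose existence the NOTE in Definition~\ref{def.nl_radius} supplies) by $t\signal{x}$ with $t=\max_n x_n^\star/x_n$, and extracting a contradiction from $(\rho(G)/\lambda)^k x_i^\star\le t x_i$ -- and you correctly observe that (ii) then follows from (i) plus the supremum in \refeq{eq.nl_radius}. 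These arguments are sound (the only case needing separate mention, $\rho(G)=0$ in (i), is trivial and outside the contradiction branch you run). For (iii) you accurately diagnose that the strict-positivity of $\signal{x}$ is what makes (i) elementary and that its absence in (iii) -- the reducible case where the extremal eigenvector may have zero coordinates -- is the genuine obstruction, and you then fall back on the same Nussbaum lemmas the paper cites. The net effect is that your route isolates precisely which part of the Fact requires the Collatz--Wielandt machinery and which parts are elementary consequences of monotonicity and homogeneity; the paper's route is shorter but opaque on this point. Both are valid.
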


\begin{fact}
	\label{fact.continuity} \cite[Theorem~3.1(2)]{nussbaum1986convexity} Fix a norm $\|\cdot\|$ in $\real^N$, and define $\mathcal{P}:=\{\signal{x}\in\real_+^N\mid \|\signal{x}\|=1\}$. For a sequence $(G_n:\real_+^N\to\real_+^N)_{n\in\Natural}$ of general interference mappings, assume that there exists a general interference mapping $G:\real_+^N\to\real_+^N$ such that $\lim_{n\to\infty}\sup_{\signal{x}\in\mathcal{P}}\|G_n(\signal{x})-G(\signal{x})\|=0$. Then $\lim_{n\to\infty}\rho(G_n)=\rho(G)$.
\end{fact}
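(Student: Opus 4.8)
The plan is to prove the two one-sided bounds $\limsup_{n\to\infty}\rho(G_n)\le\rho(G)$ and $\liminf_{n\to\infty}\rho(G_n)\ge\rho(G)$ separately, after first checking that $(\rho(G_n))_{n\in\Natural}$ is bounded. Writing $\eta_n:=\sup_{\signal{x}\in\mathcal{P}}\|G_n(\signal{x})-G(\signal{x})\|\to 0$ and $M:=\sup_{\signal{x}\in\mathcal{P}}\|G(\signal{x})\|<\infty$ (finite by continuity of $G$ and compactness of $\mathcal{P}$), positive homogeneity gives $\|G_n(\signal{x})\|\le(M+\eta_n)\|\signal{x}\|$ for every $\signal{x}\in\real_+^N$. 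Evaluating this at a normalized eigenvector $\signal{x}_n\in\mathcal{P}$ of $G_n$ (which exists by the NOTE following Definition~\ref{def.nl_radius}) yields $\rho(G_n)=\|\rho(G_n)\signal{x}_n\|=\|G_n(\signal{x}_n)\|\le M+\eta_n$, so the sequence is bounded.

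Upper semicontinuity is the clean direction. Along a subsequence realizing the $\limsup$, compactness of $\mathcal{P}$ lets me also assume $\signal{x}_n\to\bar{\signal{x}}\in\mathcal{P}$ (hence $\bar{\signal{x}}\ne\signal{0}$) and $\rho(G_n)\to\bar\rho:=\limsup_n\rho(G_n)$. I would then pass to the limit in $G_n(\signal{x}_n)=\rho(G_n)\signal{x}_n$ using the split $\|G(\bar{\signal{x}})-\rho(G_n)\signal{x}_n\|\le\|G(\bar{\signal{x}})-G(\signal{x}_n)\|+\|G(\signal{x}_n)-G_n(\signal{x}_n)\|$, whose first term vanishes by continuity of $G$ and whose second term is at most $\eta_n\to0$. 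Since $\rho(G_n)\signal{x}_n\to\bar\rho\,\bar{\signal{x}}$, this forces $G(\bar{\signal{x}})=\bar\rho\,\bar{\signal{x}}$, so $\bar\rho$ is an eigenvalue of $G$ and therefore $\bar\rho\le\rho(G)$ by Definition~\ref{def.nl_radius}.

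For the reverse bound I would use a generalized Collatz--Wielandt inequality: for any strictly positive test vector $\signal{z}\in\real_{++}^N$, the pointwise bound $G_n(\signal{z})\ge(\min_i[G_n(\signal{z})]_i/z_i)\,\signal{z}$ together with Fact~\ref{fact.ineq_spec}(iii) gives $\rho(G_n)\ge\min_i[G_n(\signal{z})]_i/z_i$, and $G_n(\signal{z})\to G(\signal{z})$ pointwise (again by homogeneity and uniform convergence on $\mathcal{P}$), so $\liminf_n\rho(G_n)\ge\min_i[G(\signal{z})]_i/z_i$. If the spectral eigenvector $\signal{x}^\star$ of $G$ were strictly positive, taking $\signal{z}=\signal{x}^\star$ would immediately yield $\liminf_n\rho(G_n)\ge\rho(G)$ and finish the argument.

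The main obstacle is exactly that $\signal{x}^\star$ need not be strictly positive: general interference functions may vanish, so $G$ may be reducible, and then the Collatz--Wielandt ratio over any strictly positive test vector can be dragged below $\rho(G)$ by the coordinates outside the support of $\signal{x}^\star$. I would attempt to resolve this by regularizing, setting $G_\varepsilon(\signal{x}):=G(\signal{x})+\varepsilon(\mathbf{1}^t\signal{x})\mathbf{1}$, which is a general interference mapping sending $\real_+^N\setminus\{\signal{0}\}$ into $\real_{++}^N$, so its spectral eigenvector $\signal{z}_\varepsilon$ is strictly positive; monotonicity of the spectral radius gives $\rho(G_\varepsilon)\ge\rho(G)$, while the already-proven upper semicontinuity applied to $G_\varepsilon\to G$ as $\varepsilon\downarrow0$ gives $\rho(G_\varepsilon)\downarrow\rho(G)$. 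The delicate point, which is the true technical heart of the statement, is controlling the Collatz--Wielandt error as $\signal{z}_\varepsilon$ drifts toward the boundary of the cone when $\varepsilon\downarrow0$, since a naive estimate of this error does not vanish. This is precisely where \cite[Theorem~3.1]{nussbaum1986convexity} relies on the Bonsall cone-spectral-radius characterization $\rho(G)=\lim_{k\to\infty}\big(\sup_{\signal{x}\in\mathcal{P}}\|G^k(\signal{x})\|\big)^{1/k}$, together with submultiplicativity of the functional $G\mapsto\sup_{\signal{x}\in\mathcal{P}}\|G(\signal{x})\|$ under composition, which behaves robustly under uniform perturbations on $\mathcal{P}$ and is consistent with the existence results of \cite{lem13} used elsewhere; I would invoke this characterization to close the lower bound.
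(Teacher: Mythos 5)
Your boundedness estimate and the upper bound $\limsup_{n\to\infty}\rho(G_n)\le\rho(G)$ are correct and complete: extracting a convergent subsequence of normalized eigenvectors and passing to the limit in $G_n(\signal{x}_n)=\rho(G_n)\signal{x}_n$ is exactly the right compactness argument, and every ingredient you invoke (attainment of the spectral radius, continuity of $G$) is available. Note also that the paper itself offers no proof to compare against -- this Fact is quoted from \cite[Theorem~3.1(2)]{nussbaum1986convexity} -- so your attempt must stand on its own, and it does not: the lower bound $\liminf_{n\to\infty}\rho(G_n)\ge\rho(G)$ is never established. You correctly diagnose why your Collatz--Wielandt argument over strictly positive test vectors fails when the eigenvector of $G$ has zero coordinates, and why the regularized eigenvectors $\signal{z}_\varepsilon$ drift to the boundary too fast, but your closing appeal to the Bonsall characterization does not repair this. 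The formula $\rho(G)=\lim_{k\to\infty}\bigl(\sup_{\signal{x}\in\mathcal{P}}\|G^k(\signal{x})\|\bigr)^{1/k}=\inf_{k}\bigl(\sup_{\signal{x}\in\mathcal{P}}\|G^k(\signal{x})\|\bigr)^{1/k}$ expresses $\rho$ as an infimum of functionals, each of which is continuous under uniform convergence on $\mathcal{P}$; an infimum of continuous functionals is \emph{upper} semicontinuous, so this route reproves the bound you already have and says nothing about the missing one. To extract lower semicontinuity from iterates you would need the convergence $\bigl(\sup_{\mathcal{P}}\|G_n^k\|\bigr)^{1/k}\to\rho(G_n)$ to be uniform in $n$, which uniform convergence of each fixed iterate does not supply.

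The irony is that the elementary fix was already in your toolkit: Fact~\ref{fact.ineq_spec}(iii) requires only $\signal{x}\in\real_+^N\backslash\{\signal{0}\}$, \emph{not} strict positivity. If $\rho(G)=0$ the lower bound is trivial, so assume $\rho(G)>0$ and let $\signal{u}\in\mathcal{P}$ satisfy $G(\signal{u})=\rho(G)\signal{u}$ (attainment, as in the NOTE after Definition~\ref{def.nl_radius}). Put $m:=\min\{u_i\mid u_i>0\}>0$ and let $c>0$ be a norm-equivalence constant with $(\forall \signal{v}\in\real^N)(\forall i)~|v_i|\le c\|\signal{v}\|$. Since $\signal{u}\in\mathcal{P}$, the hypothesis gives $\|G_n(\signal{u})-G(\signal{u})\|\le\eta_n$, hence $[G_n(\signal{u})]_i\ge\rho(G)u_i-c\eta_n\ge(\rho(G)-c\eta_n/m)\,u_i$ for every $i$ with $u_i>0$, while for $u_i=0$ the inequality $[G_n(\signal{u})]_i\ge 0=(\rho(G)-c\eta_n/m)\,u_i$ holds automatically. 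Thus $G_n(\signal{u})\ge(\rho(G)-c\eta_n/m)\,\signal{u}$ coordinatewise, and Fact~\ref{fact.ineq_spec}(iii) yields $\rho(G_n)\ge\rho(G)-c\eta_n/m$ for all $n$ large enough that this scalar is nonnegative, giving the lower bound immediately. The coordinates outside the support of $\signal{u}$, which dragged down your ratio $\min_i[G_n(\signal{z})]_i/z_i$ and defeated the $\varepsilon$-regularization, impose no constraint at all once the test vector is allowed to lie on the boundary of the cone -- the key point being that $m$ is a fixed positive constant determined by the \emph{limit} map $G$, independent of $n$.
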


\begin{fact}
	\label{fact.tinf}
	\cite[Proposition 4]{cavalcante2019} Let $T:\real_+^N\to\real_{++}^N$ be a standard interference mapping and $T_\infty:\real_+^N\to\real_{+}^N$ be its asymptotic mapping (see Definition~\ref{def.am}). Then $\mathrm{Fix}(T)\neq\emptyset$ if and only if the (nonlinear) spectral radius $\rho(T_\infty)$ of $T_\infty$ satisfies $\rho(T_\infty)<1$.
\end{fact}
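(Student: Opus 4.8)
The plan is to prove the two implications separately, using only the defining properties of $T$ (monotonicity and scalability) together with the order comparison $T_\infty(\signal{x})\le T(\signal{x})$ valid for every $\signal{x}\in\real_+^N$. This comparison holds because scalability makes $h\mapsto(1/h)T(h\signal{x})$ strictly decreasing for $h\ge 1$, so that $T_\infty(\signal{x})=\inf_{h\ge 1}(1/h)T(h\signal{x})$; in particular $T_\infty(\signal{x})\le(1/2)T(2\signal{x})<T(\signal{x})$. First I would dispatch the easy direction: if $\signal{x}^\star=T(\signal{x}^\star)$ for some fixed point, then $\signal{x}^\star\in\real_{++}^N$ since $T$ maps into $\real_{++}^N$, and the displayed strict inequality gives $T_\infty(\signal{x}^\star)\le(1/2)T(2\signal{x}^\star)<\signal{x}^\star$ componentwise. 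Setting $\lambda:=\max_{n}[T_\infty(\signal{x}^\star)]_n/x_n^\star$ yields $\lambda<1$ and $T_\infty(\signal{x}^\star)\le\lambda\signal{x}^\star$, so Fact~\ref{fact.ineq_spec}(i) delivers $\rho(T_\infty)\le\lambda<1$.

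For the converse I would assume $\rho(T_\infty)<1$ and reduce the whole statement to the following lemma: \emph{there exists $\signal{z}\in\real_{++}^N$ with $T_\infty(\signal{z})<\signal{z}$.} I would establish this by a perturbation argument. Define $G_\delta:\real_+^N\to\real_+^N:\signal{x}\mapsto T_\infty(\signal{x})+\delta(\signal{1}^t\signal{x})\signal{1}$, which is again a general interference mapping and which converges to $T_\infty$ uniformly on the unit sphere as $\delta\downarrow 0$. By continuity of the spectral radius (Fact~\ref{fact.continuity}) we get $\rho(G_\delta)\to\rho(T_\infty)<1$, hence $\rho(G_\delta)<1$ for all small $\delta$. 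Since $G_\delta$ maps $\real_+^N\backslash\{\signal{0}\}$ into $\real_{++}^N$, the eigenvector $\signal{z}$ attaining $\rho(G_\delta)$ (its existence is the note under Definition~\ref{def.nl_radius}) must be strictly positive, and then $T_\infty(\signal{z})\le G_\delta(\signal{z})=\rho(G_\delta)\signal{z}<\signal{z}$.

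With $\signal{z}>\signal{0}$ and $T_\infty(\signal{z})<\signal{z}$ secured, I would finish by a monotone iteration. Because $(1/t)T(t\signal{z})\to T_\infty(\signal{z})<\signal{z}$ componentwise, there is $t>0$ large enough that $(1/t)T(t\signal{z})\le\signal{z}$, i.e. $T(t\signal{z})\le t\signal{z}$. Starting the Picard iteration $\signal{y}_0=t\signal{z}$, $\signal{y}_{k+1}=T(\signal{y}_k)$, monotonicity of $T$ makes $(\signal{y}_k)$ nonincreasing, and it is bounded below by $\signal{0}$, hence convergent; continuity of $T$ then forces its limit to satisfy $T(\signal{y}^\star)=\signal{y}^\star$, so $\mathrm{Fix}(T)\neq\emptyset$.

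The main obstacle is the lemma, namely producing a \emph{strictly positive} subinvariant vector $\signal{z}$: the eigenvector of $T_\infty$ supplied by Definition~\ref{def.nl_radius} need not be strictly positive when no irreducibility is assumed, yet strict positivity is indispensable here, because $T(t\signal{z})\le t\signal{z}$ together with $T(t\signal{z})>\signal{0}$ forces $t\signal{z}>\signal{0}$. The perturbation-plus-continuity argument above is exactly what overcomes this gap; a more elementary alternative would analyze the boundedness of the iteration started from $\signal{0}$, but making that divergence-versus-boundedness dichotomy rigorous is noticeably more delicate.
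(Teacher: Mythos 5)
The paper itself gives no proof of this statement: it is imported verbatim into the Supplemental Material as a known result and attributed to \cite[Proposition 4]{renato2019}, so there is no in-paper argument to compare against. Your self-contained proof is correct. The order comparison $T_\infty(\signal{x})\le(1/2)T(2\signal{x})<T(\signal{x})$ does follow from scalability (which makes $h\mapsto(1/h)T(h\signal{x})$ strictly decreasing on $[1,\infty[$), and the forward implication then follows from Fact~\ref{fact.ineq_spec}(i); the only cosmetic caveat is that the $\lambda$ in that fact must be strictly positive, so in the degenerate case $T_\infty(\signal{x}^\star)=\signal{0}$ you should replace your $\lambda$ by, say, $1/2$ before invoking it. For the converse, you correctly isolate the genuine difficulty --- obtaining a \emph{strictly positive} vector $\signal{z}$ with $T_\infty(\signal{z})<\signal{z}$ when no irreducibility is assumed, since the eigenvector attaining $\rho(T_\infty)$ may lie on the boundary of the cone --- and your perturbation $G_\delta=T_\infty+\delta(\signal{1}^t\cdot)\signal{1}$ resolves it cleanly: $G_\delta$ is a general interference mapping converging to $T_\infty$ uniformly on the unit sphere, so Fact~\ref{fact.continuity} (applied along $\delta_k\downarrow 0$) yields $\rho(G_{\delta})<1$ for small $\delta$, and since $G_\delta$ maps $\real_+^N\backslash\{\signal{0}\}$ into $\real_{++}^N$, the eigenpair relation $G_\delta(\signal{z})=\rho(G_\delta)\signal{z}$ forces both $\rho(G_\delta)>0$ and $\signal{z}\in\real_{++}^N$. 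Passing from $T_\infty(\signal{z})<\signal{z}$ to a subinvariant point $T(t\signal{z})\le t\signal{z}$ and running the nonincreasing, bounded-below Picard iteration is the classical closing step, and the limit is a fixed point (automatically in $\real_{++}^N$ because $T$ maps into $\real_{++}^N$). In short: no gap; this is a valid proof of the cited fact, and arguably a useful addition given that the paper only references it.
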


\begin{fact}
	\label{fact.feasibility}
	\cite[Proposition 5]{cavalcante2019}~Let $T:\real^N_{+}\to\real_{++}^N$ be a standard interference mapping and $\|\cdot\|$ a monotone norm. Then $T$ has a fixed point $\signal{x}^\star$ satisfying $\|\signal{x}^\star\|\le 1$ if and only if the conditional eigenvalue $\lambda$ associated with $T$ and $\|\cdot\|$ satisfies $\lambda\le 1$.
\end{fact}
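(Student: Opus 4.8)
The plan is to prove the two implications separately, relying on the existence and uniqueness of the conditional eigenpair guaranteed by Fact~\ref{fact.cond_eig}(i). Let $(\signal{x}^\star,\lambda)\in\real_{++}^N\times\real_{++}$ denote the unique solution of $T(\signal{x}^\star)=\lambda\signal{x}^\star$ with $\|\signal{x}^\star\|=1$, so that $\lambda$ is exactly the conditional eigenvalue named in the statement. Throughout I would use three elementary consequences of $T$ being a standard interference mapping: monotonicity, the scalability-derived inequality $T(\alpha\signal{w})\le\alpha T(\signal{w})$ valid for every $\signal{w}\in\real_+^N$ and every $\alpha\ge 1$ (with equality at $\alpha=1$), and continuity on $\real_+^N$; I would also invoke monotonicity of the norm $\|\cdot\|$ repeatedly.

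For the direction $\lambda\le 1\Rightarrow$ existence of a feasible fixed point, I would split into two cases. If $\lambda=1$, then $\signal{x}^\star$ is itself a fixed point with $\|\signal{x}^\star\|=1\le 1$, and nothing further is needed. If $\lambda<1$, then $T(\signal{x}^\star)=\lambda\signal{x}^\star\le\signal{x}^\star$, so $\signal{x}^\star$ is feasible, and I would run the Picard iteration $\signal{x}_{k+1}=T(\signal{x}_k)$ from $\signal{x}_0=\signal{x}^\star$. Monotonicity propagates $T(\signal{x}_0)\le\signal{x}_0$ into a coordinatewise non-increasing sequence bounded below by $\signal{0}$, which therefore converges to some $\bar{\signal{x}}\in\real_+^N$. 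Continuity of $T$ yields $\bar{\signal{x}}=T(\bar{\signal{x}})$, and since $T$ maps into $\real_{++}^N$ the limit is strictly positive; finally $\bar{\signal{x}}\le\signal{x}^\star$ together with monotonicity of the norm gives $\|\bar{\signal{x}}\|\le\|\signal{x}^\star\|=1$.

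For the converse, suppose $T(\signal{x})=\signal{x}$ with $\|\signal{x}\|\le 1$; note $\signal{x}\in\real_{++}^N$ because $T$ is positive. I would analyze the normalized iteration $\signal{x}_{n+1}=T^\prime(\signal{x}_n)=T(\signal{x}_n)/\|T(\signal{x}_n)\|$ of Fact~\ref{fact.cond_eig}(ii) started at $\signal{x}_1=\signal{x}$, which converges with $\lambda_n:=\|T(\signal{x}_n)\|\to\lambda$. The crux is an induction showing $T(\signal{x}_n)\le\signal{x}_n$ for every $n$: the base case is the fixed-point identity $T(\signal{x}_1)=\signal{x}_1$; and assuming $T(\signal{x}_n)\le\signal{x}_n$ gives $\lambda_n=\|T(\signal{x}_n)\|\le\|\signal{x}_n\|\le 1$ (using $\|\signal{x}_n\|=1$ for $n\ge 2$ and $\|\signal{x}_1\|\le 1$ at the base), hence $1/\lambda_n\ge 1$, and then $T(\signal{x}_{n+1})=T\bigl((1/\lambda_n)\,T(\signal{x}_n)\bigr)\le(1/\lambda_n)\,T(T(\signal{x}_n))\le(1/\lambda_n)\,T(\signal{x}_n)=\signal{x}_{n+1}$, where the first inequality is the scalability bound (legitimate since $1/\lambda_n\ge 1$) and the second uses monotonicity applied to the inductive hypothesis. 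Passing to the limit in $\lambda_n\le 1$ then yields $\lambda\le 1$.

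The main obstacle is the converse direction: the sequence $(\lambda_n)$ is not obviously monotone, so one cannot simply read off $\lambda\le 1$ from the first step $\lambda_1=\|\signal{x}\|\le 1$. The induction invariant $T(\signal{x}_n)\le\signal{x}_n$ is what carries the information forward, and the delicate point is applying scalability and monotonicity in the correct order, exploiting that scalability is only an inequality and points in the useful direction precisely because $1/\lambda_n\ge 1$. Some care is also needed at the base index, where $\|\signal{x}_1\|<1$ is possible while $\|\signal{x}_n\|=1$ for $n\ge 2$, but this does not affect the conclusion since the bound $\lambda_n\le 1$ holds in either case.
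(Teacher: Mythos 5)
Your proof is correct. Note first that the paper itself does not prove this statement: it is imported verbatim as \cite[Proposition 5]{renato2019} in the Supplemental Material, so there is no in-paper argument to compare against; your proposal is a self-contained replacement. Both directions hold up under scrutiny. The forward direction is the classical Yates-style monotone argument: when $\lambda<1$ the conditional eigenvector satisfies $T(\signal{x}^\star)\le\signal{x}^\star$, the Picard iteration started there is coordinatewise non-increasing and bounded below, and continuity plus norm monotonicity deliver a fixed point dominated by $\signal{x}^\star$, hence of norm at most one. The converse is the genuinely nontrivial part, and your invariant $T(\signal{x}_n)\le\signal{x}_n$ along the normalized iteration of Fact~\ref{fact.cond_eig}(ii) is exactly the right device: it yields $\lambda_n=\|T(\signal{x}_n)\|\le\|\signal{x}_n\|\le 1$ at every step, the inductive step correctly chains the scalability bound $T(\alpha\signal{w})\le\alpha T(\signal{w})$ for $\alpha=1/\lambda_n\ge 1$ with monotonicity applied to the hypothesis, and the limit $\lambda=\lim_n\lambda_n\le 1$ follows. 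You are also right to flag that $(\lambda_n)$ need not be monotone, which is why the invariant, rather than the first iterate alone, carries the conclusion. A published proof of this result would more likely route through the asymptotic mapping and the nonlinear spectral radius (as the surrounding machinery of \cite{renato2019} suggests), so your argument is arguably more elementary, using only monotonicity, scalability, continuity, norm monotonicity, and the stated convergence of the normalized iteration.
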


\end{document}